\documentclass[aps,twocolumn,floatfix,superscriptaddress,longbibliography,normalem]{revtex4-2}

\usepackage{adjustbox}
\usepackage{amssymb,amsmath,amstext,amsthm}
\DeclareFontFamily{U}{mathx}{}
\DeclareFontShape{U}{mathx}{m}{n}{<-> mathx10}{}
\DeclareSymbolFont{mathx}{U}{mathx}{m}{n}
\DeclareMathAccent{\widecheck}{0}{mathx}{"71}
\usepackage{dsfont}
\usepackage{graphicx}
\usepackage{bm}
\usepackage{algorithm,algpseudocode}
\usepackage[T1]{fontenc}
\usepackage{physics}
\usepackage{xcolor}
\usepackage[colorlinks=true,citecolor=blue,linkcolor=magenta]{hyperref}
\usepackage{tikz-network}
\usepackage{orcidlink}
\usepackage{float}
\definecolor{claudeorange}{HTML}{9E4228}

\newtheorem{definition}{Definition}

\newtheorem{informalalg}[algorithm]{Algorithm}

\theoremstyle{plain}
\newtheorem{proposition}{Proposition}

\newtheorem{theorem}[proposition]{Theorem}
\newtheorem{corollary}[proposition]{Corollary}
\theoremstyle{definition}
\theoremstyle{remark}

\newcommand{\quantummotionaus}{Quantum Motion, 4 Cornwallis St, Eveleigh NSW 2015, Australia}
\newcommand{\quantummotionuk}{Quantum Motion, 9 Sterling Way, London N7 9HJ, United Kingdom}
\newcommand{\unimelb}{School of Physics, The University of Melbourne, Parkville, VIC 3010, Australia}
\newcommand{\oxfordmath}{Mathematical Institute, University of Oxford, Woodstock Road, Oxford OX2 6GG, United Kingdom}

\begin{document}
\title{Breaking the chain: geometry-native state preparation with ASPIRE}

\author{Fredrik~Hasselgren\,\orcidlink{0009-0007-9273-4480}}
\affiliation{\quantummotionuk}
\affiliation{\oxfordmath}
\author{Matthew~L.~Sims-Goh\,\orcidlink{0000-0002-7478-4026}}
\email{matt.sims-goh@quantummotion.tech}
\affiliation{\quantummotionaus}
\affiliation{\unimelb}

\begin{abstract}
Loading data into quantum computers is a key bottleneck that threatens end-to-end quantum advantage. Quantum-circuit compilers for structured states like matrix product states (MPS) are a critical enabler to manage this in quantum chemistry, finance, and many-body quantum simulation. Approximate preparation of MPS typically relies on layered sequential staircases of nearest-neighbour gates. We present Adaptive State Preparation by Iterative Removal of Entanglement (ASPIRE), an algorithm that approximately compiles state-preparation circuits by selecting gates based on the state's underlying entanglement geometry and the available connectivity of the target quantum processor. Our protocol scores candidate qubit pairs on the removability of their entanglement and identifies layers of parallelised long-ranged gates that remove the most entanglement. We demonstrate our protocol empirically across a range of settings, illustrating its success for preparing Hamiltonian ground states and multivariate amplitude-encoded functions, and accounting for qubit connectivity, hardware noise, and both NISQ and early fault-tolerant settings. Throughout, ASPIRE typically achieves higher fidelities and shallower circuits than existing methods for states with long-ranged entanglement, even when classical optimisation is accounted for. Our investigation positions ASPIRE as a competitive, resource-frugal state-preparation protocol, with particular promise for applications tolerant to imperfect fidelity such as preparation of guide states for quantum phase estimation.
\end{abstract}

\maketitle

\section{Introduction\label{sec:introduction}}

Quantum computing promises substantial advantage on a range of high-impact problems, but end-to-end quantum advantage is easily thwarted by state preparation bottlenecks.
For instance, studying Hamiltonian ground states with quantum phase estimation (QPE) requires high-overlap guide states~\cite{tubman2018postponing, lee2023evaluating, ollitrault2024enhancing, marti2025spin, fomichev2024initial, berry2025rapid, erle2026shadow}, solving exponentially-large linear systems demands efficient loading of the input vector~\cite{aaronson2015read}, and pricing financial derivatives with quantum amplitude estimation requires one to load the underlying probability distribution~\cite{woerner2019quantum, stamatopoulos2020option}.
Because specifying a generic quantum state requires exponentially many independent amplitudes, loading arbitrary quantum states is exponentially costly~\cite{shende2005synthesis}, which threatens to eliminate end-to-end asymptotic quantum advantage regardless of the speed-up provided by the quantum algorithm itself.

Efficient quantum state preparation must therefore exploit some underlying structure constraining the amplitudes by an asymptotically smaller number of free parameters, often connecting to some notion of classical simulability~\cite{grover2002creating,kitaev2008wavefunction,kivlichan2018quantum,zoufal2019quantum,ran2020encoding,holmes2020efficient,rattew2021efficient,rattew2022preparing,markov2022generalized,mcardle2022quantum,moosa2023linear,marin2023quantum,rudolph2023decomposition,mori2024efficient,gonzalez2024efficient,fomichev2024initial,iaconis2024quantum,rosenkranz2025quantum,goh2025lie,berry2025rapid,sugawara2025embedding,huggins2025efficient,rupprecht2026sparse,bohun2026entanglement,barligea2026enabling}.
One particularly promising such avenue is the use of matrix product states (MPS) for state preparation.
These states can efficiently represent quantum many-body systems with area-law entanglement \cite{eisert2010colloquium}, and have further been applied to a myriad of other applications like molecular ground states~\cite{chan2011density,baiardi2020density,menczer2024tensor,menczer2024parallel,xiang2024distributed,legeza2025orbital}, 2D condensed-matter systems \cite{white2007neel,yan2011spin,stoudenmire2012studying,zheng2017stripe}, multivariate functions \cite{oseledets2010approximation,khoromskij2011quantics,garcia2021quantum,ye2022quantum,gourianov2022quantum,shinaoka2023multiscale,ritter2024quantics,nunez2025learning,bohun2026entanglement,sims2026multivariate,lee2026quantum,connor2026tensor,niedermeier2026solving}, loading genomic data into quantum processors \cite{creevey2025scalable}, and the preparation of guide states for ground state computations with QPE~\cite{fomichev2024initial, berry2025rapid}.

Crucially, bringing an MPS into canonical form naturally implies a physical state preparation scheme, since each local tensor is an isometry that directly maps virtual `bond' degrees of freedom onto physical sites \cite{schoen2005sequential}.
Each isometry is physically realisable by unitary operations (i.e. equivalent to a unitary where only certain columns need act faithfully), enabling one to compile a state-preparation circuit for an MPS directly from its canonical form.
A diverse array of techniques has emerged for this MPS-to-circuit compilation, beginning with the sequential `staircase' of Sch\"{o}n \emph{et al.}~\cite{schoen2005sequential}, and later growing to include direct synthesis of isometries~\cite{iten2016quantum,iten2019introduction,malvetti2021quantum}, QROM-based approaches for fault-tolerant systems~\cite{fomichev2024initial,berry2025rapid,rupprecht2026sparse}, logarithmic-depth renormalisation-group-based constructions \cite{malz2024preparation,scheer2025renormalization,murota2026exact,mingare2026practical}, and methods exploiting mid-circuit measurement and feed-forward adaptive circuits~\cite{smith2024constant,zhang2024characterizing,malz2024preparation,stephen2025preparing}, each with comparative advantages and useful regimes.

However, these algorithms typically aim for exact preparation of the MPS. For most practical state-preparation applications, the underlying MPS of interest is already approximate, such as an MPS approximation to a Hamiltonian ground state obtained from the density matrix renormalisation group (DMRG) \cite{white1992density,white1993density,schollwock2011density} or a multivariate function approximated with tensor cross-interpolation (TCI)~\cite{ritter2024quantics,nunez2025learning,tindall2024compressing,sims2026multivariate}, both of which are limited by controllable truncation errors. Furthermore, many algorithms tolerate an imperfect initial state gracefully --- for example, in QPE a small total infidelity $\varepsilon$ between the guide state and the true target merely adds a small penalty to the repetition count scaling as ${\sim}(1+ 2\varepsilon)$, rather than compromising the success of the end-to-end procedure. In these cases, it is wasteful to spend excessive quantum resources on \emph{exactly} compiling a state to a level of accuracy that exceeds its inherent approximations or the downstream quantum algorithm's error tolerance. This motivates approximate methods with controllable trade-offs between fidelity and circuit costs.

Such methods emerged with the `matrix product disentangler' (MPD) of Ran~\cite{ran2020encoding}, which iteratively `disentangles' a target MPS with repeated sequential layers of 2-qubit gates, constructing an approximate state-preparation circuit. Subsequent work extended this by introducing fidelity-based gate optimisation~\cite{zhou2021automatically,rudolph2023decomposition,melnikov2023quantum,ben2024approximate}, applying the framework to amplitude-encoded functions and classical datasets~\cite{melnikov2023quantum,green2025quantum,bohun2026entanglement,sims2026multivariate,szoldra2026scalable}, recasting the disentangling objective as a local entanglement minimisation~\cite{green2025quantum,mansuroglu2026preparation}, and extending the methodology to more sophisticated tensor-network topologies such as tree tensor networks~\cite{sugawara2025embedding,sims2026multivariate} and more general tensor networks tractable with belief propagation~\cite{szoldra2026belief}.

Although MPS-based methods rely upon an inherently 1D representation, they have been surprisingly effective for systems beyond 1D --- DMRG has been successful in estimating ground states of molecules~\cite{chan2011density,baiardi2020density,menczer2024tensor,menczer2024parallel,xiang2024distributed,legeza2025orbital} and 2D condensed-matter systems~\cite{white2007neel,yan2011spin,stoudenmire2012studying,zheng2017stripe}, simulating quantum circuits with 2D connectivity~\cite{zhou2020limits}, and representing multivariate functions where both inter- and intra-variable bit correlations preclude an optimal 1D ordering in general~\cite{garcia2021quantum,ye2022quantum,gourianov2022quantum,shinaoka2023multiscale,ritter2024quantics,nunez2025learning,bohun2026entanglement,sims2026multivariate,lee2026quantum,connor2026tensor,niedermeier2026solving}. With modern accelerated computing techniques pushing the limits of available bond dimension~\cite{ganahl2023density,menczer2024tensor,menczer2024parallel,xiang2024distributed,legeza2025orbital}, these capabilities to represent non-1D systems are only growing. Although targeting reduction in \emph{local} entanglement at individual bonds of the 1D MPS has proven a successful strategy for enhancing MPD methods~\cite{green2025quantum,mansuroglu2026preparation}, prior work in MPD methods has largely focused on 1D circuit structures for state preparation.

If the underlying system has a geometry that is more complicated than 1D, it stands to reason that constructing its state preparation via 1D methods will be inherently inefficient. There is therefore a pressing need to develop `geometry-native' state preparation methods that are tailored directly to the underlying entanglement geometry of the quantum state to be prepared. There is also great opportunity in tailoring the circuit connectivity of state-preparation routines to the native connectivity of the quantum processor, which in mature architectures nowadays is seldom 1D --- superconducting processors have fixed planar connectivity~\cite{kim2023evidence,google2025quantum}, trapped-ion processors can support all-to-all connectivity within a register~\cite{chen2024benchmarking,ransford202698}, neutral-atom arrays have reconfigurable connectivity by physically transporting atoms in optical tweezers~\cite{bluvstein2024logical}, and silicon-spin qubits support long-ranged connectivity by coherently shuttling electrons~\cite{cai2023looped,de2025high,siegel2026quantum,undseth2026weight,matsumoto2026two}.

\begin{figure*}
    \centering
    \includegraphics[width=\linewidth]{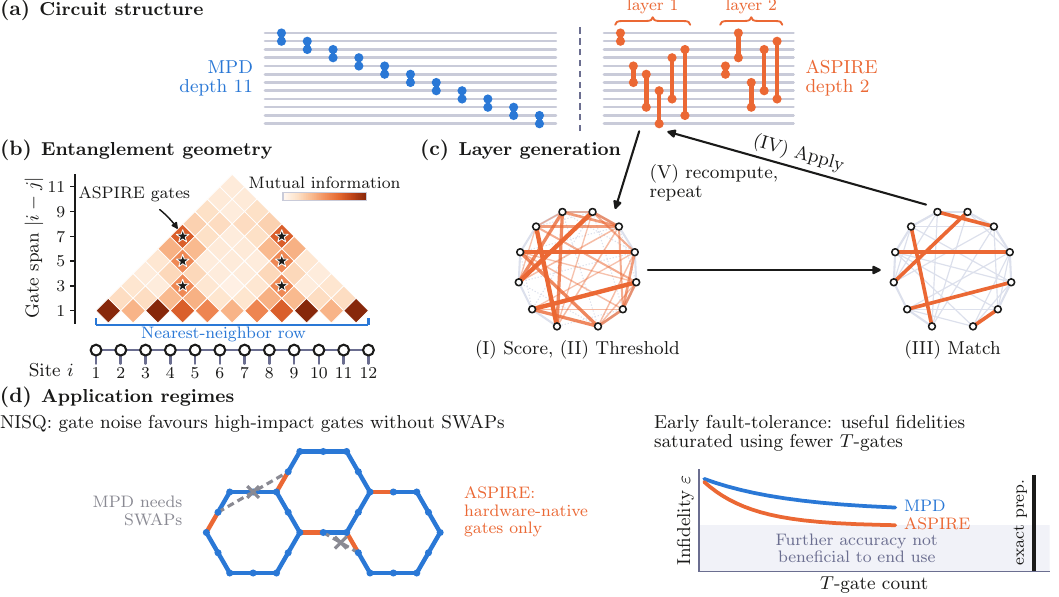}
    \caption{\textbf{Overview of the ASPIRE protocol.}
    \textbf{(a)} Comparison of circuit structure for the MPD~\cite{ran2020encoding} (blue) and ASPIRE (orange) at identical budget. The MPD produces sequential `staircase' layers, whereas ASPIRE selects pairs optimised to reduce long-ranged entanglement, parallelised at shallower depth.
    \textbf{(b)} Pairwise quantum mutual information Equation~\ref{eqn:quantum_mutual_information} for the ground state of a $4\times3$ Heisenberg model Equation~\ref{eq:heisen}, snake-mapped to an MPS and used to select pairs. Each pair scored is a cell at its MPS site (horizontal) and gate range (vertical); only the bottom row is accessible by standard MPD methods, while the six starred cells are connections only targetable by ASPIRE at ranges $3$, $5$ and $7$ in this example.
    \textbf{(c)} One layer of Algorithm~\ref{alg:state-prep}: score and threshold candidate pairs (I--II), take a maximum-weight disjoint set (III), apply it as a layer (IV), and recompute and repeat on the updated state (V).
    \textbf{(d)} The two regimes targeted by ASPIRE. On NISQ hardware with non-1D fixed connectivity (left), the
    MPS snake mapping needs SWAP networks to realise the MPD’s 1D nearest-neighbour gates (blue), with the jumps shown in gray crosses. In contrast, ASPIRE can be constrained to select only from gates that are native to the hardware (orange). In early fault-tolerant systems (right), ASPIRE (orange) can achieve the useful fidelity for a particular application at a fraction of the $T$-gate count for exact preparation, at an efficiency outperforming the MPD (blue).}
    \label{fig:overview}
\end{figure*}

Our work is a response to both of these geometric needs, enabling state preparation to be tailored to both the state's own geometry as well as the geometry imposed by the quantum hardware's connectivity. We introduce Adaptive State Preparation by Iterative Removal of Entanglement (ASPIRE), a protocol that improves upon prior MPD techniques by choosing long-ranged gate placements and configurations that directly remove entanglement, selected purely from these two notions of geometry and at full connectivity independent of the geometry of the state's representation. ASPIRE is a greedy classical algorithm, and builds an approximate state-preparation circuit from the state's entanglement geometry. Informally, our approach is summarised as follows (with a more detailed accounting given in Section~\ref{sec:protocol}, and a formal statement of the algorithm in Appendix~\ref{app:disentangling}):
\begin{informalalg} [ASPIRE (Informal)] (I) For every candidate move admitted by the connectivity matrix, evaluate a cost-normalised entanglement metric score.  (II) Discard moves below a certain threshold. (III) Select the maximum-weight set of pairwise disjoint moves. (IV) Apply the corresponding gates, forming one computational layer. (V) Recompute metrics, and repeat until no candidate survives step (II) or the native gate count reaches the budget $B$.
\label{alg:state-prep}
\end{informalalg}

The overall approach of our work is summarised in Figure~\ref{fig:overview}. The rest of the paper is organised as follows. In Section ~\ref{sec:methods} we cover the tensor-network preliminaries needed for defining the matrix product disentangler algorithm and contrast this with our entanglement-informed gate placement, before discussing different classical optimisation protocols. Next, Section ~\ref{sec:results} demonstrates the efficacy of ASPIRE in numerical experiments preparing Hamiltonians with long-range interactions, the addition of classical optimisation, the effect of increased hardware connectivity, or of adding routing costs and gate noise, finally testing also on multivariate functions. We discuss and conclude in Section ~\ref{sec:conclusion}.

\section{Methods}
\label{sec:methods}

\subsection{Tensor-network preliminaries}
In this work, we consider state preparation in the $N$-qubit register of a quantum computer, whose state is described by a state vector $\ket{\psi} \in \mathcal{H}^{\otimes N}$, where $\mathcal{H} \cong \mathbb{C}^2$ is the Hilbert space of a single qubit. A tensor network represents a multilinear map (most commonly a state or operator on this Hilbert space or a subspace) as a contraction of several lower-rank tensors, the most common being the MPS
\begin{equation}
\Psi_{\bm{s}}=[A_1]^{s_1}_{\alpha_1}[A_2]^{s_2}_{\alpha_1\alpha_2}\dots [A_{N-1}]^{s_{N-1}}_{\alpha_{N-2}\alpha_{N-1}} [A_N]^{s_N}_{\alpha_{N-1}},\label{eq:mps}
\end{equation}
representing the amplitudes $\Psi_{\bm{s}} = \bra{s_1 s_2 \dots s_N} \ket{\psi}$ of the state $\ket{\psi}$, and where the Einstein summation convention over repeated internal indices $\alpha_n$ is assumed. Tensor networks are most conveniently expressed in Penrose graphical notation~\cite{penrose1971applications,bridgeman2017hand}, where Equation~\ref{eq:mps} takes the form
\begin{equation}\Psi_{\bm{s}}=
\begin{array}{c}
\begin{tikzpicture}
\Vertex[x=0,y=0,label=$A_1$,position=above,size=0.25,color=blue!30]{A1}
\Vertex[x=1,y=0,label=$A_2$,position=above,size=0.25,color=blue!30]{A2}
\Vertex[x=2.5,y=0,label=$A_{N-1}$,position=above,size=0.25,color=blue!30]{Anm1}
\Vertex[x=3.5,y=0,label=$A_N$,position=above,size=0.25,color=blue!30]{An}
\draw[line width = 1] (A1) -- ++(0,-0.35) node[below] {\scriptsize$s_1$};
\draw[line width = 1] (A2) -- ++(0,-0.35) node[below] {\scriptsize$s_2$};
\draw[line width = 1] (Anm1) -- ++(0,-0.35) node[below] {\scriptsize$s_{N-1}$};
\draw[line width = 1] (An) -- ++(0,-0.35) node[below] {\scriptsize$s_N$};
\Edge[lw=1,position={below},label={\makebox[1em][c]{$\alpha_{N-1}$}}](Anm1)(An)
\Edge[lw=1,position={below},label={\makebox[1em][c]{$\alpha_1$}}](A1)(A2)
\Edge[lw=1,label=$\cdots$](A2)(Anm1)
\end{tikzpicture}
\end{array}.
\label{eqn:mps_diagram_form}
\end{equation}
The gauge freedom of tensor networks~\cite{schollwock2011density,bridgeman2017hand} means that we are free to place the tensor network in right-canonical form, where each tensor $A_n$ is an isometry such that
\begin{equation}
\begin{array}{c}
\begin{tikzpicture}
\Vertex[x=0,y=0.3,label=$A$,position=center,size=0.4,color=blue!30]{A1}
\Vertex[x=0,y=-0.3,label=$A^\dagger$,position=center,size=0.4,color=red!30]{A2}
\draw[line width = 1] (A1) -- ++(-0.4,0) node[below] {};
\draw[line width = 1] (A2) -- ++(-0.4,0) node[below] {};
\Edge[lw=1,bend=90](A1)(A2)
\Edge[lw=1](A1)(A2)
\end{tikzpicture}
\end{array}
=\begin{array}{c}
\begin{tikzpicture}
\draw[line width=1] (0,0.3) to[out=0,in=0,looseness=1.5] (0,-0.3);
\end{tikzpicture}
\end{array},
\end{equation}
i.e. $AA^\dagger = I$. An isometry is a physical operation that can be performed on a quantum computer, in this case representing an entangling operation between uninitialised and arbitrarily-initialised qubits. Concretely, it can be realised by a unitary whose columns are those of the isometry augmented by any orthonormal vectors spanning its nullspace. Operators $O$ are likewise representable by matrix product operators (MPOs) of the form
\begin{equation}\bra{\bm{s}'}O\ket{\bm{s}}=
\begin{array}{c}
\begin{tikzpicture}
\Vertex[x=0,y=0,position=above,size=0.25,color=blue!30]{A1}
\Vertex[x=1,y=0,position=above,size=0.25,color=blue!30]{A2}
\Vertex[x=2.5,y=0,position=above,size=0.25,color=blue!30]{Anm1}
\Vertex[x=3.5,y=0,position=above,size=0.25,color=blue!30]{An}
\draw[line width = 1] (A1) -- ++(0,-0.3) node[right] {\scriptsize$s'_1$};
\draw[line width = 1] (A2) -- ++(0,-0.3) node[right] {\scriptsize$s'_2$};
\draw[line width = 1] (Anm1) -- ++(0,-0.3) node[right] {\scriptsize$s'_{N-1}$};
\draw[line width = 1] (An) -- ++(0,-0.3) node[right] {\scriptsize$s'_N$};
\draw[line width = 1] (A1) -- ++(0,0.3) node[right] {\scriptsize$s_1$};
\draw[line width = 1] (A2) -- ++(0,0.3) node[right] {\scriptsize$s_2$};
\draw[line width = 1] (Anm1) -- ++(0,0.3) node[right] {\scriptsize$s_{N-1}$};
\draw[line width = 1] (An) -- ++(0,0.3) node[right] {\scriptsize$s_N$};
\Edge[lw=1,position={below},label={\makebox[1em][c]{}}](Anm1)(An)
\Edge[lw=1,position={below},label={\makebox[1em][c]{}}](A1)(A2)
\Edge[lw=1,label=$\cdots$](A2)(Anm1)
\end{tikzpicture}
\end{array},
\label{eqn:MPO_diagram_form}
\end{equation}
some of which may admit a low bond dimension, e.g. local Hamiltonian terms~\cite{hubig2017generic}, the quantum Fourier transform~\cite{chen2023quantum}, and derivative operators~\cite{garcia2021quantum,gourianov2022quantum,lee2026quantum}.

Applying gates to an MPS generally increases its bond dimension, so we often must compress it back to a lower target bond dimension $\chi$, i.e.
\begin{equation}
  \textsc{Compress}(\ket{\psi},\chi) \approx \operatorname*{argmin}_{|\tilde{\psi}\rangle\in\mathcal{M}_\chi} \left\|\ket{\psi}-|\tilde{\psi}\rangle\right\|,
  \label{eq:compress}
\end{equation}
where $\mathcal{M}_\chi$ denotes the set of normalised MPS with bond dimension $\le\chi$. This can be achieved by iteratively performing singular value decompositions along the chain and truncating the resultant Schmidt spectrum to retain no more than $\chi$ singular values~\cite{schollwock2011density}, or by more sophisticated methods~\cite{fishman2022itensor,harrow2025randomized}, and can also be done in a controlled manner up to some error tolerance depending on the method used.

\subsection{Matrix product disentangler}
Matrix product disentangler (MPD) techniques~\cite{ran2020encoding,zhou2021automatically,rudolph2023decomposition,ben2024approximate,sims2026multivariate,green2025quantum,bohun2026entanglement,szoldra2026scalable,mansuroglu2026preparation,sugawara2025embedding,szoldra2026belief} have largely built upon the iterative disentangling framework introduced by Ran~\cite{ran2020encoding}, which we briefly review here. Preparing an MPS of general bond dimension $\chi$ is a non-trivial procedure that typically requires compiling isometries/unitaries acting simultaneously on at least $\lceil\log_2(\chi)\rceil+1$ qubits~\cite{schoen2005sequential,iten2016quantum,iten2019introduction,malvetti2021quantum,fomichev2024initial,berry2025rapid,rupprecht2026sparse}; however, an MPS of bond dimension $\chi=2$ can be prepared \emph{exactly} with a single-layer `ladder' of $N-1$ two-qubit unitaries acting sequentially along the chain. The intuition behind the MPD approach of Ref.~\cite{ran2020encoding} is that truncating the MPS to $\chi=2$ and constructing this layer preserves some of the entanglement structure of the original MPS; this can be used as a primitive component to iteratively \emph{disentangle} a target MPS. In this approach, one truncates the MPS to $\chi=2$, constructs the sequential layer that prepares this state, applies its inverse to the untruncated MPS, and repeats (up to some allowable number of layers or fidelity target), which typically converges towards a circuit that approximately disentangles the MPS to a product state. The inverse of this disentangling circuit then serves as an approximate state preparation circuit for the MPS. This approach is formalised in Algorithm~\ref{alg:mpd}, for $K$ layers, fidelity threshold $f_t$, and truncation bond dimension $\chi_w$.

\begin{figure}\color{black}\algrenewcommand\alglinenumber[1]{{\color{black}\footnotesize#1:}}
\begin{algorithm}[H]
\caption{\color{black}Matrix product disentangler~\cite{ran2020encoding}}\label{alg:mpd}
\begin{algorithmic}[1]
\Procedure{MPD}{$\ket{\psi},K,f_{\mathrm{t}},\chi_w$}
    \State $\ket{\phi_0}\gets\ket{\psi}$
    \For{$k=1,\dots,K$}
        \State $\ket{\tilde\phi}\gets$ \Call{Compress}{$\ket{\phi_{k-1}},2$} \Comment{right-canonical, Eq.~\ref{eq:compress}}
        \For{$n=1,\dots,N-1$}
            \State $G_n\gets$ \Call{CompleteUnitary}{$A_n$} \Comment{Eq.~\ref{eqn:completeunitary}}
        \EndFor
        \State $G_{N-1}\gets(I\otimes A_N)\,G_{N-1}$ \Comment{absorb single-qubit unitary on qubit $N$}
        \State $L_k\gets G_{N-1}\cdots G_2\,G_1$ \Comment{$G_1$ acts first}
        \State $\ket{\phi_k}\gets$ \Call{Compress}{$L_k^\dagger\ket{\phi_{k-1}},\chi_w$}
        \If{$|\langle 0^{\otimes N}|\phi_k\rangle|\ge f_{\mathrm{t}}$}
            \State \textbf{break}
        \EndIf
    \EndFor
    \State \Return $C=L_1L_2\cdots L_k$
\EndProcedure
\end{algorithmic}
\end{algorithm}
\end{figure}
To construct the preparation circuit for the $\chi=2$ MPS, we must map the isometries composing the MPS to unitary gates that act on the corresponding qubits.
If the truncated MPS is not already in right-canonical form from the compression procedure, it is first transformed to this gauge.
The leftmost tensor (orthogonality centre) is then a 2-qubit state-preparation operation ($4\times 1$ isometry), the bulk tensors are $4\times2$ isometries, and the rightmost tensor is a $2\times2$ single-qubit unitary. Each isometry $A_n$ for $n=1,\dots,N-1$ is augmented to a unitary $G_n\equiv \textsc{CompleteUnitary}(A_n)$ satisfying
\begin{equation}
\begin{array}{c}
\begin{tikzpicture}
\Vertex[x=0,y=0,shape=rectangle,label=$G_n$,position=center,color=blue!15,style={minimum width=1cm,minimum height=0.55cm,rounded corners=2pt}]{G}
\draw[line width = 1] ([xshift=-0.35cm]G.north) -- ++(0,0.2) node[left] {\scriptsize$\alpha_{n-1}$};
\draw[line width = 1] ([xshift=0.35cm]G.north) -- ++(0,0.2) node[right] {\scriptsize$\ket{0}$};
\draw[line width = 1] ([xshift=-0.35cm]G.south) -- ++(0,-0.2) node[left] {\scriptsize$s_n$};
\draw[line width = 1] ([xshift=0.35cm]G.south) -- ++(0,-0.2) node[right] {\scriptsize$\alpha_n$};
\end{tikzpicture}
\end{array}
=
\begin{array}{c}
\begin{tikzpicture}
\Vertex[x=0,y=0,label=$A_n$,position=above,size=0.25,color=blue!30]{A}
\draw[line width = 1] (A) -- ++(-0.3,0) node[left] {\scriptsize$\alpha_{n-1}$};
\draw[line width = 1] (A) -- ++(0.3,0) node[right] {\scriptsize$\alpha_n$};
\draw[line width = 1] (A) -- ++(0,-0.3) node[below] {\scriptsize$s_n$};
\end{tikzpicture}
\end{array},
\label{eqn:completeunitary}
\end{equation}
i.e.\ $G_n$ maps the bond index on qubit $n$ to the physical index on qubit $n$ and passes the next bond index to qubit $n+1$ to be processed by the next gate (for $n=1$ both inputs are $\ket{0}$). This is straightforwardly achieved by augmenting $A_n$ with orthonormal columns that span the nullspace. The single-qubit unitary on qubit $N$ can be absorbed into the preceding two-qubit gate without additional cost, such that each layer consists of $N-1$ general 2-qubit unitaries $G_n\in\operatorname{SU}(4)$. The newest layer $L_k$ acts first on $\ket{0}^{\otimes N}$ (last step of disentanglement), which is why $C=L_1L_2\cdots L_k$ is returned in this order.

While the sequential MPD has been a useful starting point for approximate state preparation routines, particularly for states with short-ranged 1D correlations, sequential circuits spread entanglement slowly: acting only on nearest-neighbour pairs in the 1D MPS chain, they extend the correlated region by roughly one site per sweep --- the slow growth of this lightcone makes these circuits prohibitively deep for states with non-1D or long-ranged correlations~\cite{robertson2025approximate,jaderberg2026variational,pennington2026preparing}. This limitation is one of several motivations for the approach developed in this work.

\subsection{Entanglement-informed gate placement} \label{sec:protocol}
While brick-wall and parallel-sequential ans\"{a}tze circumvent some of the limitations of sequential circuits by parallelising entangling gates~\cite{lin2021real,haghshenas2022variational,robertson2025approximate,jaderberg2026variational,szoldra2026scalable,pennington2026preparing,wei2026state}, they still remain geometrically local on the 1D ordering --- this incurs a depth that grows with the correlation length, precludes tailoring to the connectivity of the quantum processor, and fails to accommodate the true entanglement geometry of non-1D systems represented with MPS. We circumvent this here by introducing a protocol that places two-qubit gates on arbitrary pairs of sites according to an entanglement metric, allowing long-ranged correlations to be directly captured rather than slowly propagated site-by-site over many layers.

Directly targeting the reduction of local entanglement (as opposed to the truncation heuristic of original approaches~\cite{ran2020encoding}) has proven to be an effective strategy for enhancing the MPD~\cite{green2025quantum,mansuroglu2026preparation}. However, existing methods have targeted the local entanglement at the level of individual bonds, which can be a suboptimal strategy, particularly for the aforementioned states with long-ranged correlations where MPS methods still remain performant~\cite{chan2011density,baiardi2020density,menczer2024tensor,menczer2024parallel,xiang2024distributed,legeza2025orbital,white2007neel,yan2011spin,stoudenmire2012studying,zheng2017stripe,zhou2020limits,garcia2021quantum,ye2022quantum,gourianov2022quantum,shinaoka2023multiscale,ritter2024quantics,nunez2025learning,bohun2026entanglement,sims2026multivariate,lee2026quantum,connor2026tensor,niedermeier2026solving}. Iteratively selecting pairs of sites with high quantum mutual information (QMI) and removing entanglement on these pairs has been an effective means to discover the entanglement geometry of quantum states with complicated correlations, recovering scale-invariant and holographic AdS-like structure in relevant quantum states~\cite{hyatt2017extracting}. Our protocol applies similar ideas constructively to design state-preparation circuits tailored to the underlying entanglement structure of the target state.

\subsubsection{Entanglement metrics}

Freeing our circuit construction to allow arbitrary (or architecture-constrained) pairings of qubits allows for drastically more efficient generation and spreading of entanglement and circuit expressivity, but also introduces the challenge of navigating the accordingly larger search space of possible gate placements. Suitable entanglement metrics are therefore essential to guide placement; we will frequently use the two-site QMI as our guiding metric, which depends on the reduced density matrix of the two qubits $i,j\in\{1,\dots,N\}$:
\begin{equation}
    \rho_{ij}=\Tr_{\overline{ij}}[\ket{\psi}\bra{\psi}],\label{eqn:reduced_density_matrix}
\end{equation}
where $\overline{ij}\equiv\{1,\dots,N\}\setminus\{i,j\}$ denotes the set of all qubits except $i$ and $j$ (hence the partial trace is computed over all other qubits). The quantum mutual information is then given by
\begin{equation}
  I(i:j) \equiv S(\rho_i) + S(\rho_j) - S(\rho_{ij}),\label{eqn:quantum_mutual_information}
\end{equation}
where $S(\rho) \equiv -\operatorname{Tr} \rho \log \rho$ is the von Neumann entropy with $\log\equiv\log_2$ throughout, so that all entropies and correlations are quoted in bits \cite{nielsen2010quantum} and $\rho_i = \Tr_{j}[\rho_{ij}]$ is the reduced density matrix for qubit $i$ (likewise for $j$).

When the two-site QMI is required, we compute $\rho_{ij}$ via tensor-network contractions as 
\begin{equation}
\bra{s_i',s_j'}\rho_{ij}\ket{s_i,s_j}=
\begin{array}{c}
\begin{tikzpicture}
% ket (top) row: 1, i, i+1, j-1, j, N
\Vertex[x=0,y=-0.2,size=0.25,color=blue!30]{K1}
\Vertex[x=0.9,y=-0.2,size=0.25,color=blue!30]{Ki}
\Vertex[x=1.4,y=-0.2,size=0.25,color=blue!30]{Kii}
\Vertex[x=2.3,y=-0.2,size=0.25,color=blue!30]{Kjm}
\Vertex[x=2.8,y=-0.2,size=0.25,color=blue!30]{Kj}
\Vertex[x=3.7,y=-0.2,size=0.25,color=blue!30]{KN}
% bra (bottom) row
\Vertex[x=0,y=0.2,size=0.25,color=red!30]{B1}
\Vertex[x=0.9,y=0.2,size=0.25,color=red!30]{Bi}
\Vertex[x=1.4,y=0.2,size=0.25,color=red!30]{Bii}
\Vertex[x=2.3,y=0.2,size=0.25,color=red!30]{Bjm}
\Vertex[x=2.8,y=0.2,size=0.25,color=red!30]{Bj}
\Vertex[x=3.7,y=0.2,size=0.25,color=red!30]{BN}
% traced pairs: closed ket-bra loops (straight + bent \Edge)
%\Edge[lw=1](K1)(B1)
\Edge[lw=1,bend=-135](K1)(B1)
%\Edge[lw=1](Kii)(Bii)
\Edge[lw=1,bend=-135](Kii)(Bii)
%\Edge[lw=1](Kjm)(Bjm)
\Edge[lw=1,bend=-135](Kjm)(Bjm)
%\Edge[lw=1](KN)(BN)
\Edge[lw=1,bend=-135](KN)(BN)
%open-leg lines
\draw[line width=1] (Ki) -- ++(0,-0.3) node[right] {\scriptsize$s_i'$};
\draw[line width=1] (Bi) -- ++(0,0.3) node[right] {\scriptsize$s_i$};
\draw[line width=1] (Kj) -- ++(0,-0.3) node[right] {\scriptsize$s_j'$};
\draw[line width=1] (Bj) -- ++(0,0.3) node[right] {\scriptsize$s_j$};
% ket-row bonds (elided gaps use \cdots)
\Edge[lw=1,label=$\cdots$](K1)(Ki)
\Edge[lw=1](Ki)(Kii)
\Edge[lw=1,label=$\cdots$](Kii)(Kjm)
\Edge[lw=1](Kjm)(Kj)
\Edge[lw=1,label=$\cdots$](Kj)(KN)
% bra-row bonds
\Edge[lw=1,label=$\cdots$](B1)(Bi)
\Edge[lw=1](Bi)(Bii)
\Edge[lw=1,label=$\cdots$](Bii)(Bjm)
\Edge[lw=1](Bjm)(Bj)
\Edge[lw=1,label=$\cdots$](Bj)(BN)
\end{tikzpicture}
\end{array}
\label{eqn:rdm_diagram_form}
\end{equation}
where the red nodes denote the conjugated (bra) factor in the projector $\ket{\psi}\bra{\psi}$. This network can be contracted efficiently, as noted in Ref.~\cite{schollwock2011density}.

Following the approach in Ref.~\cite{hyatt2017extracting}, when a pair of sites $(i,j)$ has been selected, it is straightforward to construct the unitary operation $U_{ij}\in\operatorname{SU}(4)$ acting on these sites that maximally reduces the QMI evaluated on the rotated state by solving
\begin{equation}
  U^*_{ij} \equiv \operatorname{argmin}_{U_{ij}} I(U_{ij} \rho_{ij} U_{ij}^\dagger).\label{eqn:optimal_entanglement_removing_unitary}
\end{equation}
Choosing each gate to minimise an entanglement measure, rather than to maximise its overlap with a truncated target, is also the principle of the classical variational disentangler of Ref.~\cite{mansuroglu2026preparation}, which optimises a layer of parameterised disentangling gates by minimising bipartite entanglement measures after its application. Equation~\ref{eqn:optimal_entanglement_removing_unitary} differs in acting on an arbitrary pair of sites and in being solved exactly rather than variationally (Appendix~\ref{app:gate}).
Without searching through $SU(4)$, the optimal gate maps the eigenvectors of $\rho_{ij}$, ordered by decreasing eigenvalue, onto $\ket{00},\ket{01},\ket{10},\ket{11}$ (Proposition~\ref{prop:pair}, \cite{jevtic2012maximally, jevtic2012quantum}), so a single $4\times4$ eigendecomposition yields both the gate and the correlation it removes. This allows us to define the \emph{removable} QMI:
\begin{definition}[Removable quantum mutual information (RQMI)]
  Let $\rho_{ij}$ be the reduced density matrix of sites $(i,j)$ as defined in Equation~\ref{eqn:reduced_density_matrix} and $U^*_{ij}$ be as defined in Equation~\ref{eqn:optimal_entanglement_removing_unitary}. The removable quantum mutual information (RQMI) is then defined as
  \begin{equation}
    R_{ij}=S(\rho_i) + S(\rho_j) - S((\rho')_i) - S((\rho')_j)
    \label{eq:RQMI}
  \end{equation}
  where $\rho'\equiv U^*_{ij} \rho_{ij} (U^*_{ij})^\dagger$.
\end{definition}
For more details regarding the protocol, see Appendix~\ref{app:disentangling}.

\subsubsection{Placing gates}
\label{sec:placement}

For every pair of MPS sites $(i,j)$, $R_{ij}$ quantifies the correlation that the optimal 2-qubit rotation acting on those sites could remove. In unrestricted ASPIRE, all such pairwise scores are computed and a layer is generated by taking a maximum-weight matching of the candidate pairs, removing as much correlation as possible in depth-1. Adding restrictions to the protocol allows it to be tailored towards different noise regimes, hardware connectivities, and optimisation protocols.

The admissible gate locations are given by a symmetric connectivity matrix $K\in\{0,1\}^{N\times N}$, with $K_{ij}=1$ if a gate is permitted on qubits $(i,j)$. All-to-all connectivity sets $K_{ij}=1$ for every $i\neq j$. The connectivity radius $r$ used throughout Section~\ref{sec:results} admits only pairs with $|i-j|\le r$ in MPS site order, so that $r=1$ recovers the nearest-neighbour pairs available to the MPD. On a device with fixed connectivity, such as the heavy-hex patch of Section~\ref{subsec:noise}, $K$ may be the device's coupling graph, or it may also admit pairs the device can couple only through SWAP gates, in which case those gates carry a routing cost.

Each candidate $(i,j)$ is scored by its removable QMI $R_{ij}$, obtained together with its gate $U^*_{ij}$ from one eigendecomposition of $\rho_{ij}$ (Appendix~\ref{app:gate}). When candidates differ in cost, as they do when some must be routed, we score instead by the correlation removed per native two-qubit gate, $\tilde R_{ij}=R_{ij}/c_{ij}$. Here $c_{ij}$ counts the native gates the move costs, including its routing: a generic two-qubit gate compiles to three CNOTs, and every SWAP needed to bring qubits $i$ and $j$ together adds three more. The normalised score is the appropriate one for connections the hardware must route, and the raw score for connections it supports natively (Appendix~\ref{app:connectivity}).

Candidates scoring below a threshold $\tilde\tau$, measured in bits per native gate, are discarded. The threshold keeps the protocol from spending gates on pairs whose removable correlation is negligible. Similarly candidates scoring below a relative threshold, for example not scoring in the top decile of possible pairs, can also be ignored for similar gate-frugal reasons. Hybrid approaches can utilise both restrictions, including having it vary throughout the protocol (for example loosening the threshold after a number of layers have been applied). These limitations can also end the circuit construction once no candidate survives. The construction also ends when the native gate count reaches the budget $B$.

Gates on disjoint pairs of qubits commute and can be executed simultaneously, so each circuit layer is built from pairwise disjoint candidates. Disjointness also makes the scores additive. A two-qubit gate changes only the marginals of the two qubits it acts on, so the correlation removed by a layer of disjoint gates is exactly the sum of what each gate removes, with no interaction between them (Proposition~\ref{prop:descent}). Selecting a layer is then the problem of choosing, among the surviving candidates, the disjoint set of pairs with the largest total score. If the qubits are the vertices of a graph and the surviving candidates its edges, weighted by their scores, this is a maximum-weight matching, which Edmonds' blossom algorithm solves exactly in $\mathcal{O}(N^3)$ time~\cite{edmonds1965paths}. For the largest systems we restrict the matching to the $32$ highest-scoring candidates, over which it is still found exactly. 

A separate parameter determines how many gates a layer admits. A full matching commits up to $N/2$ gates on the basis of a single scoring of the state, whereas smaller layers re-score the state more often within the same gate budget. Neither is better in general. At equal gate count, narrower layers reached a lower infidelity on the Heisenberg ground states and random MPS we tested, and the full matching did on the TFIM, so the layer size has to be chosen together with the gate budget (Appendix~\ref{app:layer}). Unless stated otherwise, a layer is the full matching. The connectivity study of Section~\ref{subsec:connectivity} uses a simpler greedy rule instead. Candidates are taken in descending order of their mutual information $I(i{:}j)$, and a pair is admitted only if it is disjoint from the pairs already chosen and each of its qubits is weakly correlated with theirs, $I(a{:}b)\le\kappa=0.1$~bits for every such cross pair $(a,b)$, so that each layer acts on nearly independent parts of the state.

The selected gates $U^*_{ij}$ are applied to the MPS, which, as in the MPD, removes their correlation from the working state. Gates on pairs that are not adjacent in the chain are applied through MPS SWAP chains, while truncating the bond dimension. The two-site reduced density matrices are then recomputed on the updated state, and the next layer is selected. Recomputation is necessary because a gate on $(i,j)$ also changes $\rho_{ik}$ and $\rho_{jk}$ for every other qubit $k$, so a pair that scored poorly before a layer may score highly after it. Grouping the candidate pairs by their left site, the scan over all $\mathcal{O}(N^2)$ candidates costs $\mathcal{O}(N^2\chi^3)$ per layer. When the construction stops, the inverses of the applied gates in reverse order form the state-preparation circuit, exactly as in Algorithm~\ref{alg:mpd}. Figure~\ref{fig:overview}(c) traces one pass of this loop.

This raises the question of why removing correlations is beneficial for state preparation. The circuit succeeds when the disentangled state $\ket{\phi}=C^\dagger\ket{\psi_{\text{target}}}$ is close to a product state, because the free layer of single-qubit rotations then maps it onto $\ket{0}^{\otimes N}$. A natural measure of that distance is the total correlation $T=\sum_a S(\rho^a)$, the sum of the single-site entropies of $\ket{\phi}$, which vanishes exactly when $\ket{\phi}$ is a product. We show below that $T$ is the relative-entropy distance of $\ket{\phi}$ to the product states and that a layer of disjoint gates lowers it by exactly the sum of what each gate removes, with $R_{ij}$ the largest decrease any gate on $(i,j)$ can achieve, so placing gates by removable QMI is greedy descent on this distance. The same single-site marginals also bound the infidelity of the circuit (Theorem~\ref{thm:cert}).

For a general support $A\subseteq\{1,\dots,N\}$ we can use the total correlation to state Proposition~\ref{prop:descent}. Writing $S(\rho\Vert\sigma)=\operatorname{Tr}[\rho(\log\rho-\log\sigma)]$ for the relative entropy between density matrices $\rho$ and $\sigma$, for pure states $\ket\psi$ one has $\min_{\sigma=\bigotimes_{a\in A}\sigma_a}S(\dyad{\psi}\Vert\sigma)=S\bigl(\dyad{\psi}\Vert\bigotimes_{a\in A}\rho^a\bigr)=\sum_{a\in A}S(\rho^a)=T_a$, where each $\sigma_a$ is an arbitrary state on site $a$, so Algorithm~\ref{alg:state-prep} descends on a canonical distance rather than on a heuristic score. 

\begin{proposition}[Exact descent and exact additivity]
\label{prop:descent}
Let $\ket{\phi}$ be pure and let $\mathcal L=\{A_g\}$, using $g$ as the gate-index, be pairwise disjoint
supports carrying $U_{\mathcal L}=\bigotimes_g U_{A_g}$. Then
\begin{equation}
  \Delta T=\sum_g \Delta_g,
  \ \
  \Delta_g=\sum_{a\in A _g}\bigl[S((\rho')^{a})-S(\rho^{a})\bigr] \ge -R_{A_g},
  \label{eq:descent}
\end{equation}
with $R_{A_g}$ evaluated on the pre-layer state and equality if and only if
$U_{A_g}$ attains the minimum in:
\begin{equation}
    R_A=\sum_{a\in A}S(\rho^{a})-\min_{U_A}\sum_{a\in A 
    }S\bigl((\rho')^{a}\bigr).
\end{equation}
Sites outside
$\bigcup_g A_g$ do not contribute, and there is no cross term between supports
at any order.
\end{proposition}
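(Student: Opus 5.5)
The argument is essentially bookkeeping on single-site marginals, so I will work directly from the definition $T=\sum_a S(\rho^a)$ and the locality of partial traces. Write $\rho'=U_{\mathcal L}\dyad{\phi}U_{\mathcal L}^\dagger$, and let $V=\{1,\dots,N\}$ denote the full set of sites. The first step is to show that for a site $b\notin\bigcup_g A_g$ the marginal is unchanged, $(\rho')^b=\rho^b$. To see this, compute the partial trace $\Tr_{V\setminus\{b\}}$ in two stages: first over $\bigcup_g A_g$, then over the remaining sites. In the first stage the cyclicity of the partial trace over the support of $U_{\mathcal L}$ removes $U_{\mathcal L}$ entirely.

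The second step treats a site $a\in A_g$. Here I trace out everything outside $A_g$ first, in the order $V\setminus A_g$. The factors $U_{A_h}$ with $h\ne g$ act only on traced-out sites, by disjointness, and so cancel under cyclicity. This gives $(\rho')_{A_g}=U_{A_g}\rho_{A_g}U_{A_g}^\dagger$, where $\rho_{A_g}$ is the pre-layer reduced state. Consequently $(\rho')^a$ depends only on $U_{A_g}$ and on the pre-layer $\rho_{A_g}$.

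Summing the entropy changes over all sites then splits $\Delta T$ into blocks. Sites outside every support contribute zero, and each block $A_g$ contributes $\Delta_g$, a function of $U_{A_g}$ and $\rho_{A_g}$ alone. This gives $\Delta T=\sum_g\Delta_g$ exactly. Since the identity is an exact equality rather than an expansion, there is no cross term at any order, which settles the \emph{no cross term} claim.

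For the bound, I rewrite $\Delta_g=\sum_{a\in A_g}S\bigl((U_{A_g}\rho_{A_g}U_{A_g}^\dagger)^a\bigr)-\sum_{a\in A_g}S(\rho^a)$. By the definition of $R_{A_g}$ as the initial sum minus the minimum over $U_A$ of the rotated sum, $\Delta_g\ge \min_U\sum_a S(\cdot)-\sum_a S(\rho^a)=-R_{A_g}$. Equality holds iff $U_{A_g}$ attains the minimum. The minimum exists because $\mathrm{SU}(2^{|A|})$ is compact and entropy is continuous. For two-site supports this is the RQMI, with the minimiser given by Proposition~\ref{prop:pair}. That citation is needed only to identify the minimiser explicitly, not for the inequality itself.

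The only genuinely delicate point is the second step, namely justifying that the marginal on $A_g$ is unaffected by the other gates even though the global state is entangled across supports. I would handle this carefully by writing $U_{\mathcal L}=U_{A_g}\otimes W$, with $W$ acting on $V\setminus A_g$. Then $\Tr_{V\setminus A_g}[(U_{A_g}\otimes W)X(U_{A_g}\otimes W)^\dagger]=U_{A_g}\Tr_{V\setminus A_g}[X]U_{A_g}^\dagger$ by cyclicity within the traced factor. Everything else is routine.
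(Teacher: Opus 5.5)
Your proposal is correct and follows essentially the same route as the paper: marginals outside $\bigcup_g A_g$ are invariant, the disjoint factors $U_{A_h}$ with $h\neq g$ drop out of the partial trace so that $(\rho')^{A_g}=U_{A_g}\rho^{A_g}U_{A_g}^\dagger$, and the bound with its equality case follows directly from the definition of $R_{A_g}$. The only difference is that you secure attainment of the minimum by compactness and continuity, whereas the paper invokes Theorem~\ref{thm:assign} to exhibit an explicit minimising gate; either suffices for the equality case.
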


Proposition~\ref{prop:descent} identifies Algorithm~\ref{alg:state-prep} with raw scores and no thresholds as greedy coordinate descent on $T$ constructed layer-wise and without cross terms or single-qubit rotations affecting $T$. The main impact of our protocol is therefore the choice of which two-site coordinate block to descend on. The minimisation defining $R_A$ is not a continuous search over $\mathrm{U}(d^{|A|})$ but a finite assignment over the eigenbasis of $\rho^A$, so one eigendecomposition yields the score and the gate together; see Appendix~\ref{app:gate}.

 What ultimately matters is the infidelity $\varepsilon_\star$ of the prepared state, defined in Equation~\ref{eq:infidelity-frame}. Theorem~\ref{thm:cert} bounds it by the weight that the single-site marginals of the disentangled state place outside their top eigenvectors.

\begin{theorem}[Infidelity bound from single-site marginals]
\label{thm:cert}
Let $\lambda_a$ be the largest eigenvalue of the $a$-th single-site marginal of
$\ket{\phi}=C^{\dagger}\ket{\psi_{\text{target}}}$, and let
$\Lambda=\sum_a(1-\lambda_a)$ and $\tilde\Lambda=\Lambda/(2-\Lambda)$. Then
\begin{equation}
  \varepsilon_{\star}\ \le\ 1-\sqrt{1-\tilde\Lambda}\ \le\ \tilde\Lambda\ \le\ \Lambda\ \le\ T/2 ,
  \label{eq:cert}
\end{equation}
the first three inequalities holding when $\Lambda\le1$ and the last
unconditionally, for sites of any local dimension. 
\end{theorem}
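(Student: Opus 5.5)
The plan is to prove the chain in \eqref{eq:cert} from right to left. The two rightmost links are elementary, the middle ones are calculus, and all the substance is in the first link.

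\emph{Last inequality, $\Lambda\le T/2$.} It suffices to show $S(\rho^a)\ge 2(1-\lambda_a)$ for each site separately and then sum over $a$. I will split on the size of $\lambda_a$.
\begin{itemize}
\item If $\lambda_a\ge 1/2$: the entropy is at least the binary entropy of the split between the top eigenvalue and the rest, so $S(\rho^a)\ge h(\lambda_a)=h(1-\lambda_a)$. Since $h$ is concave with $h(0)=0$ and $h(1/2)=1$, it lies above its chord, giving $h(x)\ge 2x$ on $[0,1/2]$.
\item If $\lambda_a<1/2$: I use the min-entropy bound $S(\rho^a)\ge -\log\lambda_a$. The function $f(\lambda)=-\log\lambda-2(1-\lambda)$ is convex and vanishes at $\lambda=1/2$ and $\lambda=1$. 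A convex function lies above the extension of its chord outside the interval between those two points, so $f\ge 0$ for $\lambda<1/2$.
\end{itemize}
Neither case uses the local dimension, so this link holds unconditionally and for any site dimension, as claimed.

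\emph{Middle inequalities.} For $\Lambda\le 1$, the bound $\tilde\Lambda=\Lambda/(2-\Lambda)\le\Lambda$ holds because $2-\Lambda\ge 1$. The bound $1-\sqrt{1-x}\le x$ on $[0,1]$ is equivalent to $\sqrt{1-x}\ge 1-x$.

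\emph{First inequality.} I take $\varepsilon_\star$ in Equation~\ref{eq:infidelity-frame} to be $1-|\langle v|\phi\rangle|$, where $|v\rangle=\bigotimes_a|v_a\rangle$ and $|v_a\rangle$ is the top eigenvector of $\rho^a$. This is the state reached by the free layer of single-qubit rotations sending each top eigenvector to $\ket0$. The goal then becomes $p\equiv|\langle v|\phi\rangle|^2\ge 1-\tilde\Lambda$, which rearranges to
\begin{equation}
2(1-p)\le\Lambda(2-p).
\end{equation}

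Let $Q_a=(I-\dyad{v_a})_a$ and let $\hat n=\sum_a Q_a$ count the sites that are ``flipped'' relative to $|v\rangle$. Then $1-\lambda_a=\langle\phi|Q_a|\phi\rangle$, so $\Lambda=\langle\phi|\hat n|\phi\rangle$.

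Write $|\phi\rangle=\sqrt p\,|v\rangle+|w_1\rangle+|w_{\ge2}\rangle$, splitting the component orthogonal to $|v\rangle$ into its one-flip and multi-flip sectors, with $q=\|w_1\|^2$. Counting flips gives
\begin{equation}
\Lambda\ge q+2(1-p-q)=2(1-p)-q .
\end{equation}
The union bound alone ($q=1-p$) only gives $\Lambda\ge 1-p$. The improvement must therefore come from the one ingredient not yet used: $|v_a\rangle$ is an \emph{eigenvector} of $\rho^a$, so every off-diagonal element $\langle v_a|\rho^a|u_a\rangle$ with $u_a\perp v_a$ vanishes.

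Computing this off-diagonal element, it consists of
\begin{itemize}
\item a direct term $\sqrt p\,\overline{c_a}$, where $c_a$ is the amplitude of $w_1$ with a flip at site $a$;
\item cross terms between the one-flip sector (flip at some $b\neq a$) and the two-flip sector (flips at $a$ and $b$), together with higher-sector analogues.
\end{itemize}
Setting the element to zero forces the direct term to be cancelled by the cross terms. Cauchy--Schwarz then bounds $\sqrt p\,|c_a|$ by products of norms of flip-sector components. Summing over $a$ should give an inequality of the form $p\,q\lesssim q\cdot(\text{multi-flip weight})$, or more sharply a bound on $q$ in terms of $p$ and $\Lambda$. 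Substituting that bound into $\Lambda\ge 2(1-p)-q$ and solving for $p$ should reproduce exactly $1-p\le\Lambda/(2-\Lambda)$.

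This stationarity-plus-Cauchy--Schwarz step is the main obstacle. Getting the constant precisely $\Lambda/(2-\Lambda)$, rather than something weaker, requires tracking the cross terms carefully. My first attempt would be to bound $\sum_a|c_a|^2$ directly by $\|w_1\|\,\|\hat n^{1/2}w_{\ge2}\|/\sqrt p$ and then optimise over the split of $1-p$ between $q$ and the multi-flip weight. The condition $\Lambda\le1$ is needed here, since it keeps $p$ bounded away from zero and the final quadratic on the correct branch.
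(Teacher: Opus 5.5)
\textbf{Verdict: genuine gap.} Your last three links are correct and coincide with the paper's proof, including the split at $\lambda_a=\tfrac12$ with the binned binary entropy on one side and the min-entropy bound on the other. Your set-up for the first link is also the paper's: expand $\ket{\phi}$ in the product of top eigenvectors, use $\varepsilon_\star\le 1-|\langle v|\phi\rangle|$, and count flips to get $\Lambda\ge 2(1-p)-q$. Note that this is an inequality, not a definition, because $\varepsilon_\star$ maximises over all product states. The gap is the step that carries all the content, which you leave at ``should give''. The concrete inequality you propose for it is too weak. Since $\|\hat n^{1/2}w_{\ge2}\|^2=\Lambda-q$, your bound $\sum_a|c_a|^2\le\|w_1\|\,\|\hat n^{1/2}w_{\ge2}\|/\sqrt p$ says $pq\le\Lambda-q$, i.e.\ $q\le\Lambda/(1+p)$. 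Together with $\Lambda\ge2(1-p)-q$, this still permits $1-p$ as large as $\tfrac34\Lambda+O(\Lambda^2)$, whereas the target is $\tilde\Lambda=\tfrac12\Lambda+O(\Lambda^2)$. So no optimisation over how $1-p$ splits between sectors can recover the stated constant from those two constraints.

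\textbf{The missing idea} is a factor $1-p$ in the Cauchy--Schwarz step. Label product-basis strings as $(j,r)$, with $j$ the level at site $a$ and $r$ the configuration elsewhere. Let $c_0$ be the amplitude on $\ket{v}$ (so $|c_0|^2=p$) and $c_{ke_a}$ the one-flip amplitudes. The eigenvector condition reads $0=\sum_r c_{(0,r)}\overline{c_{(k,r)}}$ for each $k\neq0$, and its $r=0$ term is $c_0\overline{c_{ke_a}}$. Apply Cauchy--Schwarz to the remaining terms. The partner amplitudes $c_{(0,r)}$ with $r\neq0$ all lie in the orthogonal complement of $\ket{v}$, so their total weight is at most $1-p$, not $1$. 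This gives $p\,|c_{ke_a}|^2\le(1-p)\sum_{r\neq0}|c_{(k,r)}|^2$. Set $X_a=\sum_{k\neq0}|c_{ke_a}|^2$. Summing over $k\neq0$ gives $pX_a\le(1-p)(1-\lambda_a-X_a)$, i.e.\ $X_a\le(1-p)(1-\lambda_a)$. Summing over $a$ then gives $q\le(1-p)\Lambda$. In your notation, the sharp version of your inequality is $\sqrt p\,q\le\sqrt{1-p}\,\|w_1\|\,\|\hat n^{1/2}w_{\ge2}\|$. Substituting, $\Lambda\ge2(1-p)-q\ge(1-p)(2-\Lambda)$. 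This is linear in $1-p$ and yields $1-p\le\tilde\Lambda$ for any $\Lambda<2$, with no quadratic and no branch to choose. The hypothesis $\Lambda\le1$ is needed only so that $\tilde\Lambda\le1$ and $\sqrt{1-\tilde\Lambda}$ is real. This is exactly the paper's argument.
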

A proof is given in Appendix~\ref{app:thm2}.
For small $\Lambda$ the certificate reads $\varepsilon_\star\le\Lambda/4+O(\Lambda^{2})$, and the coefficient cannot be improved (Appendix~\ref{app:thm2}). Theorem~\ref{thm:cert} is a statement about the exact residual $C^{\dagger}\ket{\psi_{\text{target}}}$. Algorithm~\ref{alg:aspire} compresses its working state after every gate, and compression can remove correlation that the recorded circuit has not, so the marginals of the working state certify the circuit only once the accumulated truncation is included. Because the Fubini--Study angle is unitarily invariant and obeys the triangle inequality, $\arccos(1-\varepsilon_\star)\le\arcsin\sqrt{\tilde\Lambda_w}+\sum_k\delta_k$ whenever $\Lambda_w\le1$, where $\Lambda_w$ and $\tilde\Lambda_w$ are evaluated on the working state and $\delta_k$ is the Fubini--Study angle between the working state just before and just after its $k$-th compression. The correction vanishes when no truncation occurs. The infidelities we report do not rely on it, since they are computed separately by applying $C^{\dagger}$ to the target afresh, and Appendix~\ref{app:targets} checks that evaluation against a larger bond-dimension ceiling.

$T$ and $\Lambda$ are sums over sites of a function of a single-site marginal, so they are invariant under every permutation of site labels, whereas the bond entropies $S_n$ that the MPD descends on are defined by a contiguous bipartition and are not. Therefore, the site ordering can negatively impact the MPD, but not ASPIRE when afforded full all-to-all connectivity.

Before turning to the optimisation of the gates, we illustrate why placing gates by entanglement, rather than by position along the chain, pays off. We test ASPIRE on two toy models designed to highlight the impact of increased connectivity in the finite-layer regime and gate selection based on entanglement metrics, comparing it to the MPD in Figure~\ref{fig:toy-expressivity}. In \textbf{(a)} and \textbf{(c)} we consider the $N$-qubit GHZ state~\cite{greenberger1990bell} $\ket{\mathrm{GHZ}_N}\equiv(\ket{0}^{\otimes N}+\ket{1}^{\otimes N})/\sqrt{2}$, an MPS of bond dimension 2 whose pairwise correlations do not decay with distance and thus have range $N$. In \textbf{(b)}, we consider a $L_x \times L_y$ lattice carrying a Bell pair on each vertical bond snake-mapped to MPS form from side-to-side. In the first two panels \textbf{(a-b)} we demonstrate that ASPIRE with different gate ranges can decrease the total correlation faster than the MPD for both circuit depth and gate count. In \textbf{(a)} there is a rate of correlation removal per unit depth that no nearest-neighbour method can reach that ASPIRE breaches when its gates use a longer range. The bound behind the shaded region is proved in Appendix~\ref{app:ghzbound}. For \textbf{(b)} a range exceeding $2 L_x-1$ means that all the Bell-pair connections are targetable by the algorithm and the correlation can be removed in orders of magnitude fewer gates than the MPD, which cannot place gates along most of the Bell-pair connections due to the snake mapping. In \textbf{(c)} we demonstrate that all-to-all ASPIRE can exactly prepare the $\text{GHZ}_{N}$ chain in $\lceil \log_2 N \rceil$ layers, exponentially separated from the staircase's optimal $N-1$ layers \cite{malz2024preparation}. The all-to-all construction beats the staircase because unrestricted gate span lets the light cone grow exponentially in depth rather
than linearly, and is itself optimal as a disjoint layer of two-qubit gates can at most double the causal cone of any site, thus requiring a depth $D$ of $2^D \geq N$ (Proposition~\ref{prop:ghzbound}), and Appendix~\ref{app:ghz} records the schedule and the corresponding statement at bounded range.

\begin{figure*}
    \centering
    \includegraphics[width=\linewidth]{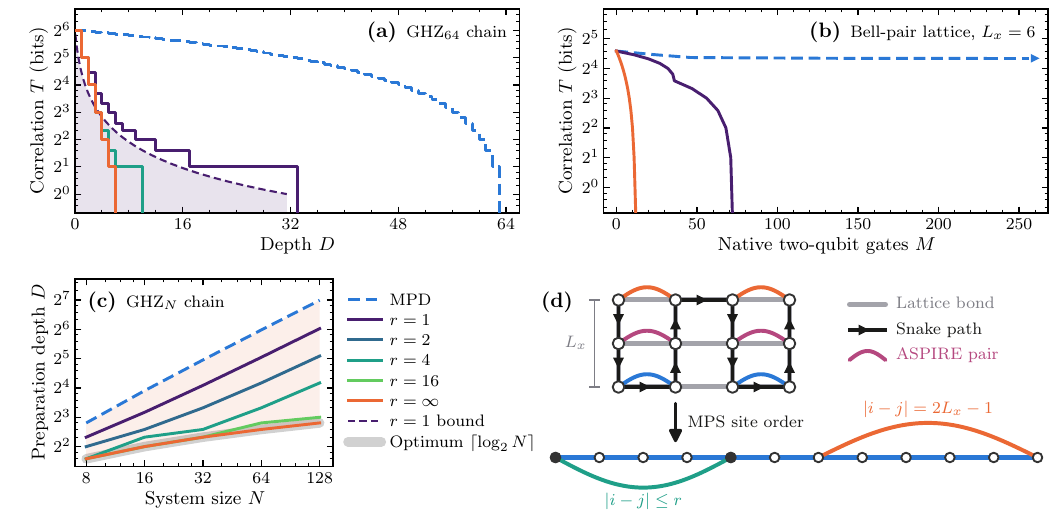}
    \caption{ \textbf{A toy model for the impact of better connectivity.} Top row: the remaining correlation $T$ is shown as a function of \textbf{(a)} circuit depth for a $\mathrm{GHZ}_{64}$ system and of \textbf{(b)} native gate count for an $L_x \times L_y = 6 \times 6$ lattice carrying a Bell pair on each vertical bond. For \textbf{(a)} the shaded region is unreachable by nearest-neighbour circuits, with the MPD's performance shown in dashed blue and ASPIRE at different connectivity radii shown as solid coloured lines. For \textbf{(b)} the Bell pair lattice's entanglement becomes long-ranged when snake mapped to an MPS, seeing all the Bell pairs only when the connectivity radius reaches $2L_x-1$. \textbf{(c)} Depth to exact preparation of the $\mathrm{GHZ}_{64}$, requiring $\lceil \log_2 N \rceil$ layers at unrestricted connectivity, and $N-1$ layers for the staircase construction.  Finite-radius curves in \textbf{(a)} and \textbf{(c)} follow the analytic schedule of Appendix~\ref{app:ghz}; the $r=\infty$ and Bell-lattice curves are ASPIRE output.
    Also shown in \textbf{(d)} is a schematic of a smaller lattice, turned so that the snake runs down its columns and the Bell pairs lie on its horizontal bonds, with the lattice bonds in grey, the snake path in black, and each Bell pair coloured by the chain distance $|i-j|$ it acquires in MPS site ordering, which is the connectivity radius $r$ needed to reach it.}
    \label{fig:toy-expressivity}
\end{figure*}

\subsection{Optimisation}\label{sec:optimisation}

Both standard MPD methods and ASPIRE determine circuit structure and gate placement. However, the gates themselves can be further refined by optimisation against a target objective, such as the infidelity of the prepared state $C\ket{0}^{\otimes N}$ with respect to the target state $\ket{\psi_{\text{target}}}$, given by
\begin{equation}
  \varepsilon = 1 - f, \qquad
  f =
  \big|\bra{\psi_{\mathrm{target}}} C \ket{0}^{\otimes N}\big| ,
  \label{eq:infidelity}
\end{equation}
where $C$ is the compiled state-preparation circuit. Throughout this work, this is the primary metric used to evaluate the quality of the compiled circuits.
Unless stated otherwise, $\ket{\psi_{\text{target}}}$ is the bond-dimension-$\chi$ MPS, so $\varepsilon$ measures the compilation error of the circuit and neglects the truncation error of the underlying tensor-network representation of the state --- this is a standard assumption~\cite{ran2020encoding,zhou2021automatically,rudolph2023decomposition,ben2024approximate,green2025quantum,mansuroglu2026preparation,szoldra2026scalable,jaderberg2026variational,pennington2026preparing,gibbs2026low,szoldra2026belief} that is rarely challenged~\cite{fomichev2024initial,berry2025rapid}, but we also describe a means to overcome it in Section~\ref{sec:energy_refinement}. $f$ can be computed efficiently with standard tensor-network gate application~\cite{vidal2003efficient} and MPS-MPS contraction~\cite{schollwock2011density}, e.g. for an $N=4$ system with a 5-gate brickwork circuit, the overlap inside the brackets in Equation~\ref{eq:infidelity} can be written diagrammatically as
\begin{equation}
\bra{\psi_{\mathrm{target}}}C\ket{0^{\otimes N}}=
\begin{array}{@{}c@{}}
\begin{tikzpicture}
% target MPS <psi| (left column)
\Vertex[x=0,y=0,size=0.2,color=red!30]{P1}
\Vertex[x=0,y=-0.4,size=0.2,color=red!30]{P2}
\Vertex[x=0,y=-0.8,size=0.2,color=red!30]{P3}
\Vertex[x=0,y=-1.2,size=0.2,color=red!30]{P4}
\Edge[lw=1](P1)(P2)
\Edge[lw=1](P2)(P3)
\Edge[lw=1](P3)(P4)
% brickwork C = U_5 U_4 U_3 U_2 U_1, with U_1, U_2 acting first on |0>
\Vertex[x=0.5,y=-0.2,shape=rectangle,label=$U_4$,position=center,fontsize=\scriptsize,color=blue!15,style={minimum width=0.42cm,minimum height=0.6cm}]{G4}
\Vertex[x=0.5,y=-1.0,shape=rectangle,label=$U_5$,position=center,fontsize=\scriptsize,color=blue!15,style={minimum width=0.42cm,minimum height=0.6cm}]{G5}
\Vertex[x=1.1,y=-0.6,shape=rectangle,label=$U_3$,position=center,fontsize=\scriptsize,color=blue!15,style={minimum width=0.42cm,minimum height=0.6cm}]{G3}
\Vertex[x=1.7,y=-0.2,shape=rectangle,label=$U_1$,position=center,fontsize=\scriptsize,color=blue!15,style={minimum width=0.42cm,minimum height=0.6cm}]{G1}
\Vertex[x=1.7,y=-1.0,shape=rectangle,label=$U_2$,position=center,fontsize=\scriptsize,color=blue!15,style={minimum width=0.42cm,minimum height=0.6cm}]{G2}
% wires: MPS -> layer nearest the MPS
\draw[line width = 1] (P1) -- ([yshift=0.2cm]G4.west);
\draw[line width = 1] (P2) -- ([yshift=-0.2cm]G4.west);
\draw[line width = 1] (P3) -- ([yshift=0.2cm]G5.west);
\draw[line width = 1] (P4) -- ([yshift=-0.2cm]G5.west);
% wires: between layers
\draw[line width = 1] ([yshift=0.2cm]G4.east) -- ([yshift=0.2cm]G1.west);
\draw[line width = 1] ([yshift=-0.2cm]G4.east) -- ([yshift=0.2cm]G3.west);
\draw[line width = 1] ([yshift=0.2cm]G5.east) -- ([yshift=-0.2cm]G3.west);
\draw[line width = 1] ([yshift=-0.2cm]G5.east) -- ([yshift=-0.2cm]G2.west);
\draw[line width = 1] ([yshift=0.2cm]G3.east) -- ([yshift=-0.2cm]G1.west);
\draw[line width = 1] ([yshift=-0.2cm]G3.east) -- ([yshift=0.2cm]G2.west);
% initial state |0>
\draw[line width = 1] ([yshift=0.2cm]G1.east) -- ++(0.2,0) node[right] {\scriptsize$\ket{0}$};
\draw[line width = 1] ([yshift=-0.2cm]G1.east) -- ++(0.2,0) node[right] {\scriptsize$\ket{0}$};
\draw[line width = 1] ([yshift=0.2cm]G2.east) -- ++(0.2,0) node[right] {\scriptsize$\ket{0}$};
\draw[line width = 1] ([yshift=-0.2cm]G2.east) -- ++(0.2,0) node[right] {\scriptsize$\ket{0}$};
\end{tikzpicture}
\end{array},
\label{eqn:overlap_diagram}
\end{equation}
where $\bra{\psi_{\text{target}}}\leftrightarrow\begin{array}{c}\begin{tikzpicture}[baseline=-0.55ex]
  \Vertex[x=0,    y=0,size=0.16,color=red!30,style={line width=1pt}]{ta}
  \Vertex[x=0.45, y=0,size=0.16,color=red!30,style={line width=1pt}]{tb}
  \Vertex[x=1.3, y=0,size=0.16,color=red!30,style={line width=1pt}]{tc}
  \foreach \n in {ta,tb,tc}{%
    %\draw[line width=1pt] (\n) -- ++(0.15,0.15);
    \draw[line width=1pt] (\n) -- ++(0,0.2);}
  \Edge[lw=0.55](ta)(tb)
  \Edge[lw=0.55,label=$\cdots$](tb)(tc)
\end{tikzpicture}\end{array}$.

Finally, since our circuit is parametrised in terms of arbitrary two-qubit $\operatorname{SU}(4)$ gates, single-qubit rotations in the circuit structure are free, as they can be absorbed into neighbouring two-qubit gates.
Therefore, the quoted values of infidelity we will use are with respect to the optimal initial single-qubit basis choices, i.e.
\begin{equation}
 \varepsilon_{\star} = 1-\max_{\{v_j\}}
    \bigl|\bra{\psi_{\text{target}}}C(\textstyle\bigotimes_j v_j)\ket{0}^{\otimes N}\bigr|,
  \label{eq:infidelity-frame}
\end{equation}
in which the optimal initial single-qubit basis $\bigotimes_j v_j$ rotations are applied to the initialised register, and absorbed into the first layer of $\operatorname{SU}(4)$ gates. We will henceforth refer to our metric simply as infidelity. The single-qubit gate layer is free only in the two-qubit cost model. On the fault-tolerant axis of Figure~\ref{fig:fig1-efficacy}\textbf{(a)} it is $N$ arbitrary single-qubit rotations and carries a $T$ cost, which is charged there; it is charged identically to both arms, so it shifts both curves by a common additive offset and does not affect their comparison.

\subsubsection{Procrustes optimisation}\label{sec:procrustes}
Once the initial circuit has been constructed (whether by our ASPIRE protocol or the sequential disentangler of Ref.~\cite{ran2020encoding}), there is still substantial room to further optimise gate choices without changing the overall circuit structure. This can be achieved gate-by-gate by constructing and solving a unitary Procrustes problem via tensor-network contraction, an approach first demonstrated in the Evenbly-Vidal algorithm~\cite{evenbly2009algorithms} for the multiscale entanglement renormalisation ansatz (MERA)~\cite{vidal2007entanglement,vidal2008class}, and introduced to the MPD in Ref.~\cite{rudolph2023decomposition}. Unlike MERA optimisation, the objective is linear in the gate being optimised and is therefore exact (optimal for that gate with all others fixed), except in Section~\ref{sec:energy_refinement} where we linearise a quadratic secondary objective with the Evenbly-Vidal procedure. This type of optimisation is a standard approach in tensor-network optimisation that is popular in a variety of contexts~\cite{evenbly2009algorithms,lin2021real,rudolph2023decomposition,ben2024approximate,sims2026multivariate,gibbs2025deep,gibbs2026low,szoldra2026scalable}.

For a circuit $C=U_M\cdots U_2U_1$, in which $U_1$ acts first on $\ket{0}^{\otimes N}$, the environment tensor of the gate in slot $m$ is:
\begin{equation}
     \mathcal{F}_m=\Tr_{\bar U_m}\Bigl[U_{m+1}^\dagger\cdots U_M^\dagger\ket{\psi_{\mathrm{target}}}\bra{0^{\otimes N}}U_1^\dagger\cdots U_{m-1}^\dagger\Bigr],
    \label{eq:environment}
\end{equation}
Diagrammatically, this corresponds to the removal of gate $m$ from the tensor-network diagram for $f$ with its attached legs left open, e.g. for the example in Equation~\ref{eqn:overlap_diagram} the environment for $m=3$ can be written
\begin{equation}
\mathcal{F}_3^\dagger\leftrightarrow
\begin{array}{@{}c@{}}
\begin{tikzpicture}
% target MPS <psi| (left column)
\Vertex[x=0,y=0,size=0.2,color=red!30]{P1}
\Vertex[x=0,y=-0.4,size=0.2,color=red!30]{P2}
\Vertex[x=0,y=-0.8,size=0.2,color=red!30]{P3}
\Vertex[x=0,y=-1.2,size=0.2,color=red!30]{P4}
\Edge[lw=1](P1)(P2)
\Edge[lw=1](P2)(P3)
\Edge[lw=1](P3)(P4)
% brickwork with U_3 cut out (dashed placeholder)
\Vertex[x=0.5,y=-0.2,shape=rectangle,label=$U_4$,position=center,fontsize=\scriptsize,color=blue!15,style={minimum width=0.42cm,minimum height=0.6cm}]{G4}
\Vertex[x=0.5,y=-1.0,shape=rectangle,label=$U_5$,position=center,fontsize=\scriptsize,color=blue!15,style={minimum width=0.42cm,minimum height=0.6cm}]{G5}
\Vertex[x=1.1,y=-0.6,shape=rectangle,color=white,style={minimum width=0.42cm,minimum height=0.6cm,dashed,draw=gray}]{G3}
\Vertex[x=1.7,y=-0.2,shape=rectangle,label=$U_1$,position=center,fontsize=\scriptsize,color=blue!15,style={minimum width=0.42cm,minimum height=0.6cm}]{G1}
\Vertex[x=1.7,y=-1.0,shape=rectangle,label=$U_2$,position=center,fontsize=\scriptsize,color=blue!15,style={minimum width=0.42cm,minimum height=0.6cm}]{G2}
% wires: MPS -> layer nearest the MPS
\draw[line width = 1] (P1) -- ([yshift=0.2cm]G4.west);
\draw[line width = 1] (P2) -- ([yshift=-0.2cm]G4.west);
\draw[line width = 1] (P3) -- ([yshift=0.2cm]G5.west);
\draw[line width = 1] (P4) -- ([yshift=-0.2cm]G5.west);
% wires: between layers (open legs of the cut-out gate)
\draw[line width = 1] ([yshift=0.2cm]G4.east) -- ([yshift=0.2cm]G1.west);
\draw[line width = 1] ([yshift=-0.2cm]G4.east) -- ([yshift=0.2cm]G3.west);
\draw[line width = 1] ([yshift=0.2cm]G5.east) -- ([yshift=-0.2cm]G3.west);
\draw[line width = 1] ([yshift=-0.2cm]G5.east) -- ([yshift=-0.2cm]G2.west);
\draw[line width = 1] ([yshift=0.2cm]G3.east) -- ([yshift=-0.2cm]G1.west);
\draw[line width = 1] ([yshift=-0.2cm]G3.east) -- ([yshift=0.2cm]G2.west);
% initial state |0>
\draw[line width = 1] ([yshift=0.2cm]G1.east) -- ++(0.2,0) node[right] {\scriptsize$\ket{0}$};
\draw[line width = 1] ([yshift=-0.2cm]G1.east) -- ++(0.2,0) node[right] {\scriptsize$\ket{0}$};
\draw[line width = 1] ([yshift=0.2cm]G2.east) -- ++(0.2,0) node[right] {\scriptsize$\ket{0}$};
\draw[line width = 1] ([yshift=-0.2cm]G2.east) -- ++(0.2,0) node[right] {\scriptsize$\ket{0}$};
\end{tikzpicture}
\end{array},
\label{eqn:environment_diagram}
\end{equation}
which resembles Equation~\ref{eqn:overlap_diagram} with $U_3$ `cut out'; contracting $U_m$ back into the diagram recovers $f=\Tr\left[U_m\mathcal{F}_m^\dagger\right]$. Decomposing the environment tensor then yields the optimal gate replacement:

\begin{informalalg} \textbf{ Unitary Procrustes optimisation.} After circuit initialisation, perform $n_s$ sweeps over the whole circuit. Each sweep visits the gates in circuit order, $m=1,\dots,M$, and for each gate: (I) computes $\mathcal{F}_m$ (Equation~\ref{eq:environment}); (II) takes the singular value decomposition $\mathcal{F}_m=W\Sigma V^\dagger$ and sets $U_{\mathrm{new},m}=WV^\dagger$, the unitary that maximises $|\Tr[U^\dagger\mathcal{F}_m]|$~\cite{rudolph2023decomposition}; and (III) replaces $U_m\leftarrow U_{\mathrm{new},m}$.
\label{alg:optimisation}
\end{informalalg}

Where we optimise to convergence, the circuit is first constructed in full and the sweeps of Algorithm~\ref{alg:optimisation} are then repeated until the infidelity $\varepsilon=1-f$ of Equation~\ref{eq:infidelity} stops improving. We measure $\varepsilon$ at a sequence of sweep counts and stop once each of the last three measurement intervals has lowered it by less than a fraction $\delta=10^{-7}$ of its value at the start of that interval, per sweep in the interval. We optimise the whole circuit for a fixed classical budget (Figure \ref{fig:fig1-efficacy}) number of sweeps (Figure \ref{fig:fig2-randomMPS}) to aid comparison between protocols, but note that the most performant protocol in Ref.~\cite{rudolph2023decomposition} re-optimises at the application of each layer.

The environment $\mathcal{F}_m$ is computed as the overlap of two MPS, the ket $U_{m-1}\cdots U_1\ket{0}^{\otimes N}$ and the bra $U_{m+1}^\dagger\cdots U_M^\dagger\ket{\psi_{\mathrm{target}}}$, contracted over every site except the two on which $U_m$ acts; each gate application is followed by an SVD truncation of the bond dimension to $\chi_w$, twice the bond dimension of the target, so both sides stay in MPS form.

Contracting the two MPS from both ends towards the sites of $U_m$ gives the $4\times4$ matrix $\mathcal{F}_m$ at cost $\mathcal{O}(N\chi_w^3)$. A sweep visits the gates in circuit order, so the ket advances by one gate application per step, using the gate just updated, while the bras for every $m$ are computed once at the start of the sweep and cached. A sweep over $M$ gates therefore costs $\mathcal{O}(M)$ gate applications and $\mathcal{O}(M)$ environment contractions, at the price of storing $M$ MPS; a gate of range $|i-j|$ is applied through a SWAP chain and costs $\mathcal{O}(|i-j|\,\chi_w^3)$.

\subsubsection{Warm-started Procrustes optimisation}
Beyond the removable correlation $R_{ij}$ of Section~\ref{sec:protocol}, the Procrustes
step itself supplies a gate-selection score. Let $C$ be the current
preparation circuit and $\ket{w}=C^\dagger\ket{\psi_{\mathrm{target}}}$ the
residual state that ASPIRE disentangles. Consider a new gate acting on $(i,j)$ through $U^\dagger_{ij}$, so that the circuit becomes $CU_{ij}$, and its overlap with the
target is
$\bra{0^{\otimes N}}U_{ij}^\dagger\ket{w}=\Tr\bigl[U_{ij}^\dagger\mathcal{F}_{ij}\bigr]$,
with environment
$\mathcal{F}_{ij}=\Tr_{\overline{ij}}\bigl[\ket{w}\bra{0^{\otimes N}}\bigr]$.
With this, define
\begin{equation}
  P(i,j)\ =\ \max_{U_{ij}}\bigl|\bra{0^{\otimes N}} U_{ij}^{\dagger}
  C^{\dagger}\ket{\psi_{\mathrm{target}}}\bigr|
  \ =\ \lVert\mathcal{F}_{ij}\rVert_{1},
  \label{eq:procrustes-score}
\end{equation}
the largest overlap any two-qubit gate on that support can reach.
This is step~(II) of Algorithm~\ref{alg:optimisation} in closed form, where the maximum of
$|\mathbf{Tr}[U^{\dagger}\mathcal{F}_{ij}]|$ over $\mathrm{SU}(4)$ is the sum of the singular values of $\mathcal{F}_{ij}$, and since $\bra{0^{\otimes N}}$ is a product state $\mathcal{F}_{ij}$ has rank one, so that sum is a single singular value and no sweep is required. $P$ therefore ranks candidate pairs by the fidelity attainable on that support, and is exact.

Since $P$ is an overlap with $\ket{0\cdots0}$, it ranks candidates only once the residual has weight there, and ties identically when symmetry forbids it. In the $S_z=0$ Heisenberg sector every configuration within one flipped pair of all-up is excluded, so all $N(N-1)/2$ candidates vanish and the selector emits no gate. To avoid this one can adopt a selection schedule where calculating $P$ begins only after a specified number of ASPIRE warm-start layers have been applied. This then creates an optimisation protocol where after the warm-start layers the optimisation effort is built directly into the gate-selection for the final sequence of gates. 

We tested this on our $4\times 3$ TFIM $h=1$ and Heisenberg ground state, and random $N=12$ MPS systems with $\chi=64$ and $33$ two-qubit gates. Sweeping the number $k$ of warm-start layers, the best-$k$ arm improves on Algorithm~\ref{alg:optimisation} at matched gate count and matched contraction count on two of the three targets, by $1.52\times$ (TFIM) and $1.97\times$ (Heisenberg) reduction in infidelity, and ties on the random MPS at $0.98\times$. For the Heisenberg system we found optimal performance at $5$ layers, $2$ layers for the random MPS, and $0$ for the TFIM system. The TFIM residual already has $f=0.199$ at the first layer, so the residual already overlaps with the zero state and no warm-up is needed $(k=0)$. The gate itself is a single Procrustes coordinate step and not a new construction, and here we build it directly into the selection rule with free support.

\subsubsection{Energy-refinement optimisation of ground states}\label{sec:energy_refinement}
Thus far, Section~\ref{sec:optimisation} has been concerned with the preparation of a given target state $\ket{\psi_{\text{target}}}$, which has a tractable MPS representation. For end-to-end preparation of Hamiltonian ground states, the true ground state $\ket{\psi_{\text{exact}}}$ is typically unknown, and may require an exponentially-large bond dimension to represent exactly.
Therefore, one typically starts by approximating the ground state $\ket{\psi_{\text{exact}}}$ by an MPS $\ket{\psi_{\text{MPS}}}$ with DMRG~\cite{white1992density,white1993density,schollwock2011density}. 
Even in cases where $\ket{\psi_{\text{exact}}}$ requires an exponentially-large bond dimension, a tractably sized MPS can yield a high-probability guide state for QPE~\cite{berry2025rapid} with good but imperfect overlap; improving convergence or increasing the bond dimension of the MPS can enhance this overlap.
Once the MPS approximation has been computed, in MPD methods one then attempts to approximately prepare $\ket{\psi_{\text{exact}}}$ by preparing a circuit $C$ that maximises the fidelity
\begin{equation}
f_{\text{MPS}}\equiv\bigl|\bra{\psi_{\text{MPS}}}C\ket{0}^{\otimes N}\bigr|,
\end{equation}
whether by increasing layer count~\cite{ran2020encoding} or by the unitary Procrustes optimisation outlined in Section~\ref{sec:procrustes} \cite{rudolph2023decomposition}. As noted therein, this fidelity, and its associated environment tensors~\eqref{eq:environment}, are efficiently computable by MPS-MPS contraction. The better the DMRG approximation $\ket{\psi_{\text{MPS}}}$ is to the true ground state $\ket{\psi_{\text{exact}}}$, the more successful this strategy will be, however, the DMRG overlap $|\bra{\psi_{\text{exact}}}\ket{\psi_{\text{MPS}}}|$ bottlenecks the end-to-end success of state preparation; one ideally would wish to optimise with respect to the circuit's fidelity to the exact ground state
\begin{equation}
f_{\text{exact}}\equiv\bigl|\bra{\psi_{\text{exact}}}C\ket{0}^{\otimes N}\bigr|,
\end{equation}
which cannot be efficiently computed if the true ground state $\ket{\psi_{\text{exact}}}$ has an exponentially-large bond dimension, as is
expected for systems where QPE could offer an advantage. Nonetheless, we can attempt to improve $f_{\text{exact}}$ indirectly by optimising with respect to quantities that are efficiently computable. For ground-state preparation workflows, once Algorithm~\ref{alg:optimisation} has been run to convergence, further improvements in $f_{\text{exact}}$ can be sought by switching the optimisation objective from maximising the overlap with the approximate MPS to minimising the energy with respect to the true Hamiltonian:
\begin{equation}
  E[C]\equiv\bra{0}^{\otimes N}C^{\dagger}HC\ket{0}^{\otimes N}
  \label{eq:energy-objective}
\end{equation}
which, since an efficient MPO approximation
$H\leftrightarrow\begin{array}{c}\begin{tikzpicture}[baseline=-0.55ex]
  \Vertex[x=0,    y=0,size=0.16,color=violet!30,style={line width=1pt}]{ta}
  \Vertex[x=0.45, y=0,size=0.16,color=violet!30,style={line width=1pt}]{tb}
  \Vertex[x=1.3, y=0,size=0.16,color=violet!30,style={line width=1pt}]{tc}
  \foreach \n in {ta,tb,tc}{%
    \draw[line width=1pt] (\n) -- ++(0.15,0.15);
    \draw[line width=1pt] (\n) -- ++(-0.15,-0.15);}
  \Edge[lw=0.55](ta)(tb)
  \Edge[lw=0.55,label=$\cdots$](tb)(tc)
\end{tikzpicture}\end{array}$ is required to perform DMRG in the first place, can be computed efficiently as
\begin{equation}
E[C] =
\begin{array}{c}
\begin{tikzpicture}
% ket (top) row: MPS of C|0>^{\otimes N}; sites 1, 2, N
\Vertex[x=0,y=0.3,size=0.2,color=blue!30]{K1}
\Vertex[x=0.5,y=0.3,size=0.2,color=blue!30]{K2}
\Vertex[x=1.4,y=0.3,size=0.2,color=blue!30]{KN}
% MPO (middle) row: H
\Vertex[x=0,y=0,size=0.2,color=violet!35]{W1}
\Vertex[x=0.5,y=0,size=0.2,color=violet!35]{W2}
\Vertex[x=1.4,y=0,size=0.2,color=violet!35]{WN}
% bra (bottom) row: conjugate MPS
\Vertex[x=0,y=-0.3,size=0.2,color=red!30]{B1}
\Vertex[x=0.5,y=-0.3,size=0.2,color=red!30]{B2}
\Vertex[x=1.4,y=-0.3,size=0.2,color=red!30]{BN}
% physical legs: ket -> MPO input, MPO output -> bra
\Edge[lw=1](K1)(W1)
\Edge[lw=1](K2)(W2)
\Edge[lw=1](KN)(WN)
\Edge[lw=1](W1)(B1)
\Edge[lw=1](W2)(B2)
\Edge[lw=1](WN)(BN)
% ket-row bonds (elided gaps use \cdots)
\Edge[lw=1](K1)(K2)
\Edge[lw=1,label=$\cdots$](K2)(KN)
% MPO-row bonds
\Edge[lw=1](W1)(W2)
\Edge[lw=1,label=$\cdots$](W2)(WN)
% bra-row bonds
\Edge[lw=1](B1)(B2)
\Edge[lw=1,label=$\cdots$](B2)(BN)
\end{tikzpicture}
\end{array},
\label{eqn:energy_diagram_form}
\end{equation}
via standard MPO-MPS contraction methods~\cite{schollwock2011density}, where $C\ket{0}^{\otimes N}\leftrightarrow\begin{array}{c}\begin{tikzpicture}[baseline=-0.55ex]
  \Vertex[x=0,    y=0,size=0.16,color=blue!30,style={line width=1pt}]{ta}
  \Vertex[x=0.45, y=0,size=0.16,color=blue!30,style={line width=1pt}]{tb}
  \Vertex[x=1.3, y=0,size=0.16,color=blue!30,style={line width=1pt}]{tc}
  \foreach \n in {ta,tb,tc}{%
    %\draw[line width=1pt] (\n) -- ++(0.15,0.15);
    \draw[line width=1pt] (\n) -- ++(0,-0.2);}
  \Edge[lw=0.55](ta)(tb)
  \Edge[lw=0.55,label=$\cdots$](tb)(tc)
\end{tikzpicture}\end{array}$ and $\bra{0}^{\otimes N}C^\dagger\leftrightarrow\begin{array}{c}\begin{tikzpicture}[baseline=-0.55ex]
  \Vertex[x=0,    y=0,size=0.16,color=red!30,style={line width=1pt}]{ta}
  \Vertex[x=0.45, y=0,size=0.16,color=red!30,style={line width=1pt}]{tb}
  \Vertex[x=1.3, y=0,size=0.16,color=red!30,style={line width=1pt}]{tc}
  \foreach \n in {ta,tb,tc}{%
    %\draw[line width=1pt] (\n) -- ++(0.15,0.15);
    \draw[line width=1pt] (\n) -- ++(0,0.2);}
  \Edge[lw=0.55](ta)(tb)
  \Edge[lw=0.55,label=$\cdots$](tb)(tc)
\end{tikzpicture}\end{array}$. The environment tensors for each gate in the circuit can be computed similarly. A small complication arises in the fact that while the objective in the fidelity-based optimisation of Section~\ref{sec:procrustes} is linear in the gate tensors, allowing it to be posed as a standard unitary Procrustes problem, this objective is quadratic in the gate tensors. We overcome this by borrowing the linearisation procedure from the Evenbly-Vidal algorithm for MERA optimisation~\cite{evenbly2009algorithms}, temporarily regarding the bra and ket copies of the gate as independent and alternating between optimising each copy while keeping its dual fixed. This protocol requires only additional classical optimisation, and does not alter circuit structure or depth.

\begin{figure}
    \centering
    \includegraphics[width=\linewidth]{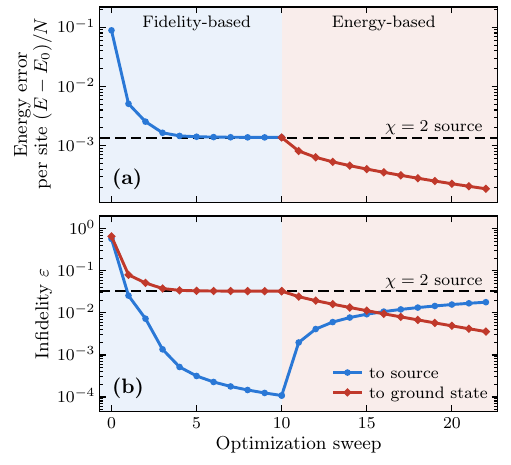}
    \caption{\textbf{Toy model of the energy refinement.} A $N=8$ critical TFIM ground state is prepared with $21$ two-qubit gates via the unitary disentangler protocol from a deliberately crude $\chi=2$ single-sweep DMRG state. Comparing against $E_0$ from exact diagonalisation \textbf{(a)} Energy error per site, relative to $E_0$ from exact diagonalisation, against optimisation sweep; at sweep 10 the protocol switches from Algorithm~\ref{alg:optimisation} to energy refinement. \textbf{(b)} shows the infidelity with respect to the true ground state (red) and the DMRG state (blue), with the same optimisation switch halfway.}
    \label{fig:toy-energy}
\end{figure}

Although minimising $E[C]$ is only a heuristic to improve $f_{\text{exact}}$, our numerical tests have shown that it succeeded in all cases we
tested . We note that this can only be confirmed at systems small enough to allow for exact state-vector-based computation of $f_{\text{exact}}$; for larger systems this could be confirmed post-hoc via e.g. verification of the QPE probability distribution. We present an illustrative toy example of the method in Figure~\ref{fig:toy-energy}: a small MPS approximating a TFIM ground state is first optimised with the fidelity-based methods of Section~\ref{sec:procrustes}, and then refined with the energy-based methods of this subsection. We observe that upon the switch to energy-based methods, the energy $E[C]$ decreases, the MPS fidelity $f_{\text{MPS}}$ \emph{worsens}, but the exact fidelity $f_{\text{exact}}$ improves as a consequence, showing that energy-based refinement can lead to better end-to-end state preparation even when the optimisation objectives typically used in the literature worsen. We note for completeness that
energy refinement of circuit tensor networks is established ~\cite{haghshenas2022variational, lin2021real} and is state-preparation protocol-agnostic; Figure~\ref{fig:toy-energy} uses MPD circuits, and nothing in the procedure is specific to either selection rule. We include it as post-processing available to both arms rather than as a component of ASPIRE.

\section{Results} \label{sec:results}

\subsection{Two-dimensional lattice Hamiltonians}
\label{subsec:lattice}
\emph{Conventions.} Every transverse-field Ising target in this work is the
ground state of
\begin{equation}
  H=-J\sum_{\langle ab\rangle}\frac{\sigma^{z}_{a}}{2}\frac{\sigma^{z}_{b}}{2}
   -h\sum_{a}\frac{\sigma^{x}_{a}}{2},
  \qquad J=1,
  \label{eq:tfim}
\end{equation}
on the stated lattice with open boundaries, so $g=2h$ in the Pauli convention: the
$h=2$ used below sits above the square-lattice critical point
$h_c=g_c/2\approx1.52$ ($g_c\approx3.044$~\cite{blote2002cluster}), in the
paramagnetic phase, while $h=0.5$ is the critical chain of
Figure~\ref{fig:toy-energy}. The $4\times3$ warm-start test of the preceding
section is the one target taken at $h=1$, that is $g=2$, on the ordered side of
$g_c$; these three fields are the only ones this work reports. Heisenberg targets are the
isotropic antiferromagnet:

\begin{equation}
   H=J\sum_{\langle ab\rangle}\frac{\vec\sigma_a}{2}\cdot\frac{\vec\sigma_b}{2}, \quad J=1.
   \label{eq:heisen}
\end{equation}
Ground states are obtained by DMRG on the snake-ordered chain. 
Their construction is summarised in Appendix~\ref{app:targets}.

The clearest demonstration of the advantage of
long-ranged gates is provided by Hamiltonians with long-ranged correlations that
the protocol can exploit \cite{chan2011density,baiardi2020density,menczer2024tensor,menczer2024parallel,xiang2024distributed,legeza2025orbital,white2007neel,yan2011spin,stoudenmire2012studying,zheng2017stripe,zhou2020limits}.Mapping two-dimensional lattices to an MPS is a clear example of this, where the snake mapping picks out which lattice sites become nearest-neighbours. Indeed the majority of the direct connections present in a 2D lattice will be lost when certain snake-mappings are applied. The MPD is limited to targeting these nearest-neighbour connections in the MPS, whereas ASPIRE selects gates for all the possible pairs of sites, thereby directly targeting the long-range interactions.

Figure~\ref{fig:fig1-efficacy} \textbf{(a)} compares the two protocols preparing the $10\times10$ 2D TFIM ground state, on a common fault-tolerant axis measuring T-gate count up to $7 \times 10^5$. The MPD's infidelity struggles to push past $1.24\times10^{-1}$ with its unoptimised state-preparation circuit, whereas ASPIRE continues to descend to $1.61\times10^{-2}$, an improvement factor of $7.7$ at equal $T$-count. The mechanism is geometric: under an $L_x$-wide snake the lattice's vertical bonds acquire ranges $1,3,\dots,2L_x-1$, almost every one of which is outside the staircase's reach and inside ASPIRE's. The solid lines add the optimisation of Algorithm~\ref{alg:optimisation} at matched classical cost. Each data-point is refined with $5\times10^{12}$ operations, determined from charging $\chi^3$ for each two-site operation accounting for the different bond-dimensions used at different sites and for different protocols. Since a gate range of $\ell$ costs $2\ell-1$ operations, ASPIRE affords fewer sweeps than MPD, and both protocols afford fewer sweeps the deeper the initial circuit was. ASPIRE uses $15$ sweeps at $156$ two-qubit gates, and one at $999$ gates, stopping there. MPD improves by a factor of $4.1$ at $1548$ gates, and yet unoptimised ASPIRE achieves $1.7$ to $2.0$ times lower infidelity at the deepest points. Optimised MPD is not monotone due to the decreasing number of sweeps going $15,\ 12, \ 10$ for its last three data-points, which we highlight by increasing the transparency of the last point. This is a result of the constant classical budget, as at a fixed number of sweeps, the deeper circuits reach lower infidelities and vice-versa.

In Figure~\ref{fig:fig1-efficacy}\textbf{(b)} the advantage factor defined as the infidelity ratio of unoptimised MPD and ASPIRE infidelities is tracked for TFIM lattices with sizes $5\times5-10\times10$. The ratio is calculated at the point of attaining $1.61 \times 10^{-2}$ infidelity which is the final infidelity of ASPIRE for the $N=100$ lattice. ASPIRE
reaches this accuracy at gate counts equal to $2,4,7,12,24$ MPD layers for the
respective sizes, and the MPD is afforded the same budget, and so the gate count is not constant but rises with system size as expected. We observe that the advantage rises linearly in the MPS length, but note that this was sensitive to the number of gates used for each datapoint and was calculated from the unoptimised circuits.

\begin{figure}
    \centering
    \includegraphics[width=\linewidth]{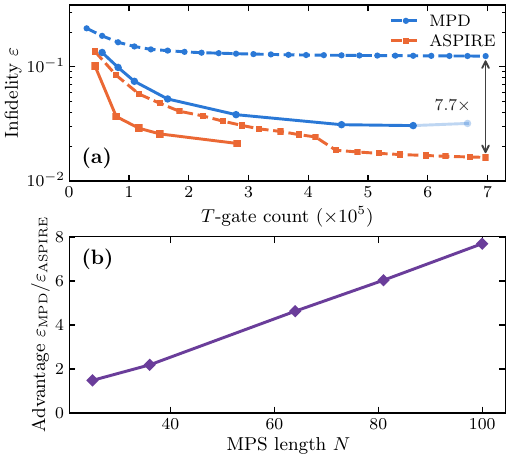}
    \caption{\textbf{2D TFIM lattice ground-state preparation with MPD and ASPIRE.} \textbf{(a)} The MPD (blue) versus ASPIRE (orange) preparing the $10\times10$ 2D TFIM ground state ($N=100$ sites, $h=2$), both gate-matched at $2376$ two-qubit gates and cast onto a fault-tolerant $T$-gate count with the \texttt{TRASYN} synthesis model~\cite{hao2026reducing}, the specialised $U_3$ synthesis of Ref.~\cite{gibbs2026low}, total synthesis error $10^{-3}$. Dashed lines are the unoptimised constructions. Solid lines refine the circuit at each marked gate count with Algorithm~\ref{alg:optimisation}, each point separately and with the same budget of $5\times10^{12}$ bond-resolved $\chi^3$ operations for both protocols. ASPIRE's solid line ends at $999$ two-qubit gates, beyond which the budget buys less than one sweep of its circuit. The MPD's solid line is faded where its optimised infidelity rises with depth, which reflects fewer affordable sweeps rather than the circuit getting worse. \textbf{(b)} the same comparison at fixed ASPIRE infidelity of $1.61 \times 10^{-2}$, the highest quality every system attains. ASPIRE requires a gate-count of  $2$, $4$, $7$, $12$ and $24$ MPD-layers at $N = 25$, $36$, $49$, $64$, $81$ and $100$ respectively. The advantage factor is the ratio of the MPD's infidelity to ASPIRE's, both read at the two-qubit gate count at which ASPIRE first reaches $1.61\times10^{-2}$, with the MPD's budget rounded up to a whole number of layers.}
    \label{fig:fig1-efficacy}
\end{figure}

\subsection{Optimisation comparison}

The inclusion of optimisation by way of Algorithm~\ref{alg:optimisation} introduces another axis of comparison. This comparison must account both for the efficacy and classical computational cost of the optimisation when applied to the two protocols. In a matrix-product implementation a gate of range $\ell$ is realised as $2\ell-1$ elementary two-site operations, and a refinement sweep visits each gate with both a forward and a backward environment, so a sweep over a circuit of $M$ gates costs $2\sum_{g=1}^{M}\bigl(2\ell_g-1\bigr)=4\sum_g|i_g-j_g|-2M$ such operations, where $\ell_g=|i_g-j_g|$ is the range of gate $g$.

In Figure~\ref{fig:fig2-randomMPS} we compare the effect of optimisation to protocol convergence \textbf{(a-c)} for a 2D Heisenberg ground state, a 2D TFIM ground state, and a random MPS as control. The experiment is deliberately under-parameterised (as it would  otherwise perfectly prepare the state), and each circuit is constructed in full and then we attempt to optimise to convergence, as defined after Algorithm~\ref{alg:optimisation}. We pick rectangular $2\times \frac{N}{2}$ lattices snake mapped in two straight lines to showcase how the resultant long-ranged entanglement is better exploited by the ASPIRE protocol also when optimisation is applied. As a control we also test on random MPS of the same size, in which no such clear structure is present. For these examples we chose an MPS ordering adversarial to MPD by snake-mapping along the longer side of the 2D rectangular lattices. This places precisely the long-range correlations that our algorithm targets, and does not claim to be the ordering a real implementation of MPD would use on these systems. Instead this demonstrates the general point that in regimes with long ranged correlations and a limited number of MPD layers, ASPIRE can attain lower infidelities both from its circuit construction and from the richness of the optimisation space it affords to Algorithm~\ref{alg:optimisation}. For a comparison with the MPS ordering as favourable as possible to the MPD, we refer the reader to the square lattice of Figure~\ref{fig:fig1-efficacy}. Figure~\ref{fig:fig2-randomMPS} \textbf{(a-b)} shows that optimised ASPIRE converges to a lower infidelity floor when compared to a gate-matched optimised MPD with $2$ to $5$ layers for the $7\times2$ Heisenberg and TFIM ground states, by a factor that grows with depth from $2.8$ to $9.8$ on the Heisenberg lattice, and is $54$ at two layers and at least $500$ beyond on the TFIM (where the ASPIRE datapoints hit the sweep cap instead of converging). In the random MPS control of \textbf{(c)} the MPD leads at two to four layers and the two protocols tie at five, likely due to the lack of long-ranged structure for ASPIRE to exploit. This supports the geometric interpretation as the MPS carries no lattice scale for increased connectivity radius to exploit. From three layers up, optimisation widens the gap between MPD and ASPIRE on both structured lattices.

\begin{figure}
    \centering
    \includegraphics[width=\columnwidth]{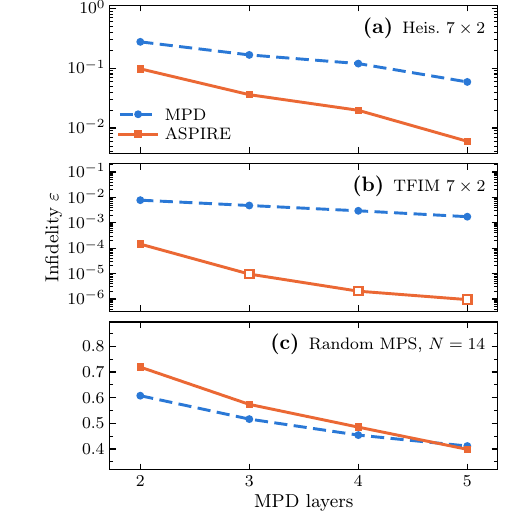}
    \caption{\textbf{Optimisation effectiveness of MPD and ASPIRE.} The infidelity of the protocols optimised to convergence at a gate budget of $2$ to $5$ MPD layers is compared for the $7\times2$ Heisenberg \textbf{(a)} and TFIM \textbf{(b)} ground states, and the $N=14$ random MPS with $\chi=64$ \textbf{(c)}, the last a geometric mean over three seeds. Hollow markers stopped at the optimiser's cap of $3\times10^{4}$ sweeps and bound the converged infidelity from above.}
    \label{fig:fig2-randomMPS}
\end{figure}

\subsection{Connectivity}
\label{subsec:connectivity}

To demonstrate the effect of increased connectivity in the hardware, one can restrict ASPIRE to a varying connectivity radius $r\geq|i-j| \ \forall (i,j)$ MPS sites that ASPIRE connects with a gate. As demonstrated, unrestricted ASPIRE can outperform the MPD for certain systems, and varying $r$ shows how this advantage changes with increased connectivity. We fix one definition of the onset radius and use it in every panel: the smallest radius at which the median over seeds of ASPIRE's per-seed advantage over a gate-matched MPD is positive, at the budget stated for that panel. The median rather than the mean, because a small and reproducible fraction of seeds stalls in a greedy local optimum and holds the mean negative at any seed count. ASPIRE's performance is observed to increase with radius, but the monotonicity is not guaranteed as a larger move set does not guarantee a better greedy trajectory at fixed budget, and Appendix~\ref{app:tau} records a related case in which the apparently better-motivated ranking statistic loses. The onset indicates how much connectivity the hardware platform must be able to realise to naively utilise ASPIRE's advantage. In practice, ASPIRE can still be advantageous without the full connectivity required, if the benefit of out-of-range connections outweighs the cost of the SWAP gates to bring them within hardware range. Furthermore, one can track the onset radius as a function of system size, where sublinear growth of the onset radius with N would show that ASPIRE outperforms the MPD with less than all-to-all connectivity. Lastly, tracking the onset radius as a function of bond dimension of the initial MPS allows one to investigate if the longer-ranged interactions of a higher-$\chi$ MPS correspond to a larger onset radius.

In Figure~\ref{fig:fig3-connectivity-spectra} such a connectivity analysis is performed for the 2D Heisenberg and  TFIM systems. In \textbf{(a)} the $N=12$ spin ground states are prepared by the MPD and by ASPIRE at a range of connectivity
radii; the onset radius is $6$ for the Heisenberg target and $r=5$ for TFIM \textbf{(b)},  by tracking the infidelity as a function of two-qubit gates. Performance is seen to improve with radius throughout, though not strictly monotonically due to the noisy performance of the greedy descent. This demonstrates that about half of the all-to-all radius was sufficient for ASPIRE to achieve a lower infidelity at fixed gate-count than MPD.

Continuing the investigation, Figure~\ref{fig:fig3-connectivity-trends} tracks the onset radius, defined as the smallest radius at which ASPIRE beats MPD, for the 2D Heisenberg at three different lattice lengths and the random MPS. In \textbf{(a)} the system size for all systems is varied and the onset radius is measured as the median over $8$--$16$ seeds, with the $4\times5$ Heisenberg system requiring a radius of $8$ and the $N=16$ random MPS $r=7$, in every case well below the all-to-all bound $r=N-1$. The $3\times5$ TFIM ($r=4$) and $N=16$ random MPS ($r=7$) controls follow the same pattern. Also shown, as a dashed rule per strip width, is $2L_x-1$: the longest range a gate set needs in order to reach every lattice bond once the lattice is snake-mapped to the MPS representation.

This introduces three points. Firstly, under an $L_x$-wide snake the lattice's vertical bonds have ranges $1,3,\dots,2L_x-1$, so the radius at which reach stops buying anything should be set by the strip width and not by $N$ or by $\chi$. The measured onsets are close to consistent with that: the $L_x=3$ Heisenberg series of Figure~\ref{fig:fig3-connectivity-trends}\textbf{(a)} runs $4,4,5,5$ over $N=9,\dots,18$ against $2L_x-1=5$, the $L_x=4$ series runs $5,5,6,8$ against $2L_x-1=7$, and the $L_x=5$ series runs $8,8$ at $N=15,20$ against $2L_x-1=9$. Every onset sits within two radii of its ceiling, and the only point above it is the $L_x=4$ lattice at $N=20$, at $8$ against $7$. The lack of perfect adherence can likely be attributed to the inconsistent performance of the greedy algorithm, and the imperfections of the entanglement metric used discussed in Section~\ref{sec:conclusion}. The $\chi$ sweep of \textbf{(b)}, which holds the width fixed at $L_x=4$, is flat at $7$--$9$. The random matrix product state, which has no such scale, drifts with $N$ ($3,5,7$, not shown in Figure~\ref{fig:fig3-connectivity-trends}\textbf{(a)}) and with $\chi$ in \textbf{(b)}.

Secondly, restricted to nearest neighbours, ASPIRE performs markedly worse than
the MPD, with $4.5\times10^{-1}$ against $1.7\times10^{-1}$ on the Heisenberg target and $9.5\times10^{-3}$ against $2.0\times10^{-3}$ on TFIM.  The two protocols differ in reach and also in objective, and at $r=1$ only the second difference is active. An explanation for this is that greedy descent on single-site entropies stalls on pairs whose residual correlation is classical in the optimal product basis (Corollary~\ref{cor:sat} Appendix~\ref{app:gate}), whereas the disentangler's truncation objective still sees bond entanglement. Increasing the reach gives ASPIRE pairs
on which the removable correlation is not saturated, which is why the ordering
reverses once $r$ passes the onset

Lastly, the different panels use different notions regarding seeds. For the random matrix product states a seed draws a genuinely
different target. For the lattice ground states it does not: at $N=12$ and $\chi=64$
the DMRG state is exact, so the seeds vary the
tie-break among exactly degenerate candidate scores. The $4\times3$ Heisenberg
lattice's symmetry makes the top four candidate pairs exactly tied, and the
two-site reduced state carries an $SU(2)$ triplet degeneracy, so the spread on
those series is tie-break variance and not a physical error bar. These effects are not negligible and likely explain why the onsets in Figure~\ref{fig:fig3-connectivity-trends} do not vary smoothly from one sweep point to the next.

\begin{figure}
    \centering
    \includegraphics[width=1.0\columnwidth]{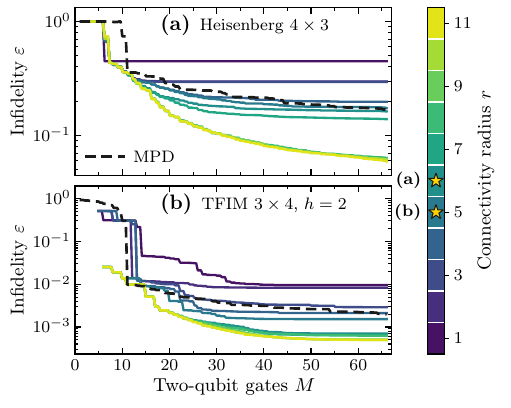}
    \caption{\textbf{Increasing ASPIRE's connectivity radius.} The infidelity over the two-qubit gate count for the $4\times3$ Heisenberg \textbf{(a)} and $3\times4$ TFIM ($h=2$) \textbf{(b)} ground states snake-mapped to MPS, with ASPIRE at connectivity radius $r=1,\dots,11$ (colour coded) against the MPD staircase (black, dashed) at the same $66$-gate budget, averaged over $16$ seeds. The star on the colour bar marks the radius at which ASPIRE first beats the MPD on final infidelity: $r=6$ for the Heisenberg target, $r=5$ for TFIM. A shared budget of $66$ gates was applied to all protocols without optimisation.}
    \label{fig:fig3-connectivity-spectra}
\end{figure}

\begin{figure}
    \centering
    \includegraphics[width=1.0\columnwidth]{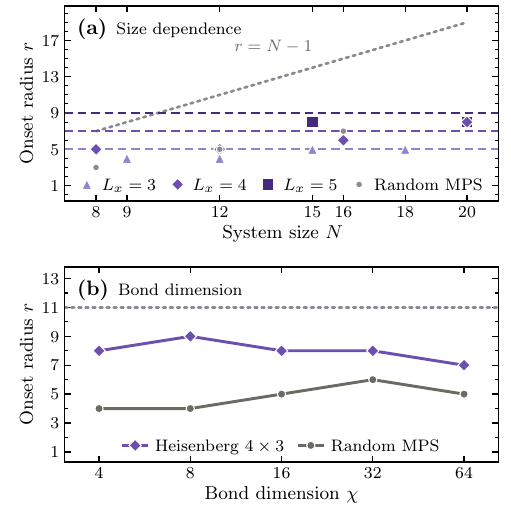}
    \caption{\textbf{First ASPIRE radius to beat MPD as a function of $N$ or $\chi$.} \textbf{(a)} the first radius at which ASPIRE reaches per-seed median advantage over MPD is shown against the system size, for three fixed-width Heisenberg strips ($L_x=5,4,3$, $8$--$16$ seeds per series); the all-to-all limit $r=N-1$ is dotted. The dashed rule in each series' colour marks that strip's $2L_x-1$, the reach needed to cover every bond of an $L_x$-wide snake; it is a prediction, not a fit, and the $L_x=4$ strip at $N=20$ is the one onset above its own rule. The TFIM ($L_x=3$) and random-MPS ($\chi=8$) controls at $N=18$/$16$ follow the same trend. In \textbf{(b)} the onset radius is shown against the target bond dimension at $N=12$, for the Heisenberg snake and for random matrix-product states ($32$ seeds each). All runs were afforded a budget of $6$ \textbf{(a)} and $8$ \textbf{(b)} MPD layers for their gate-count without optimisation.}
    \label{fig:fig3-connectivity-trends}
\end{figure}

\subsection{Routing and gate noise}
\label{subsec:noise}

In the case when the system-to-MPS mapping contains MPS bonds between sites that are not connected on the hardware, the MPD can only be realised by paying for a SWAP network. ASPIRE can instead encode the hardware connectivity into its connectivity matrix, such that it only selects two-qubit gates that are realisable in the hardware. One can also run ASPIRE at full-range and use SWAP networks to realise any generated gates that are beyond the hardware's capacity. In Figure~\ref{fig:decomposition} of Appendix~\ref{app:connectivity} this was done on $N=12$ systems of the Heisenberg, TFIM, and random MPS as before, simulating the most adversarial case of hardware only capable of realising nearest-neighbour connections. The result was a tie with MPD for the spin systems and a slight loss in the random case, all redeemed with more hardware connectivity.

We take the heavy-hex lattice found in some leading superconducting quantum processors as an example of the MPD to hardware gates mismatch problem \cite{Chamberland2020topological}, which places physical qubits on both the vertices and edges of a hexagonal tiling, reducing the maximum vertex degree to at most 3, see Figure~\ref{fig:overview} \textbf{(d)}. This topology has two key features: firstly it is an example of a hardware architecture that realises slightly more than nearest-neighbour connectivity in hexagon corners, and secondly it does not permit a snake-mapping of all sites without "jumping" outside of the lattice to continue when realised on more than 2 hexagons. The first feature makes it ideal for exploring the efficacy of ASPIRE in the connectivity regime that is richer than nearest-neighbour but still far from all-to-all, and the second feature means that the MPD must pay a penalty of SWAP gates whenever an MPS bond is not realised natively in the hardware. 

The current largest heavy-hex device is that of Ref.~\cite{patra2024efficient}, with 1121 qubits. A $i\times j$ heavy-hex graph has $n=5ij+4(i+j)-1$ qubits and longest embeddable line $l_{\max}=4(ij+i+j)+1$, i.e.\ $4n/5$ to leading order \cite{weidenfeller2022scaling}. The lattice is bipartite between its $V=2(ij+i+j)$ vertex qubits and its $n-V$ edge qubits. An ordering broken by $J$ jumps is $J+1$ simple paths, and a path
alternating the two classes contains at most one more edge qubit than corner
qubits, so $n-V\le V+J+1$ and hence
\begin{equation}
  J\ \ge\ n-2V-1\ =\ ij-2\ =\ n-l_{\max}.
  \label{eq:jumps}
\end{equation}
Each jump costs at least two SWAPs, and an MPD layer of $n-1$ emitted gates
plus $2J$ routing SWAPs carries a routing overhead of
\begin{equation}
  1+\frac{2J}{\,n-1\,}\ \xrightarrow[\ ij\to\infty\ ]{}\ 1.4 ,
  \label{eq:routing-overhead}
\end{equation}
using $J/(n-1)\to1/5$. The $N=30$ qubit patch would not realise this asymptotic limit by itself, but we adopt the asymptotic values as the present numerics and analysis are aimed at large-scale frontier quantum processors with up to $\mathcal{O}(10^3)$ physical qubits.

 When accounting for gate-noise, the advantage of our protocol at fixed gate-count generally increases. In the case of the MPS mapping costing the MPD additional SWAP gates this is increased further. ASPIRE has the ability to be selective in which gates it applies, applying fewer gates when the correlation they remove does not outweigh the noise-penalty of applying the gate. With each native two-qubit gate applying a depolarising channel of rate $p$, a natural admission rule accepts a gate on $(i,j)$ when the decrease it produces in the certificate of Theorem~\ref{thm:cert} exceeds the error it injects; since the certificate is an upper bound, this rule is a heuristic rather than a guarantee. This predicts an infidelity as a function of two-qubit gates that initially tracks the noiseless case but diverging to higher infidelities, with the application of each gate, and to a greater extent the stronger the noise. The application of two-qubit gates will lower the infidelity, albeit less than the noiseless case, until the infidelity reduction of a given gate is exceeded by the noise inherent in that gate, at which point an infidelity minimum will be reached and the infidelity will begin to trend upwards.

The kicked-Ising targets are the states of the kicked transverse-field Ising evolution studied on IBM's heavy-hex processors~\cite{kim2023evidence, tindall2024efficient}, run for two steps on the couplers of the device itself,
\begin{equation}
  \begin{aligned}
  \ket{\psi(\theta_h)}&=\bigl[U_{ZZ}\,U_X(\theta_h)\bigr]^{2}\ket{0}^{\otimes N},\\
  U_X(\theta_h)&=\prod_{a}e^{-i\theta_hX_a/2},\quad
  U_{ZZ}=\prod_{\langle ab\rangle}e^{i\pi Z_aZ_b/4},
  \end{aligned}
  \label{eq:kicked-ising}
\end{equation}
where $\langle ab\rangle$ runs over the couplers of the heavy-hex fragment. The kick angle $\theta_h$ interpolates between a product state at $\theta_h=0$ and a stabiliser state at $\theta_h=\pi/2$, where every gate of Equation~\ref{eq:kicked-ising} is Clifford. We use $\theta_h=0.4$ on the $N=30$ fragment and $\theta_h=0.6$ on the $12$-qubit cell of Appendix~\ref{app:noise}.

In Figure~\ref{fig:fig4-heavyhex} the infidelity attained on the $N=30$ heavy-hex lattice fragment as a function of two-qubit gates applied to prepare a kicked Ising target $(\theta_h=0.4)$ with the MPD and ASPIRE is shown. We model three noise levels $p=\{10^{-3}, 2\times10^{-4}, 10^{-4}\}$ (colour coded light to dark) alongside the noiseless case (dashed) with the MPD in blue and ASPIRE in orange. The lowest of these is at or beyond the best reported two-qubit error rates, and is a projection rather than a present-day operating point. To account for the average SWAP costs of the MPD we charge it $1.4\times$ its logical gate-count, and compare on the axis of native two-qubit gates applied.  In all three noise levels ASPIRE is able to reach both a lower infidelity floor and to get there in fewer gates, with weaker levels of noise leading to larger gate-count discrepancies and more gates to reach the minima. Panel \textbf{(b)} tests the noise-scaled selectivity proposed in Section~\ref{sec:conclusion} on the same patch. The channel is applied as a global depolarising channel, the only model an $N=30$ register allows. Appendix~\ref{app:noise} gives its closed form and checks it against exact density-matrix simulation of hardware-learned noise on a $12$-qubit cell (Figure~\ref{fig:noise-model-check}), where the global model is the pessimistic one for both protocols and changes the ratio of their best infidelities by at most $0.07$.

\begin{figure}
    \centering
    \includegraphics[width=\linewidth]{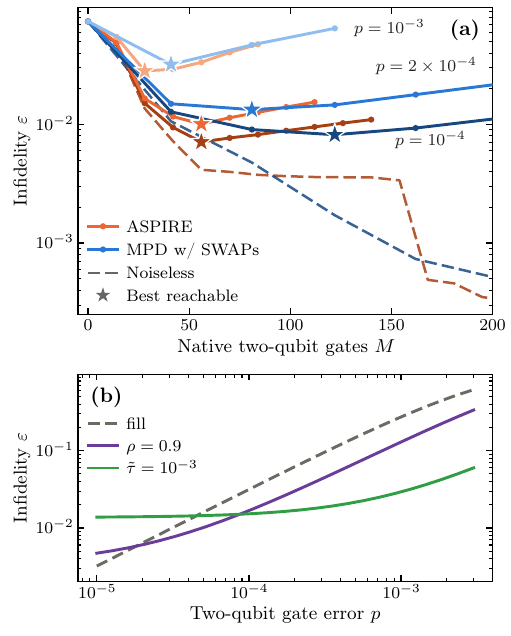}
    \caption{ \textbf{Heavy-hex kicked-Ising state preparation with gate noise.} \textbf{(a)} Infidelity against native two-qubit gates for a kicked-Ising target state ($\theta_h=0.4$) prepared on an $N=30$ heavy-hex fragment (three hexagons) at three gate error rates, each native gate applying a depolarising channel of rate  $p=\{10^{-3}, 2\times10^{-4}, 10^{-4}\}$ (colour coded light to dark), with the noiseless case shown in dashed lines. As in Figure~\ref{fig:overview}\textbf{(d)}, the MPS mapping leaves the MPD with chain-adjacent pairs the device does not couple and it can only realise them through SWAP sequences, whereas ASPIRE selects only from hardware-native gates. The MPD is charged $1.4 \times$ its logical gate count to account for the SWAPs, and every curve descends whilst additional gates remove more correlation than it injects error. Eventually an infidelity floor for a given noise level is reached and the infidelity starts rising, with the best reachable point marked with a star. \textbf{(b)} Infidelity at which each ASPIRE admission rule stops, against gate error $p$, on the same patch and noise model as \textbf{(a)} for three admission rules: fill (dashed, grey) takes every pair in the maximum-weight matching, relative $\rho=0.9$ (solid, purple) keeps a pair only if it scores $\geq0.9$ in its unrestricted layer, absolute $\tilde \tau=10^{-3}$ (solid, green)  keeps a pair only if it clears an absolute floor of $10^{-3}$ bits.}
    \label{fig:fig4-heavyhex}
\end{figure}

\subsection{Multivariate functions}
\label{subsec:quantics}

The Hamiltonian ground states studied thus far have an entanglement geometry inherited from the shape of their 2D lattice. However, matrix product states have recently proven to be an extremely versatile tool for representing and manipulating amplitude-encoded functions~\cite{holmes2020efficient,garcia2021quantum,gourianov2022quantum,ye2022quantum,shinaoka2023multiscale,melnikov2023quantum,chen2023quantum,iaconis2024quantum,bohun2026entanglement,lee2026quantum,connor2026tensor,niedermeier2026solving,sims2026multivariate}, thanks to provably low bond dimension for well-behaved functions~\cite{oseledets2010approximation,khoromskij2011quantics,holmes2020entanglement,bohun2026entanglement} and rank-revealing constructive methods based on tensor cross-interpolation (TCI)~\cite{nunez2022learning,ritter2024quantics,nunez2025learning}; this representation is typically referred to as the `quantics' representation in tensor-network literature. In the simplest case of a univariate function $f:[0,1) \to \mathbb{R}$, the function is discretised on a uniform grid of size $2^N$ at grid points $x_j = j/2^N$ for $j=0,\dots,2^N-1$, and encoded as the state
\begin{equation}
  \ket{\psi}=\sum_j f(x_j) \ket{j}, \quad x_j \equiv \sum_{b=1}^{N}\frac{s_{b}}{2^b} = \underbrace{0.s_{1} s_{2} \dots s_{N}}_{\text{binary representation}},\label{eqn:quantics_representation}
\end{equation}
where the basis states $\ket{j} \equiv \ket{s_{1} s_{2} \dots s_{N}}$ index the grid points in binary counting order.
A general bounded domain can be accommodated by a simple linear rescaling, and multivariate functions can be accommodated via the product of several such registers: a $D$-dimensional function $f(x_1,\dots,x_D)$ discretised on a $2^B$-point grid in each dimension, is encoded using one $B$-qubit register per variable as
\begin{equation}
\ket{\psi} = \sum_{j_1,\dots,j_D} f(x_{j_1},\dots,x_{j_D}) \ket{j_1}\otimes\cdots\otimes\ket{j_D} \in \mathcal{H}^{\otimes BD}.\label{eqn:multivariate_psi}
\end{equation}

Varying the number of variables $D$ or bits of precision $B$ offers two different ways of increasing the problem size, which may impact ASPIRE and MPD methods differently. The entanglement geometry of the resulting MPS will clearly depend on how the bits are ordered, which may vary depending on whether one prioritises minimising the range of intra- or inter-variable qubit correlations --- the standard orderings in the literature are the \emph{interleaved} and \emph{serial} representations~\cite{gourianov2022quantum,ye2022quantum,shinaoka2023multiscale,ritter2024quantics,nunez2025learning,lee2026quantum}. For variables $x,y,...$ the serial ordering $s^x_1, \cdots, s^x_B, s^y_1, \cdots, s^y_B$ keeps each variable's qubits contiguous, minimising intra-variable correlation range and placing cross-variable correlation at a range of ${\approx}B$ sites, e.g. for $D=3$ the MPS takes the form
\begin{equation}
\bra{\bm{s}^x,\bm{s}^y,\bm{s}^z}\ket{\psi} =
\begin{array}{c}
\begin{tikzpicture}
  \Vertex[x=0,   y=0,position=above,size=0.25,color=blue!30]{A11}
  \Vertex[x=1.2, y=0,position=above,size=0.25,color=blue!30]{A1N}
  \Vertex[x=2.1,y=0,position=above,size=0.25,color=orange!60]{A21}
  \Vertex[x=3.3,y=0,position=above,size=0.25,color=orange!60]{A2N}
  \Vertex[x=4.2, y=0,position=above,size=0.25,color=green!60]{A31}
  \Vertex[x=5.4, y=0,position=above,size=0.25,color=green!60]{A3N}

  % physical legs
  \draw[line width=1] (A11) -- ++(0,-0.35) node[below] {\scriptsize$s^x_1$};
  \draw[line width=1] (A1N) -- ++(0,-0.35) node[below] {\scriptsize$s^x_B$};
  \draw[line width=1] (A21) -- ++(0,-0.35) node[below] {\scriptsize$s^y_1$};
  \draw[line width=1] (A2N) -- ++(0,-0.35) node[below] {\scriptsize$s^y_B$};
  \draw[line width=1] (A31) -- ++(0,-0.35) node[below] {\scriptsize$s^z_1$};
  \draw[line width=1] (A3N) -- ++(0,-0.35) node[below] {\scriptsize$s^z_B$};

  % within-register bonds: dashed, standing in for the N-2 hidden tensors
  \Edge[style={dashed},label={\makebox[1em][c]{$\cdots$}}](A11)(A1N)
  \Edge[style={dashed},label={\makebox[1em][c]{$\cdots$}}](A21)(A2N)
  \Edge[style={dashed},label={\makebox[1em][c]{$\cdots$}}](A31)(A3N)

  % between-register bonds: solid, ordinary single bonds -- this is the "serial ordering" point
  \Edge[lw=1,position=below](A1N)(A21)
  \Edge[lw=1,position=below](A2N)(A31)
\end{tikzpicture}
\end{array},
\label{eqn:serial_quantics_mps}
\end{equation}
where the dashed lines denote the $B-2$ middle tensors of each variable, notationally suppressed.
In contrast, the interleaved ordering $s^x_1,s^y_1,\cdots,s^x_2,s^y_2,\cdots$ minimises inter-variable correlation by interleaving qubits of different variables at the same bit significance, at the cost of placing intra-variable correlation at distance ${\approx}D$ sites, e.g. for $D=3$
\begin{equation}
\bra{\bm{s}^x,\bm{s}^y,\bm{s}^z}\ket{\psi} =
\begin{array}{c}
\begin{tikzpicture}
  % first significance level: x, y, z bits adjacent
  \Vertex[x=0,   y=0,position=above,size=0.25,color=blue!30]{X1}
  \Vertex[x=0.9, y=0,position=above,size=0.25,color=orange!60]{Y1}
  \Vertex[x=1.8, y=0,position=above,size=0.25,color=green!60]{Z1}
  % last significance level
  \Vertex[x=3.0, y=0,position=above,size=0.25,color=blue!30]{XB}
  \Vertex[x=3.9, y=0,position=above,size=0.25,color=orange!60]{YB}
  \Vertex[x=4.8, y=0,position=above,size=0.25,color=green!60]{ZB}
  % physical legs
  \draw[line width=1] (X1) -- ++(0,-0.35) node[below] {\scriptsize$s^x_1$};
  \draw[line width=1] (Y1) -- ++(0,-0.35) node[below] {\scriptsize$s^y_1$};
  \draw[line width=1] (Z1) -- ++(0,-0.35) node[below] {\scriptsize$s^z_1$};
  \draw[line width=1] (XB) -- ++(0,-0.35) node[below] {\scriptsize$s^x_B$};
  \draw[line width=1] (YB) -- ++(0,-0.35) node[below] {\scriptsize$s^y_B$};
  \draw[line width=1] (ZB) -- ++(0,-0.35) node[below] {\scriptsize$s^z_B$};
  % bonds within a significance level: solid
  \Edge[lw=1,position=below](X1)(Y1)
  \Edge[lw=1,position=below](Y1)(Z1)
  \Edge[lw=1,position=below](XB)(YB)
  \Edge[lw=1,position=below](YB)(ZB)
  % dashed: the 3(B-2) suppressed tensors of significance levels 2,...,B-1
  \Edge[style={dashed},label={\makebox[1em][c]{$\cdots$}}](Z1)(XB)
\end{tikzpicture}
\end{array},
\label{eqn:interleaved_quantics_mps}
\end{equation}
where the dashed line now represents the suppressed $D(B-2)$ tensors for bits of significance levels $2$ through $B-1$. The interleaved representation typically leads to a more efficient representation for non-separable functions~\cite{ye2022quantum,sims2026multivariate}, while the serial ordering is popular for the ease of applying MPOs acting on a single variable contiguously~\cite{lee2026quantum}.

This distinction between representations is important here since they represent the same state at the same $N$ with different correlation ranges. This can present a severe bottleneck for local, sequential methods based on the MPD, since the locality of correlations is entirely representation-dependent even though both representations may be efficient. In contrast, ASPIRE relies only upon pairwise entanglement metrics, and only `sees' the true underlying entanglement geometry rather than this artefact of the MPS representation; its behaviour is identical on either representation (up to truncation error) when all-to-all connectivity is permitted.

\begin{figure*}
    \centering
    \includegraphics[width=\linewidth]{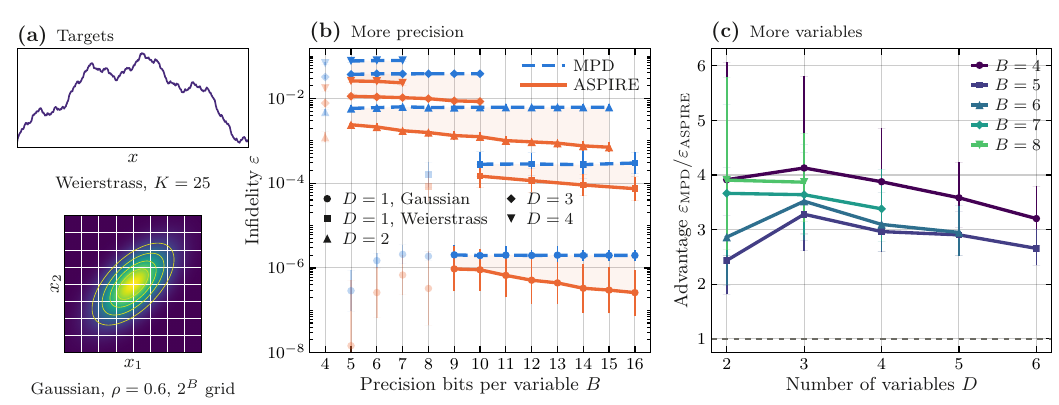}
       \caption{ \textbf{Quantics function states prepared with a matched gate budget for MPD and ASPIRE.} Infidelity of the MPD (blue) and ASPIRE (orange) in the quantics encoding with serial bit ordering, gate-matched at the MPD's $8$-layer budget of $8(N-1)$ two-qubit gates on $N=DB$ qubits, each arm read at its lowest infidelity within that budget. Error bars are the seed spread and the shaded band is the separation. The Gaussian lines with $D\ge2$ use $16$ seeds and maximum bond dimension $256$, and the two $D=1$ lines use $8$ seeds and $1024$; no cap binds. A seed draws a member of the function family (random correlation signs, widths and centres for the Gaussian; random amplitudes and phases for the Weierstrass), so the spread is over targets and not over algorithmic randomness. \textbf{(a)} The two target families: the Weierstrass series $\sum_{k=1}^{K}\sin(\pi k^{3}x)/(\pi k^{3})$ of Ref.~\cite{tindall2024compressing}, drawn at $K=25$ with unit amplitudes and zero phases, and the equicorrelated Gaussian ($\rho=0.6$) on its own domain with the $2^B$ sampling grid overlaid. \textbf{(b)} Refining the grid at fixed $D$; the panel also carries a $D=1$ Weierstrass series read off a separate budget run which differs from its neighbours in both bond dimension and function and is labelled for that reason. Faded, unjoined markers are the small-$B$ points before ASPIRE's infidelity peaks, where it still rises with $B$, and no trend is drawn through them. On the Gaussian lines the MPD's error is flat in $B$ because the target's best bond-dimension-two approximation stops moving once the grid is fine enough, while the cross-variable correlation the added bits carry lies beyond the reach of nearest-neighbour gates; ASPIRE converts the growing budget into fidelity and the separation widens. \textbf{(c)} Adding variables at fixed $B$, the axis on which the target hardens for both protocols. The MPD/ASPIRE infidelity ratio is plotted directly on a linear axis, one line per precision $B=4,\dots,8$; it is flat or rising from $D=2$ to $D=3$ and decreases beyond.}
    \label{fig:fig5-quantics}
\end{figure*}

In the serial ordering, at fixed variable count $D$, refining the precision of the grid by increasing $B$ leads to growing inter-variable correlations at length ${\approx}B$ sites. The MPD cannot place gates directly targeting this growing correlation length, since it can only place sequential gates and the inter-variable correlations are non-local, whilst ASPIRE is insensitive to site ordering. Consequently, one would expect that ASPIRE would increasingly outperform the MPD with increasing $B$ for a serial quantics MPS, which is precisely what we observe in Figure~\ref{fig:fig5-quantics}\textbf{(b)}. We evaluate this further in Appendix~\ref{app:quantics}. This insensitivity has a counterpart outside the quantics family. In quantum-chemistry DMRG the orbital ordering is routinely optimised against a mutual-information measure before the calculation is run~\cite{legeza2003optimizing, rissler2006measuring}, precisely because the accuracy attainable at fixed bond dimension depends on it. By Section~\ref{sec:placement} our selection rule is invariant under that reordering, so the ordering search such work performs has no analogue in our protocol. While an efficient ordering is still needed to ensure the underlying tensor-network operations are tractable, our protocol improves upon the MPD by ensuring that the ordering has no consequences for the outcome of circuit compilation.

Figure~\ref{fig:fig5-quantics} separates the variable and precision axes at a two-qubit budget matched to the MPD's eight-layer circuit construction. Refining the grid at fixed $D$ (panel \textbf{(b)}) leaves the MPD's error essentially unmoved, for the bivariate Gaussian it is pinned near $6.2\times10^{-3}$ from $B=6$ onwards, while ASPIRE continues to descend, so the separation widens from $3.9\times$ at $B=4$ to $8.8\times$ at $B=15$. Two things make MPD's infidelity flat in $B$. Firstly the target's best bond-dimension-two approximation error stops moving once the grid is fine enough, measured at $2.85\times10^{-2}$ for the bivariate Gaussian and constant to $0.2\%$ over $B=6$ to $10$, and the MPD's first layer realises exactly that approximation. Secondly, the added precision bits introduce cross-variable correlation in their bits that the MPD cannot discern, whilst not affecting the intravariable correlation that it targets effectively with its layer count. ASPIRE, in contrast, uses the added precision bits as a wider pool of target pairs and is blind to whether a correlation is between variables or not. In \textbf{(c)} the advantage factor is plotted directly against the number of variables. The target hardens for both protocols as $D$ grows, and the advantage factor decreases with more variables. We note that this comparison was performed with a fixed number of layers, such that the infidelities rise as high as $O(0.1)$ which may influence the comparison. Ultimately, ASPIRE attains a lower infidelity than the MPD for all quantics functions considered, and indeed seems to scale better with more precision bits per variable than the MPD.

Panel \textbf{(a)} of Figure~\ref{fig:fig5-quantics} shows the two target families. The Weierstrass target is the truncated series $\sum_{k=1}^{K}a_k\sin(\pi k^{3}x+\phi_k)/(\pi k^{3})$ of Ref.~\cite{tindall2024compressing}, whose frequencies grow as $k^{3}$. Each seed draws the amplitudes $a_k$ and phases $\phi_k$, and the number of terms grows with the grid, $K=\lfloor(2^{B}/4)^{1/3}\rfloor$, so that the finest mode is always sampled about four times per period ($K=25$ at $B=16$, the member drawn in panel \textbf{(a)} with $a_k=1$, $\phi_k=0$). Each refinement of the grid therefore adds finer modes, so every added precision bit carries genuine weight, yet the MPD's error on the $D=1$ Weierstrass line of \textbf{(b)} stays flat. The equicorrelated Gaussian below is the opposite regime: it is smooth, its bond spectrum decays quickly, and the correlation the added bits carry is almost entirely between variables, sitting at a range of ${\approx}B$ sites in the serial ordering. The $2^B$ grid is overlaid to show the discretisation; it is the serial encoding, not the function, that places this correlation far apart on the chain. Reading panel \textbf{(a)} alongside \textbf{(b)} shows how the MPD's nearest-neighbour layers reach the intra-variable chain but not the Gaussian's cross-variable correlation, whereas ASPIRE selects pairs by the correlation itself and so is indifferent to how the encoding lays the variables out.

\section{Conclusions} \label{sec:conclusion}

In this work, we introduced ASPIRE, a protocol to produce approximate state-preparation circuits tailored to both the state's underlying entanglement geometry and the available connectivity on the target quantum processor. The algorithm considers the space of all possible two-qubit gates, and uses entanglement metrics to select the gates that will remove the maximal amount of correlation from the working state. The protocol is natively adaptive due to the parameterisation of its entanglement scoring, gate selectivity, gate generation, and layer size. We have demonstrated that it can prepare higher fidelity states than other approximate algorithms at matched gate count when there is a long-ranged entanglement structure to discover, persisting when classical optimisation is applied and generally being more robust to noise.  

\subsection{Tensor-network topologies}\label{sec:tensor_network_topologies}
In this work, we have demonstrated how building state preparation around the long-ranged entanglement geometry of the target state can achieve superior performance to MPD methods, to an extent that improves with increasing native connectivity in the hardware. In all demonstrations in this work we have produced state preparation circuits for MPS, since this is a popular target in the literature that is well-motivated for a broad range of interesting problems, even beyond 1D~\cite{chan2011density,baiardi2020density,menczer2024tensor,menczer2024parallel,xiang2024distributed,legeza2025orbital,white2007neel,yan2011spin,stoudenmire2012studying,zheng2017stripe,zhou2020limits,garcia2021quantum,ye2022quantum,gourianov2022quantum,shinaoka2023multiscale,ritter2024quantics,nunez2025learning,bohun2026entanglement,sims2026multivariate,lee2026quantum,connor2026tensor,niedermeier2026solving}, and the most straightforward tensor-network representation to prototype new methods. However, our protocol is not inherently limited to MPS --- indeed, much of the strength of our protocol comes from the fact that ASPIRE does not `see' the underlying 1D geometry of the representation, but only the entanglement geometry of the state itself, allowing us to tailor our state-preparation circuits to long-ranged correlations beyond 1D. Since ASPIRE does not rely upon geometric properties of the underlying classical representation, there is no reason to expect that it cannot be generalised to other underlying classical representations. Extending state-preparation methods to other tensor-network topologies has greatly expanded the range of efficiently preparable states on quantum computers~\cite{haghshenas2022variational,zhang2024characterizing,sugawara2025embedding,sims2026multivariate,szoldra2026belief,mingare2026practical}, and this principle might extend even beyond tensor networks.

For a given classical representation of a state, to perform the ASPIRE protocol we must be able to (i) efficiently compute the pairwise QMI, and (ii) efficiently apply the discovered gates and their inverses to the state's representation. For more general isometric tensor networks like tree tensor networks~\cite{shi2006classical,sugawara2025embedding,sims2026multivariate}, MERA~\cite{vidal2007entanglement,vidal2008class,evenbly2009algorithms}, and isoPEPS~\cite{haghshenas2019conversion,zaletel2020isometric} (i) is straightforwardly achieved by standard contraction algorithms, while (ii) might be achieved with e.g. penetration algorithms~\cite{seitz2023simulating,sugawara2025embedding} or Evenbly-Vidal-like reoptimisation~\cite{evenbly2009algorithms}. There is also scope to extend beyond tensor networks --- for example, (i) would be straightforward with fermionic Gaussian states~\cite{peschel2003calculation} or stabiliser states~\cite{fattal2004entanglement}, and (ii) could also be achieved within the restricted set of fermionic Gaussian or Clifford unitaries (which are guaranteed to be sufficient to prepare the state). We suggest that recent advances combining Clifford and tensor-network methods~\cite{lami2024unveiling,qian2024augmenting,nakhl2025stabilizer,masot2026limits} may provide further avenues for enhanced state preparation following this simple principle of entanglement-selected gate placement.

\subsection{Mechanism: reach versus objective}

ASPIRE differs from MPD both in which pairs of sites a gate may act on, and what each gate optimises. While MPD maximises overlap with the $\chi=2$ truncation, Algorithm~\ref{alg:state-prep} maximally reduces the entropy of two single-site marginals. Thus our algorithm uses a local objective that is computable for any two sites, whereas MPD uses a global objective that is restricted to pairs adjacent in the MPS. At $r=1$ the two protocols act on the same connections, and we generally find MPD to be more performant in some regimes (Figure~\ref{fig:fig3-connectivity-spectra}), and tied in others (Figure~\ref{fig:decomposition}), although certain systems are still more suited to ASPIRE's objective (Figure~\ref{fig:toy-expressivity}). For the spin systems, once the radius covers the longest lattice bond in the chain ordering, ASPIRE covers most of the entanglement geometry and consistently outperforms MPD. On random matrix product states, which carry no such scale, the two protocols nearly tie (Figure~\ref{fig:fig2-randomMPS}\textbf{(c)}), which is the prediction the entanglement geometry makes and not a shortfall of the method. A tiebreaker in the controlled example can for example be the SWAP costs. If the random MPS is prepared on a platform which does not connect MPD's bonds natively then the added SWAP costs would push the advantage in ASPIRE's favour, whereas if the hardware has too little connectivity ASPIRE may need SWAP gates for its long ranged pairs which would have the inverse effect.

\subsection{Implications for the two regimes}

In NISQ, the ASPIRE protocol comes with the clear advantage arising from the selectivity at which gates are applied. Whereas MPD requires the full brickwork structure to iteratively prepare a state, ASPIRE can synthesise its layers in batches ranging from a single high-impact gate to a full maximum-weight matching. Therefore, we propose scaling the selectivity of the gates applied to the level of noise present in the system, such that a higher degree of noise is accounted for by running shallower circuits with higher-impact gates. A noise-aware protocol overlooks part of the entanglement geometry, but by avoiding gates that inject more error than they remove it can reach lower infidelity in the NISQ regime. In Figure~\ref{fig:fig4-heavyhex} \textbf{(b)} we test this by comparing ASPIRE run with no gate-selectivity (dashed, grey) to a relative selectivity requiring a gate to score $\geq0.9$ in its unrestricted layer (solid, purple) and an absolute selectivity requiring gates to remove $\geq3\times10^{-3}$ bits to be accepted (solid, green). Both protocols are found to outperform the naive implementation, reaching lower infidelities at the end of their Algorithm~\ref{alg:state-prep} past a certain noise level $p=O(10^{-5})$ with the improvement generally increasing with stronger noise, although flattening out and plausibly decreasing slightly at high noise. Whilst this regime is characterised by still limited connectivity,  Figure~\ref{fig:fig3-connectivity-spectra}, Figure~\ref{fig:fig3-connectivity-trends}, and Figure~\ref{fig:fig4-heavyhex} indicate that some-to-some connectivity is sufficient to achieve an advantage over MPD.

In the early fault-tolerant regime the $T$ count is a key metric of comparison, insensitive to the range of the gates used. In Figure~\ref{fig:fig1-efficacy} ASPIRE reaches $\varepsilon=2.13\times10^{-2}$ at $2.8\times10^{5}$ $T$ gates after a single optimisation sweep, and $1.61\times10^{-2}$ at $7.0\times10^{5}$ without optimisation, whereas the MPD, optimised at the same classical budget, does worse at $3.05\times10^{-2}$ at $5.8\times10^{5}$ $T$ gates. QPE is a key application of MPS state preparation, and the accuracies reached at these manageable T-gate counts are likely already sufficient for it. Pushing fidelity further with exact techniques costs significantly more quantum resources, yet can only raise the QPE success probability by a small amount. Figure~\ref{fig:qpe-triangle} shows that the triangle inequality on the Fubini--Study angle turns the measured infidelity $\varepsilon$ and the DMRG reference quality $p$ into a guaranteed lower bound on the true QPE success probability. The bound tightens quickly as $\varepsilon\to0$: the marked points show where the guarantee comes within $0.01$ of its ceiling $p$, which for $p=0.99$ already occurs at $\varepsilon\approx1.7\times10^{-3}$. Circuit infidelity pushed well below the DMRG error $1-p$ buys little, since the guarantee saturates at $p$ regardless of $\varepsilon$.

\begin{figure}
    \centering
    \includegraphics[width=\linewidth]{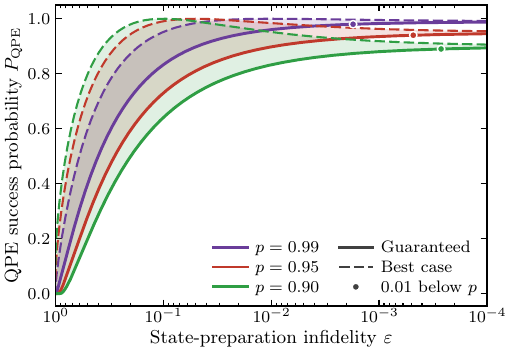}
    \caption{\textbf{QPE success probability vs.\ state-prep infidelity.} $P_{\mathrm{QPE}}(\mathrm{success})=|\langle a|c\rangle|^2$, the overlap of the prepared circuit state $\ket{c}$ with the true target $\ket{a}$, bounded from the measured infidelity $\varepsilon=1-|\langle b|c\rangle|^2$ between $\ket{c}$ and the DMRG/MPS reference $\ket{b}$, via the Fubini--Study triangle inequality at DMRG quality $p=|\langle a|b\rangle|^2\in\{0.99,0.95,0.90\}$ (solid: guaranteed lower bound; dashed: best-case upper bound).}
    \label{fig:qpe-triangle}
\end{figure}

\subsection{Limitations}
\label{sec:limitations}
We also note that the ASPIRE protocol, while powerful, is fundamentally limited by the underlying classical representation of the state. While MPS have been a powerful tool for studying systems beyond 1D in chemistry~\cite{chan2011density,baiardi2020density,menczer2024tensor,menczer2024parallel,xiang2024distributed,legeza2025orbital}, condensed matter physics~\cite{white2007neel,yan2011spin,stoudenmire2012studying,zheng2017stripe} and the study of quantum circuits~\cite{zhou2020limits}, only 1D area-law states can be expected to scale efficiently indefinitely~\cite{eisert2010colloquium}. While the aforementioned application areas show that MPS is sufficient to represent initial states for interesting utility-scale problems, this will not be the case for every system at every scale. The viability of e.g. initialising QPE for general chemistry problems remains in contention~\cite{lee2023evaluating}, and the limitation applies to all state-preparation schemes that start from a classical approximation. Nonetheless, as outlined in Section~\ref{sec:tensor_network_topologies}, we anticipate that expanding the simple principle of entanglement-geometry-guided circuit design to other classical representations, both more advanced tensor-network topologies and non-tensor-network methods, will further broaden the scope of systems to which these methods are applicable.

We also note that on a stabiliser state whose minimum stabiliser weight is
$w\ge3$ every pair scores exactly zero (Appendix~\ref{app:stab}), so the protocol emits no
gate at all. This problem can be circumvented by implementing composite moves of multiple two-qubit gates determined from a nonzero composite score. The composite moves (discussed in Appendix~\ref{app:composite}) repair the
stall on stabiliser targets at two native gates rather than the much higher number compiling a
three-qubit unitary would cost.  This stall is not a practical limitation, since stabiliser states are prepared efficiently by Clifford synthesis \cite{aaronson2004improved}.

Outside of the stabiliser case, a pair can carry $I(i:j)>0$ at $R_{ij}=0$
(Corollary~\ref{cor:sat}), and this occurs routinely. It is not, however, a
deficiency of the score: $R_{ij}$ is the largest reduction in
$S(\rho_i)+S(\rho_j)$ available to \emph{any} two-qubit unitary on that pair, so
no alternative pair metric and no composite of pair metrics can make such
a pair actionable. This limitation is also bounded: the mutual information left on a pair by its gate
depends only on the spectrum of $\rho^{ij}$, and is small in practice. Since $R_{ij}=0$ is a property of the pair in the current state rather than of the pair itself, a gate on an overlapping support can restore $R_{ij}>0$, and in our numerics frozen pairs usually became actionable again within one layer (Appendix~\ref{app:gate}). This recovery is an empirical observation, not a guarantee. Every admissible pair can score zero at once while correlation remains on the register, and stabiliser states are not the only such targets: Appendix~\ref{app:composite} gives a seven-qubit non-stabiliser state on which every pair carries $I(i{:}j)\approx0.062$ bits at $R_{ij}=0$, so that Algorithm~\ref{alg:aspire} halts without emitting a gate. We found no such stall on the targets of Section~\ref{sec:results}, and it is removed by a simple fallback that none of the results reported here required: when no candidate survives step~(II) before the budget is spent, one MPD layer (Algorithm~\ref{alg:mpd}) is applied to the working state and selection resumes. Product states are among the bond-dimension-two states that this layer prepares, so with an overlap-optimal truncation it cannot lower the attainable fidelity, and on the example of Appendix~\ref{app:composite} a single layer restores pair scores of up to $0.65$ bits.

A further limitation concerns fermionic encodings. Our selection rule
reads the entanglement geometry of the encoded qubit state, and a fermionic state
acquires such a geometry only once an encoding fixes one. The Jordan--Wigner
encoding~\cite{jordan1928uber,ortiz2001quantum} maps the occupation basis to the
computational basis, so the single-site marginals the rule reads are the fermionic
mode marginals themselves, and $S(\rho_i)$ and $I(i{:}j)$ reduce to the single-orbital
entropy and orbital mutual information already used to optimise orbital orderings in
quantum-chemistry DMRG~\cite{legeza2003optimizing,rissler2006measuring}. Encodings such as parity~\cite{seeley2012bravyi}
or Bravyi--Kitaev~\cite{bravyi2002fermionic}
differ by an entangling Clifford network rather than by a permutation of sites ~\cite{mcardle2020}, and
therefore present the protocol with different single-site marginals, different
$R_{ij}$, and hence different circuits. Consequently,
should the protocol be applied to a fermionic target, whatever advantage it achieves
would be a property of the pairing (target, encoding) rather than of the target alone,
and we take Jordan--Wigner to be the natural choice for a geometry-driven selection
rule; we have not measured how the advantage would change under a different encoding.

\section*{Acknowledgements}
We acknowledge helpful technical discussions with Tom Bromley, Hamza Jnane, Jona Erle, and B\'{a}lint Koczor. The authors acknowledge HPC provisions funded by the UKRI Future Leaders Fellowship award Theory to Enable Practical Quantum Advantage (MR/Y015843/1).

The project concept was scoped and conceived by M.L.S-G. and refined by F.H. Implementation and numerical experiments were conducted by F.H. and supervised by M.L.S-G. The code for the numerics was built on a fork of \texttt{MPSCircuits.jl}~\cite{mpscircuits} (a codebase authored by M.L.S-G.), with further development conducted by F.H. assisted by frontier LLMs under close supervision. The code accompanying this paper can be found in the GitHub repository of Ref. \cite{ASPIRE}. All text in this manuscript is principally authored by F.H. and M.L.S-G., with LLM tools used for feedback, formatting, and miscellaneous assistance.

\bibliography{references}

\clearpage
\newpage
\widetext

\appendix
\section*{Appendices}
\section{Noise models}
\label{app:noise}
The noisy results of Section~\ref{subsec:noise} use a global-depolarising noise model, chosen because its impact on the fidelity can be evaluated in closed form at system sizes too large for direct density-matrix simulation, and since its analytical tractability allows its fidelity impact to be evaluated in closed form for system sizes too large for direct density-matrix simulation. Global depolarising noise is a reasonably well-motivated idealisation, since sufficiently deep random circuits provably scramble weak local noise into global depolarising noise~\cite{dalzell2021random}. However, realistic hardware noise is local and can be strongly biased across Pauli errors, so we verify this model against density-matrix simulations of a more realistic hardware noise model at smaller system sizes. We detail the global depolarising model in Section~\ref{app:global_depol} and the realistic noise model in Section~\ref{app:realistic_noise_model}. Throughout, each two-qubit gate $G_n\in\operatorname{SU}(4)$ is assigned an error probability $p$, single-qubit gates are assumed noiseless, and noise is simulated only on evaluation of completed circuits, with gate selection and optimisation always using the noiseless circuit computed via MPS methods.

\subsection{Global depolarising noise}\label{app:global_depol}
After each two-qubit gate, noise is applied to the density operator $\rho$ according to the global depolarising channel
\begin{equation}
  \rho \mapsto \left(1-\frac{16p}{15}\right)\rho + \frac{16p}{15}\,\frac{\mathds{1}}{2^N},
  \label{eq:global-depol}
\end{equation}
where the weight $16p/15$ matches the convention for a two-qubit depolarising channel of error probability $p$; this differs from a local two-qubit depolarising channel in that it scrambles the entire register rather than a single pair, and also commutes with every gate such that a circuit of $M$ gates produces the state
\begin{equation}
  \rho_{\text{out}} = q\dyad{\psi_{\text{out}}} + \left(1-\left(1-\frac{16p}{15}\right)^G\right)\frac{\mathds{1}}{2^N}
\end{equation}
where $\ket{\psi_{\text{out}}}$ is the noiseless output state of the circuit.
The noisy fidelity therefore follows directly from the noiseless fidelity as
\begin{equation}
\bra{\psi_{\text{target}}}\rho\ket{\psi_{\text{target}}} = qF_0 + \frac{1-q}{2^N},
\end{equation}
where $F_0=|\braket{\psi_{\text{target}}}{\psi_C}|^2$ is computed by MPS-MPS contraction~\cite{schollwock2011density}, and we report $\varepsilon = 1-\sqrt{\bra{\psi_{\text{target}}}\rho\ket{\psi_{\text{target}}}}$, which reduces to Equation~\ref{eq:infidelity} at $p=0$.

\subsection{Biased noise model}\label{app:realistic_noise_model}
For exact density-matrix simulations we use the more realistic noise model outlined in the appendices of Refs.~\cite{ijaz2025more,goh2026direct}. Following each two-qubit gate, the pair it acted on is subjected to the two-qubit local Pauli channel
\begin{equation}
  \Lambda(\rho) = \sum_{j=0}^{15} c_j P_j \rho P_j, \qquad p \equiv 1-c_0,
  \label{eq:pauli-channel}
\end{equation}
where the amplitudes $c_j$ are generated from a sparse Pauli-Lindblad model~\cite{van2023probabilistic} whose coefficients are sampled from those learned from a real quantum processor; this approach is sufficiently realistic that it has enabled highly effective error mitigation on real devices~\cite{van2023probabilistic,kim2023evidence}. While a local-depolarising channel is unbiased over the 15 non-identity 2-qubit Pauli operators, this model is heavily biased towards certain error channels with a bias that differs inhomogeneously for each qubit pairing. This better captures the character of real hardware noise channels, both in local bias and inhomogeneity across the processor.

\subsection{Verification on a 12-qubit cell}
We compare the three models on one hexagon of the heavy-hex lattice, $N=12$ qubits on a $12$-cycle of couplers, the largest fragment we simulate exactly. Because this cell is a ring, every MPS bond is a coupler, neither protocol needs routing, and the noise channel is the only thing that differs between the models. The target is a kicked transverse-field Ising state from the same family as Figure~\ref{fig:fig4-heavyhex}, at $\theta_h=0.6$. We construct one set of noiseless circuits per protocol, six MPD layers ($66$ gates) and an ASPIRE circuit restricted to the cell's couplers at the same gate count, and score every checkpoint three ways: with the closed form of Equation~\ref{eq:global-depol}, and by evolving the full density matrix in QuEST~\cite{jones2019quest} (via the pyQuEST interface) under local depolarising and under sparse Pauli--Lindblad noise. As consistency checks, the density-matrix simulation reproduces the tensor-network fidelity of every circuit at $p=0$, and setting all Pauli--Lindblad rates equal reproduces local depolarising noise to within $10^{-12}$.

Figure~\ref{fig:noise-model-check} supports two conclusions. First, global depolarising noise is the pessimistic model, as anticipated: the exact curves have the same shape as the global ones and reach their minimum at the same checkpoint or one MPD layer later, but lie below them, with the best reachable infidelity $0.69$--$0.86$ times its global value under sparse Pauli--Lindblad noise and $0.80$--$0.90$ times under local depolarising noise, for both protocols. Second, and more importantly, the comparison between the protocols does not depend on the model: the ratio of the MPD's best infidelity to ASPIRE's changes by at most $0.07$ between models, in no consistent direction. This check isolates the noise channel. Since neither protocol routes on this cell, the two are within $16\%$ of each other under every model; the larger separation in Figure~\ref{fig:fig4-heavyhex} comes from the SWAP overhead the MPD incurs on a larger device, which is set by Equation~\ref{eq:routing-overhead} rather than by the choice of noise model.

\begin{figure*}
    \centering
    \includegraphics[width=\linewidth]{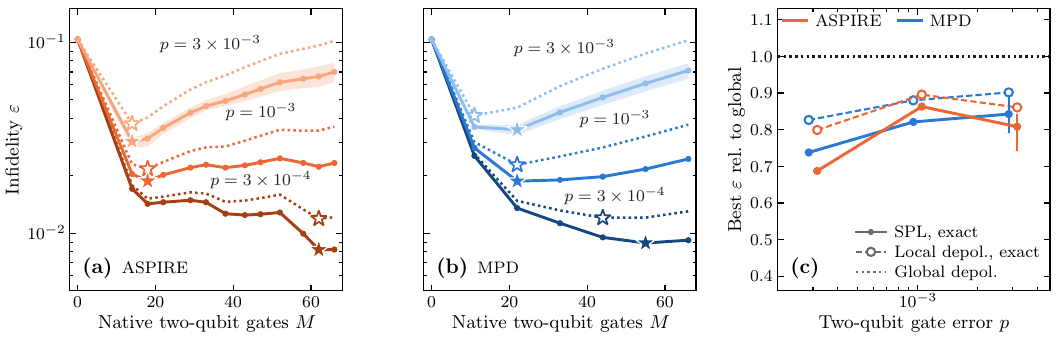}
    \caption{\textbf{Global depolarising noise against exact simulation of hardware-learned noise.} Infidelity against native two-qubit gates for \textbf{(a)} ASPIRE and \textbf{(b)} the MPD, preparing the kicked transverse-field Ising state at $\theta_h=0.6$ on one hexagon of the heavy-hex lattice ($N=12$ qubits on a $12$-cycle, so every MPS bond is a coupler and neither protocol routes), at three gate error rates $p=\{3\times10^{-4},10^{-3},3\times10^{-3}\}$ (colour coded dark to light). Solid lines evolve the full density matrix under the sparse Pauli--Lindblad channel after every two-qubit gate, with each coupler's own rates rescaled to a total error of exactly $p$. At $p=3\times10^{-3}$ the solid line is the mean of five rate draws and the band their range, and at the two weaker rates it is a single draw. Dotted lines score the same checkpoint circuits with the global depolarising closed form used for Figure~\ref{fig:fig4-heavyhex}. Filled stars mark the minimum of each solid curve and open stars that of each dotted curve. \textbf{(c)} Best reachable infidelity under the exact sparse Pauli--Lindblad channel (solid, filled markers) and the exact uniform local depolarising channel (dashed, open markers), divided by that of the global model (dotted), for ASPIRE (orange) and the MPD (blue). The two protocols are offset horizontally for legibility, and the error bars span the five rate draws.}
    \label{fig:noise-model-check}
\end{figure*}

\section{Details of the ASPIRE protocol}
\label{app:disentangling}

\subsection{Formal statement}
\label{app:formal}
Algorithm~\ref{alg:aspire} takes the target MPS $\ket{\psi}$ on $N$ qubits, the connectivity matrix $K$, the cost $c_{ij}=3(1+s_{ij})$ CNOTs of each admissible pair including its $s_{ij}$ routing SWAPs (Appendix~\ref{app:connectivity}), the absolute and relative thresholds $\tilde\tau$ and $\rho$, the native-gate budget $B$, the maximum layer size $m\le\lfloor N/2\rfloor$, the number $n_{\mathrm{c}}$ of highest-weight admitted pairs passed to the matching, and the truncation bond dimension $\chi_w$ to which the state is compressed after each gate. Each admissible pair is weighted by its removable QMI, either raw, $w_{ij}=R_{ij}$, or per native gate, $w_{ij}=R_{ij}/c_{ij}$, and enters the admitted set $E$ only if $w_{ij}\ge\tilde\tau$, if $w_{ij}$ is at least the fraction $\rho\in[0,1]$ of the largest current weight, and if $c_{ij}$ fits in the budget left after the $b$ native gates already spent. The loop acts on the working state $\ket{\phi}$ and records the applied gates in order in the list $\mathcal{G}$, whose inverses in reverse order form the circuit $C$. For $m=\lfloor N/2\rfloor$ and $n_{\mathrm{c}}\ge|E|$, line 13 is an unconstrained maximum-weight matching. Table~\ref{tab:selection} lists the settings behind each figure.
\begin{figure}\color{black}\algrenewcommand\alglinenumber[1]{{\color{black}\footnotesize#1:}}
\begin{algorithm}[H]
\caption{\color{black}ASPIRE (formal version of Algorithm~\ref{alg:state-prep})}\label{alg:aspire}
\begin{algorithmic}[1]
\Procedure{ASPIRE}{$\ket{\psi},K,\{c_{ij}\},\tilde\tau,\rho,B,m,n_{\mathrm{c}},\chi_w$}
    \State $\ket{\phi}\gets\ket{\psi}$, \ $b\gets0$, \ $\mathcal{G}\gets()$
    \Loop
        \ForAll{pairs $(i,j)$ with $K_{ij}=1$}
            \State $\rho_{ij}\gets$ two-site RDM of $\ket{\phi}$ \Comment{Equation~\ref{eqn:rdm_diagram_form}}
            \State $(U^*_{ij},R_{ij})\gets$ \Call{EigenbasisGate}{$\rho_{ij}$} \Comment{Proposition~\ref{prop:pair}}
            \State $w_{ij}\gets R_{ij}/c_{ij}$ (normalised) or $R_{ij}$ (raw) \Comment{$c_{ij}$ on the current layout}
        \EndFor
        \State $E\gets\{(i,j):K_{ij}=1,\ w_{ij}\ge\max(\tilde\tau,\rho\max_{kl}w_{kl}),\ b+c_{ij}\le B\}$
        \If{$E=\emptyset$}
            \State \textbf{break}
        \EndIf
        \State $\mathcal{L}\gets$ \Call{MaxWeightMatching}{$E,w,m,n_{\mathrm{c}}$} \Comment{$\le m$ disjoint pairs among the $n_{\mathrm{c}}$ best of $E$}
        \State drop the lowest-weight pairs of $\mathcal{L}$ until $b+\sum_{(i,j)\in\mathcal{L}}c_{ij}\le B$
        \ForAll{$(i,j)\in\mathcal{L}$}
            \State $\ket{\phi}\gets$ \Call{Compress}{$U^*_{ij}\ket{\phi},\chi_w$} \Comment{SWAP-routed if $|i-j|>1$}
            \State append $U^*_{ij}$ to $\mathcal{G}$, \ $b\gets b+c_{ij}$
        \EndFor
        \State record $T=\sum_aS(\rho^a)$ of $\ket{\phi}$
    \EndLoop
    \State \Return $C=V_1^\dagger V_2^\dagger\cdots V_M^\dagger$ for $\mathcal{G}=(V_1,\dots,V_M)$ \Comment{single-qubit layer of Equation~\ref{eq:infidelity-frame} applied first}
\EndProcedure
\end{algorithmic}
\end{algorithm}
\end{figure}
\begin{table*}[t]
\caption{Selection settings per figure. Step (III) of Algorithm~\ref{alg:state-prep} builds each layer from a matching, a set of pairwise disjoint pairs. Exact matching finds the maximum-weight matching by branch and bound over the $n_{\mathrm{c}}=32$ highest-weight admitted pairs, and blossom finds it over all admitted pairs with Edmonds' blossom algorithm (Section~\ref{sec:placement}). Greedy selection takes pairs in descending order of score and keeps each one that is disjoint from those already chosen. A cross pair $(a,b)$ joins a qubit of the candidate to a qubit already chosen. The score is the weight $w_{ij}$, the layer size the cap $m$ (full meaning $m=\lfloor N/2\rfloor$) and the admission rule the thresholds $\tilde\tau$ and $\rho$ (none meaning $\tilde\tau=\rho=0$) of Algorithm~\ref{alg:aspire}. In Figure~\ref{fig:fig2-randomMPS} the factor $(2|i-j|+1)^{\gamma}$ penalises long gates, which cost more to optimise, and $\gamma=0$ recovers $R_{ij}$.}
\label{tab:selection}
\color{black}
\renewcommand{\arraystretch}{1.15}
\setlength{\tabcolsep}{4pt}
\footnotesize
\begin{ruledtabular}
\begin{tabular}{lp{5.3cm}p{3.6cm}p{2.4cm}p{3.7cm}}
Figure(s) & Step-(III) rule & Score & Layer width & Admission rule \\
\hline
\ref{fig:overview} & blossom on the exact state vector, all $66$ pairs (schematic) & $R_{ij}$ & full; weakest gate of layer $2$ dropped & none \\
\ref{fig:toy-expressivity} & blossom for the GHZ $r=\infty$ curve, SWAP-routed beyond distance $8$, and both Bell-lattice arms; other curves hand-built & $R_{ij}/c_{ij}$, $c_{ij}=1+s_{ij}$ native gates & full & $\tilde\tau=10^{-3}$ \\
\ref{fig:fig1-efficacy} & exact matching at $r=2L_x-1$ & $R_{ij}$ & $\le\min(16,N/2)$ & none \\
\ref{fig:fig2-randomMPS} & greedy disjoint selection & $R_{ij}/(2|i-j|+1)^{\gamma}$, $\gamma\in\{0,\tfrac14,\tfrac12\}$ & $\le4$ or $1$ & $R_{ij}>10^{-12}$; $\gamma$ and width best per budget \\
\ref{fig:fig3-connectivity-spectra}, \ref{fig:fig3-connectivity-trends} & greedy, cross pairs $I(a{:}b)\le0.1$ bits & $I(i{:}j)$ & disjointness only & $\tilde\tau=10^{-4}$, $\rho=0.75$, lifted if the layer would be empty \\
\ref{fig:fig4-heavyhex} & exact matching over the $32$ couplers & $R_{ij}$ & $\le15$ & none; in \textbf{(b)} also $\rho=0.9$ or $\tilde\tau=10^{-3}$ on $R_{ij}/c_{ij}=R_{ij}/3$ \\
\ref{fig:fig5-quantics}, \ref{fig:figA-quantics} & exact matching & $R_{ij}$ & $\le\min(16,N/2)$ & none \\
\ref{fig:decomposition} & blossom & $R_{ij}$, or $R_{ij}/c_{ij}$ in the cost-aware arm & $1$ or full, per line & $\rho\in\{0,0.25,0.5,0.75,0.9\}$, per line \\
\ref{fig:toy-energy}, \ref{fig:qpe-triangle} & no selector & & & \\
\end{tabular}
\end{ruledtabular}
\end{table*}

\subsection{Exact evaluation of the score and the gate}
\label{app:gate}

The joint entropy $S(\rho^{ij})$ is invariant under unitaries on $\{i,j\}$, so it cancels between the two mutual informations and
\begin{equation}
  R_{ij}=S(\rho^{i})+S(\rho^{j})-\min_{U_{ij}}
  \bigl[S((\rho')^{i})+S((\rho')^{j})\bigr].
  \label{eq:Rmarginal}
\end{equation}
For two parties this minimisation was solved by Jevtic, Jennings and Rudolph~\cite{jevtic2012maximally,jevtic2012quantum}: the minimum is attained at a state diagonal in a product basis, and for two qubits it takes the closed form that Proposition~\ref{prop:pair} and Corollary~\ref{cor:sat} express in the present notation. Their argument rests on the reduction of the convex hull of a unitary orbit to classical distributions majorised by its spectrum~\cite{bravyi2003requirements}, whose two-qubit marginal conditions, among them $\mu_i+\mu_j\ge p_2+p_3+2p_4$, yield the same closed form. We claim novelty only for the extension of this argument in Theorem~\ref{thm:assign} to supports of any size and to any concave, permutation-invariant $F$, and for the use of the resulting quantity as a selection rule. Since ASPIRE scores every admissible pair at every layer, this minimisation must be exact and fast, and Theorem~\ref{thm:assign} makes it so by replacing the search over unitaries with a finite search over assignments of the spectrum to a product basis.

\begin{theorem}[Assignment reduction]
\label{thm:assign}
Let $A$ be a support with local dimensions $(d_a)_{a\in A}$, total dimension
$D=\prod_{a\in A}d_a$ and product index set $X=\prod_{a\in A}\{1,\dots,d_a\}$.
Let $\lambda$ be the spectrum of $\rho^{A}$, let $\lambda\circ\sigma$ denote the
probability vector on $X$ obtained from a bijection
$\sigma:\{1,\dots,D\}\to X$, and let $m_a$ be the site-$a$ marginal. Then for
every concave, permutation-invariant $F$,
\begin{equation}
  \min_{U\in\mathrm{U}(D)}\sum_{a\in A}F\bigl((U\rho^{A}U^{\dagger})^{a}\bigr)
  \;=\;\min_{\sigma}\sum_{a\in A}F\bigl(m_a(\lambda\circ\sigma)\bigr),
  \label{eq:assign}
\end{equation}
and a minimising unitary may always be chosen to map an eigenbasis of $\rho^{A}$
onto a product basis.
\end{theorem}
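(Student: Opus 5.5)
The plan is to prove the two inequalities in Equation~\ref{eq:assign} separately, treating $F$ as a spectral function. Permutation invariance of $F$ on probability vectors is read as $F(\omega)=F(\mathrm{spec}\,\omega)$ for density matrices $\omega$. Hence $F$ is invariant under local unitaries, and on diagonal matrices it coincides with a concave symmetric function of the diagonal vector.

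The easy direction is ``$\le$''. Fix a bijection $\sigma$ and let $U_\sigma$ map the $k$-th eigenvector of $\rho^A$ to the product basis vector $\ket{\sigma(k)}$. Then $U_\sigma\rho^AU_\sigma^\dagger$ is diagonal in the product basis with diagonal $\lambda\circ\sigma$. Its site-$a$ marginal is the diagonal matrix with entries $m_a(\lambda\circ\sigma)$. The unitary minimum is therefore at most the assignment minimum. The same construction shows that once ``$\ge$'' is established, the minimiser can be taken of this eigenbasis-to-product-basis form, which is the final claim of the theorem.

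For ``$\ge$'', take an arbitrary $U$ and set $\rho'=U\rho^AU^\dagger$. For each $a\in A$, choose a local unitary $V_a$ diagonalising $(\rho')^a$, and set $V=\bigotimes_aV_a$ and $\rho''=V\rho'V^\dagger$. Local unitaries commute with partial traces up to conjugation, so $(\rho'')^a=V_a(\rho')^aV_a^\dagger$ is diagonal with the same spectrum, and the objective is unchanged. Let $p\in\Delta(X)$ be the diagonal of $\rho''$ in the product basis. Because each $(\rho'')^a$ is diagonal, its diagonal entries, and hence its eigenvalues, are exactly $m_a(p)$, since partial trace acts on the diagonal as marginalisation. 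The objective thus equals $\sum_aF(m_a(p))$.

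The key input is the easy half of Schur--Horn: the diagonal $p$ of $\rho''$ is majorised by its spectrum $\lambda$. By Birkhoff--von Neumann, $p=\sum_\sigma c_\sigma\,\lambda\circ\sigma$ is a convex combination of rearrangements of $\lambda$. Each $m_a$ is linear and $F$ is concave, so
\[
\sum_aF(m_a(p))\ \ge\ \sum_\sigma c_\sigma\sum_aF\bigl(m_a(\lambda\circ\sigma)\bigr)\ \ge\ \min_\sigma\sum_aF\bigl(m_a(\lambda\circ\sigma)\bigr).
\]
This is the required inequality. The minimum over $U$ is attained because $\mathrm{U}(D)$ is compact and $F$ is continuous; alternatively, attainment follows directly from the construction in the ``$\le$'' step.

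The main obstacle is conceptual rather than computational. One has to make precise in what sense $F$ acts both on density matrices and on probability vectors, namely unitarily invariant, with matrix concavity restricting to vector concavity on diagonal matrices. One also has to verify that the local diagonalisation step really turns every marginal into a linear image of a single vector $p$ simultaneously. This simultaneity is what allows one majorisation relation on the joint diagonal to control all marginals at once. I would check it carefully for $|A|>2$, where the marginals overlap only through $p$.
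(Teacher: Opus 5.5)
Your proposal is correct and matches the paper's proof step for step. The upper bound comes from the unitaries $U_\sigma$ that send an eigenbasis of $\rho^{A}$ to a product basis, and the lower bound from the diagonal $p$ in a product eigenbasis of the marginals, Schur--Horn majorisation $p\prec\lambda$, the Birkhoff--von~Neumann decomposition, and concavity of $F$ composed with the linear maps $m_a$; your local rotation $V=\bigotimes_a V_a$ is simply the paper's product basis $\ket{e_x}=\bigotimes_a\ket{e^a_{x_a}}$ written as a unitary, and your observation that attainment follows from the $U_\sigma$ construction (compactness alone would need continuity of $F$) is a correct small addition.
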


\noindent The proof is in Appendix~\ref{app:assignproof}. The score and the optimal gate therefore follow from one eigendecomposition and a discrete relabelling, and for a pair of qubits the assignment has a closed form.

\begin{proposition}[Two-qubit closed form]
\label{prop:pair}
Let $p_1\ge p_2\ge p_3\ge p_4$ be the eigenvalues of $\rho^{ij}$ and
$\mu_a=1-\lambda_{\max}(\rho^{a})$. Then
$R_{ij}=h_2(\mu_i)+h_2(\mu_j)-h_2(p_2+p_4)-h_2(p_3+p_4)$, attained by mapping the
eigenvectors of $\rho^{ij}$, ordered by decreasing eigenvalue, onto
$\ket{00},\ket{01},\ket{10},\ket{11}$.
\end{proposition}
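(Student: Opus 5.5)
By Theorem~\ref{thm:assign} with $F=S$ (concave and permutation-invariant), the minimum in Equation~\ref{eq:Rmarginal} equals a minimum over the $4!=24$ bijections $\sigma$ that place the spectrum $p_1\ge p_2\ge p_3\ge p_4$ on the grid $\{0,1\}^2$. The plan is therefore to evaluate the objective on each assignment and identify the best one. A single qubit's entropy is $h_2$ of its minority weight, and $h_2$ is increasing on $[0,1/2]$. Write $q_{ab}$ for the weight placed on $\ket{ab}$. The objective is then
\[
h_2\bigl(q_{10}+q_{11}\bigr)+h_2\bigl(q_{01}+q_{11}\bigr),
\]
with each argument read modulo the flip $x\mapsto 1-x$. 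Local relabellings (bit flips on either qubit, and the swap of qubits) leave the objective invariant. This is a group of order $8$, so the $24$ assignments fall into three classes, indexed by which eigenvalue sits diagonally opposite $p_1$.

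First I would put $p_1$ on $\ket{00}$ using the local flips. Let $p_k$ be the eigenvalue placed on $\ket{11}$, and let the remaining two go on $\ket{01}$ and $\ket{10}$. The two marginal minority weights are then $\{p_k+p_l,\;p_k+p_m\}$, where $\{l,m\}$ are the other two indices. The three classes give the unordered pairs
\[
\{p_2+p_4,\;p_3+p_4\},\qquad \{p_3+p_2,\;p_4+p_2\},\qquad \{p_4+p_2,\;p_4+p_3\}\ \text{(with $p_k=p_4$)}.
\]
Spelling out $k=2,3,4$ in full:
\begin{itemize}
\item $k=4$ gives $\{p_4+p_2,\;p_4+p_3\}$;
\item $k=3$ gives $\{p_3+p_2,\;p_3+p_4\}$;
\item $k=2$ gives $\{p_2+p_3,\;p_2+p_4\}$.
\end{itemize}
All of these weights are at most $1/2$. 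For example, $p_2+p_3\le p_1+p_4$ would need $p_1\ge p_2$ and $p_4\ge p_3$, which fails in general, so I must handle this carefully. The clean route is to note that any sum of two distinct eigenvalues other than $p_1+p_j$ satisfies $p_a+p_b\le 1-(p_a+p_b)$ whenever $p_1$ is not included. Indeed $p_1+p_c\ge p_a+p_b$ for $\{a,b,c\}=\{2,3,4\}$, since $p_1\ge p_a$ and $p_c\ge 0$; this needs only $p_1\ge\max(p_a,p_b)$ plus a comparison of the second terms. If that comparison fails, I fall back on the observation that $h_2(x)=h_2(1-x)$, so the ordering argument below can be run on $\min(x,1-x)$ directly.

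Comparing the classes uses only monotonicity of $h_2$ on $[0,1/2]$. The $k=4$ pair $(p_2+p_4,\;p_3+p_4)$ is termwise dominated by the $k=3$ pair $(p_2+p_3,\;p_3+p_4)$, using $p_4\le p_3$ in the first entry. The $k=3$ pair is in turn termwise dominated by the $k=2$ pair $(p_2+p_3,\;p_2+p_4)$, using $p_3\le p_2$ in the second entry. Hence $k=4$ is optimal, and the minimum is $h_2(p_2+p_4)+h_2(p_3+p_4)$. The corresponding unitary sends the eigenvectors in decreasing order to $\ket{00},\ket{01},\ket{10},\ket{11}$: $p_1\to\ket{00}$, $p_2\to\ket{01}$, $p_3\to\ket{10}$, $p_4\to\ket{11}$. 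This gives marginal minority weights $p_3+p_4$ on qubit $i$ and $p_2+p_4$ on qubit $j$. For the pre-gate term, $S(\rho^a)=h_2(\mu_a)$ by the definition of $\mu_a$, and substituting both pieces into Equation~\ref{eq:Rmarginal} gives the stated formula.

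The main obstacle is the termwise comparison whenever some candidate weight exceeds $1/2$, where $h_2$ is no longer monotone. I would handle it by checking that the $k=4$ weights are both at most $1/2$: we have $p_2+p_4\le p_1+p_3$ because $p_1\ge p_2$ and $p_3\ge p_4$, and likewise $p_3+p_4\le p_1+p_2$. For the competing classes I would use $h_2(x)\ge h_2(y)$ whenever $\min(x,1-x)\ge y$ with $y\le 1/2$, together with the lower bound $\min(x,1-x)\ge p_2+p_4$ (or $\ge p_3+p_4$) for the relevant entries. That bound follows from $1-(p_2+p_3)=p_1+p_4\ge p_3+p_4$ and the analogous inequalities.
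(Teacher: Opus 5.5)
Your argument is correct and is essentially the paper's proof. Your three symmetry classes, indexed by the eigenvalue diagonally opposite $p_1$, correspond to the paper's observation that every assignment realises two of the three pairings of $\{p_1,p_2,p_3,p_4\}$ as its row and column splits, and both arguments conclude that the optimum uses the two smallest pair sums $p_3+p_4\le p_2+p_4$. Delete the middle paragraph's claim that all weights are at most $1/2$, which fails for $p_2+p_3$ (e.g.\ for the spectrum $(0.4,0.3,0.3,0)$); your closing repair via $\min(x,1-x)$, using $p_1+p_4\ge p_2+p_4$ for the $k=3$ comparison and $p_1+p_4\ge p_3+p_4$ for $k=2$, is exactly the paper's device of recording the third pairing as $\min\{p_1+p_4,p_2+p_3\}$.
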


\begin{proof}
Apply Theorem~\ref{thm:assign} with $F=S$ and $D=4$. The value of an assignment depends
only on the two induced marginals, and the row split $\{00,01\}|\{10,11\}$ and
column split $\{00,10\}|\{01,11\}$ each partition $\{p_1,p_2,p_3,p_4\}$ into two
pairs, the two partitions always being distinct. A four-element set admits
exactly three pairings, with smaller-part sums $p_3+p_4$, $p_2+p_4$ and
$\min\{p_1+p_4,p_2+p_3\}$, each at most $1/2$. Every assignment realises an
ordered pair of two distinct entries of this list, and every such pair is
realised. Since the three entries are ordered as listed and $h_2$ is strictly
increasing on $[0,\tfrac12]$, the sum is minimised by the two smallest, which is
what the sorted assignment realises.
\end{proof}

\begin{corollary}[Saturation of the pair score]
\label{cor:sat}
Arrange the sorted spectrum as the table
$P=\bigl(\begin{smallmatrix}p_1&p_2\\p_3&p_4\end{smallmatrix}\bigr)$. Then
$I(i{:}j)-R_{ij}=I_{\mathrm{cl}}(P)$, the classical mutual information of that
table, and consequently $R_{ij}=I(i{:}j)$ if and only if $p_1p_4=p_2p_3$. Saturation therefore holds for every $\rho^{ij}$ of rank at most two, fails for every $\rho^{ij}$ of rank three, and holds only on a set of measure zero at full rank, which is the generic case for a pair inside a larger entangled state.
\end{corollary}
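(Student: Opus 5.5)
The plan is a direct computation that combines Proposition~\ref{prop:pair} with the unitary invariance of the joint entropy. By Equation~\ref{eqn:quantum_mutual_information},
\begin{equation}
I(i{:}j)=S(\rho^i)+S(\rho^j)-S(\rho^{ij}), \qquad S(\rho^{ij})=H(p_1,p_2,p_3,p_4),
\end{equation}
where $H$ is the Shannon entropy of the spectrum. Proposition~\ref{prop:pair} gives
\begin{equation}
R_{ij}=S(\rho^i)+S(\rho^j)-h_2(p_2+p_4)-h_2(p_3+p_4).
\end{equation}
Here I use that $h_2(\mu_a)=S(\rho^a)$ for a qubit marginal. Subtracting, the original marginal entropies cancel, leaving
\begin{equation}
I(i{:}j)-R_{ij}=h_2(p_3+p_4)+h_2(p_2+p_4)-H(p).
\end{equation}
I will then read this off the table $P$. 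Its row marginal is $(p_1+p_2,\,p_3+p_4)$ and its column marginal is $(p_1+p_3,\,p_2+p_4)$. The expression above is therefore exactly $H(\text{row})+H(\text{col})-H(P)=I_{\mathrm{cl}}(P)$. Equivalently, the sorted assignment of Proposition~\ref{prop:pair} produces a state diagonal in the product basis with joint distribution $P$, and the mutual information of that state is $I_{\mathrm{cl}}(P)$.

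For the saturation criterion, classical mutual information vanishes if and only if the joint distribution factorises. For a $2\times2$ table, factorisation is equivalent to vanishing determinant, $p_1p_4=p_2p_3$. To verify this, write $p_{11}=r_1c_1$ and so on, and conversely solve for the marginals when the determinant is zero. I will state this as the standard independence test.

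The rank cases then follow in turn.
\begin{itemize}
\item Rank at most two means $p_3=p_4=0$, so both sides of $p_1p_4=p_2p_3$ are zero and saturation holds.
\item Rank exactly three means $p_4=0$ while $p_2,p_3>0$, so $p_1p_4=0<p_2p_3$ and saturation fails.
\item At full rank, $p_1p_4-p_2p_3$ is a nonzero polynomial on the open sorted simplex. Its zero set is therefore a hypersurface of Lebesgue measure zero, which establishes genericity.
\end{itemize}

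None of these steps is a real obstacle; the only care needed is the bookkeeping of which marginal sums correspond to the rows and which to the columns of $P$.
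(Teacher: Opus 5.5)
Your proposal is correct and follows the paper's proof essentially step for step: you subtract Proposition~\ref{prop:pair} from the definition of $I(i{:}j)$, identify $h_2(p_3+p_4)+h_2(p_2+p_4)-S(\rho^{ij})$ as the row and column marginal entropies of $P$ minus its joint entropy, and then use that a $2\times2$ table has zero classical mutual information exactly when it has rank one, i.e.\ $p_1p_4=p_2p_3$, before reading off the rank cases. Your zero-set argument for the full-rank claim goes slightly beyond the paper, which only asserts it; you should also check that $p_1p_4-p_2p_3$ is not identically zero on the sorted simplex (for instance it is nonzero at $(0.4,0.3,0.2,0.1)$), since it does vanish at the uniform point.
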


\begin{proof}
By Proposition~\ref{prop:pair} and the definition of $I(i{:}j)$, $I(i{:}j)-R_{ij}=h_2(p_3{+}p_4)+h_2(p_2{+}p_4)-S(\rho^{ij})$, the entropies of the row and column marginals of $P$ minus that of $P$ itself, which is $I_{\mathrm{cl}}(P)$. A two-by-two table has zero classical mutual information exactly when it has rank one, that is when $p_1p_4=p_2p_3$, which holds automatically when $p_3=p_4=0$ and never when $p_4=0<p_3$.
\end{proof}

A pair can be frozen at $R_{ij}=0$ while carrying $I(i{:}j)>0$: the state $\mathrm{diag}(0.5,0.3,0.15,0.05)$ has $I=8.3\times10^{-3}$ bits and $R_{ij}=0$. At the $66$-gate halt of the $r=1$ runs of Figure~\ref{fig:fig3-connectivity-spectra}, counting pairs with $R_{ij}<10^{-9}$, $I(i{:}j)\ge10^{-9}$ and $\rho^{ij}$ not maximally mixed finds $36$ of the $66$ pairs of the $4\times3$ Heisenberg ladder frozen on every seed, including all $11$ nearest-neighbour pairs, against a median of $3.5$ on the $3\times4$ TFIM. Tracking every pair with $I(i{:}j)>10^{-3}$ bits at $R_{ij}<10^{-6}$ through greedy runs capped at $60$ layers, on a $12$-qubit heavy-hex cell, a $2\times5$ Heisenberg snake and a random MPS with $N=10$ and $\chi=4$, $98\%$ of the roughly $300$ stall events per target are followed by the same pair becoming actionable again, after a median of one layer and at most four, and whether this recovery holds at $N>12$ is untested.

The optimal gate is fixed only up to the choice of basis inside a degenerate eigenvalue block of $\rho^{A}$, which symmetric targets such as the rung singlets of a Heisenberg ladder realise exactly. We identify degenerate blocks at relative tolerance $10^{-9}$ and, for a block of size $k>1$, apply the block eigenprojector to the computational basis in fixed order, orthonormalise the images by Gram--Schmidt and keep the first $k$ independent ones, so the gate depends on the eigenspace and not on the returned eigenvectors.

\subsection{Routing cost and the source of the advantage}
\label{app:connectivity}

The SWAP count $s_{ij}$ in the cost of Appendix~\ref{app:formal} is taken on the current layout. SWAPs that bring qubits $i$ and $j$ together are not undone afterwards, so the positions of the logical qubits on the device change as the circuit grows, and a pair that is $\ell$ couplers apart at that point needs $s_{ij}=\ell-1$ SWAPs and costs $c_{ij}=3\ell$ CNOTs. Every $c_{ij}$ is therefore recomputed after each layer, and a pair that has just been routed together costs only three CNOTs to act on again. No SWAP is merged into a gate on the same two qubits, although the product would compile to three CNOTs, so the count is conservative. On a coupler every pair costs the same three CNOTs, and ranking by $R_{ij}$ or by $\tilde R_{ij}$ gives the same order. The two rankings differ only for routed pairs, where $\tilde R_{ij}$ is the natural choice, because one gate lowers $T$ by at most two bits whatever the distance between its qubits while its cost grows linearly with that distance. \label{app:tau}It is not uniformly better, however. At an equal CNOT budget, ranking by $R_{ij}$ reached the lower $\varepsilon_\star$ in seven of nine $(N,\chi)$ configurations of the heavy-hex patch of Section~\ref{subsec:noise}, and Figure~\ref{fig:decomposition} shows the same split.

\label{app:routing}Figure~\ref{fig:decomposition} tracks the infidelity of the prepared state as a function of two-qubit gates applied,  separating the parts of ASPIRE's advantage over the MPD on the least favourable device, a linear chain whose couplers are the bonds of the MPS, so that every gate between sites that are not neighbours in the chain must be routed. All curves are charged in CNOTs and stop at the $198$ CNOTs of the MPD's $66$ nearest-neighbour gates, and each adds one ingredient to the one before. The green curve applies ASPIRE's gate at exactly the positions and in the order of the MPD. It is worse than the MPD on all three targets, by factors of about six and ten on the two lattices, so the gate on its own is not the source of the advantage. The purple curve lets ASPIRE choose its own pairs among nearest neighbours and closes most of that gap, but still ends above the MPD. The solid orange curve lets it choose any pair and pays for the SWAPs, and the dashed curve does the same but ranks pairs by $\tilde R_{ij}$, with each $c_{ij}$ taken from the initial layout rather than the current one. The dotted curve runs the all-pairs selection but charges no SWAPs, as if every pair were a coupler, which no chain can realise, so the gap between it and the orange curve measures what routing costs. Measured as the MPD's infidelity divided by ASPIRE's at the MPD's budget, the solid orange curve gains $1.22$ and $1.17$ on the Heisenberg and TFIM targets, the dashed curve $1.14$ and $1.99$, and the dotted curve $2.7$ and $8.8$. On the random MPS every curve except the dotted one ends above the MPD, the orange one at $0.75$, and even free connectivity gains only $1.15$. Long-range selection therefore pays for its routing on the two lattices, whose geometry the snake ordering hides from nearest-neighbour gates, and has little to find on a state without such structure. Each ASPIRE curve is drawn at the relative threshold $\rho\in\{0,0.25,0.5,0.75,0.9\}$ and the layer size, one gate or a full matching, that is best for its own panel, so the figure shows the best of these settings rather than a single run. With $\rho=0.75$ and full layers throughout, the solid orange curve gains $1.07$ and $1.09$ on the lattices, a tie with the MPD, and without a threshold it falls to $0.66$ and $0.33$, so the threshold is what lets long-range gates pay for their SWAPs.

\begin{figure*}
    \centering
    \includegraphics[width=0.48\linewidth]{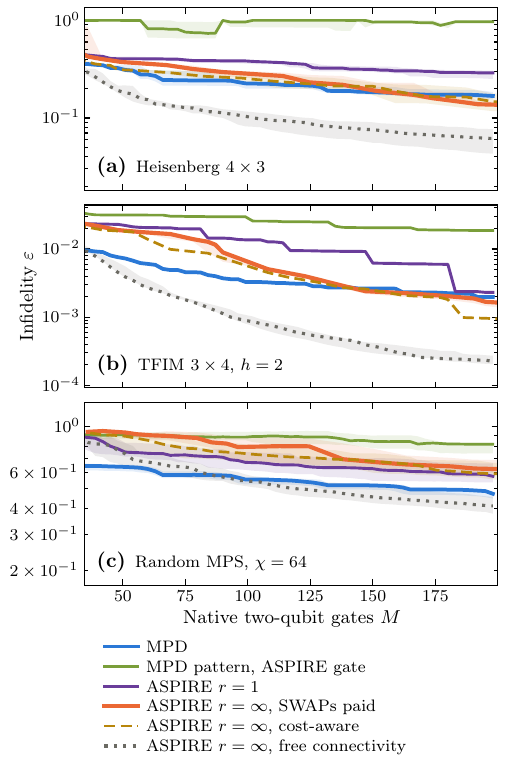}
    \caption{\textbf{Where the advantage comes from.}
    Infidelity against native two-qubit gates for \textbf{(a)} the $4\times3$
    Heisenberg ladder, \textbf{(b)} the $3\times4$ TFIM at $h=2$, and \textbf{(c)} a
    random MPS at $\chi=64$, all at $N=12$. Each line is the median over $16$ seeds,
    and its shaded band covers the middle half of the seeds. The MPD (blue) is
    compared with ASPIRE in five settings. ASPIRE's gate placed at the MPD's positions and in its order (green)
    changes only the gate. ASPIRE restricted to nearest-neighbour pairs, $r=1$
    (purple), adds the selection rule at the MPD's reach. ASPIRE over all pairs on a
    nearest-neighbour device, paying for the SWAPs that bring each pair together
    (solid orange), adds the reach, and the same with pairs ranked by the correlation
    removed per native gate (dashed gold) makes that ranking cost-aware. ASPIRE over
    all pairs on a device that couples every pair (dotted grey) needs no SWAPs and
    shows what the reach is worth without routing. Each ASPIRE line uses the relative
    threshold and layer size that are best for its panel (Appendix~\ref{app:routing})}
    \label{fig:decomposition}
\end{figure*}

\subsection{Layer generation}
\label{app:layer}

Step (III) is an exact maximum-weight matching, computed by Edmonds' algorithm~\cite{edmonds1965paths} over all admitted pairs or by branch and bound over the $32$ highest-weight ones (Table~\ref{tab:selection}). In a layer size sweep at the MPD's six-layer budget, with one seed per (target, $\chi$) cell, a narrower layer beat the full matching in all $3$ Heisenberg cells and both random-MPS cells, one of the latter by only $4\%$, and the full matching won both TFIM cells. At a four-layer budget over four seeds, layer size two lost to the full matching in $9$ of $13$ cells, including $3$ of the $4$ Heisenberg cells, so the preferred layer size depends on the budget. The layer-size cap $m$ also has a hardware reading: shuttling-based trapped-ion processors realise all-to-all connectivity by moving ions rather than by fixed couplers, but apply two-qubit gates only in a few dedicated zones, with the Quantinuum H2 executing up to four per round \cite{moses2023race} and Helios using four of its eight operation zones for them~\cite{ransford202698}. This constraint and others like it can then be straightforwardly accounted for by ASPIRE.

\subsection{Gate support size}
\label{app:arity}

The restriction to two-site gates is a cost argument. The removable correlation of a support obeys $R_A\le|A|\log d$, which is linear in $|A|$, whereas a generic $k$-qubit unitary costs $c_k=\tfrac{23}{48}4^{k}-\tfrac32 2^{k}+\tfrac43$ CNOTs by quantum Shannon decomposition~\cite{shende2005synthesis}, so the ceiling per CNOT at QSD cost is
\begin{equation}
  \eta_k=\frac{k\log d}{c_k}=\mathcal{O}(k\,4^{-k}),
  \label{eq:arity}
\end{equation}
namely $0.667$, $0.150$, $0.040$ and $0.011$ bits per CNOT at $k=2,\dots,5$.

\subsection{Numerical details of the targets}
\label{app:targets}

All states and circuits are simulated as MPS with ITensors.jl and ITensorMPS.jl in Julia \cite{fishman2022itensor}, and the ground-state targets are found by DMRG with $8$ to $16$ sweeps, a bond dimension ramped up to the cap each figure states and a truncation cutoff of $10^{-12}$. Infidelities are measured against the resulting MPS itself (Equation~\ref{eq:infidelity}), so the DMRG truncation error enters no reported number, and circuit simulation truncates at cutoff $10^{-10}$ to $10^{-12}$ with a bond-dimension ceiling between $64$ and $4096$ depending on the experiment. That separate evaluation runs at the same bond-dimension ceiling as the construction. The ceiling is never below $64=2^{6}$, so it cannot bind for $N\le13$, which covers every $N=12$ study, and those infidelities are exact up to the cutoff. For the $10\times10$ lattice of Figure~\ref{fig:fig1-efficacy} the ceiling is $\chi=512$ against a target of bond dimension at most $200$, and it binds for both circuits. The infidelities are nonetheless insensitive to it when the discarded weight is counted against the fidelity, as it is for ASPIRE: re-evaluating at $\chi=1024$ changes the MPD's final infidelity by less than $10^{-5}$ in relative terms, and ASPIRE's at $156$ gates by $4\times10^{-6}$, although the discarded weight itself falls by a factor of three to five. The MPD's infidelity in Figure~\ref{fig:fig1-efficacy} is instead evaluated with the residual renormalised after each truncation, which lowers it by $0.8\%$ relative to that convention, so the MPD's quoted value is if anything slightly optimistic.

\section{Proofs}
\label{app:proofs}

\subsection{Exact descent and additivity}
\label{app:propproof}

\begin{proof}[Proof of Proposition~\ref{prop:descent}]
Let $U_{\mathcal L}=\bigotimes_gU_{A_g}$ with the $A_g$ pairwise
disjoint. For $a\notin\bigcup_g A_g$ every site of every $A_g$ is traced out in
$\rho^{a}$, so $\rho^{a}$ is invariant and contributes nothing to $\Delta T$. For
$a\in A_g$ the marginal $(\rho')^{a}$ is determined by $U_{A_g}$ and
$\rho^{A_g}$ alone, because the remaining factors of $U_{\mathcal L}$ act on
sites disjoint from $A_g$ and are traced out. Hence
$\Delta T=\sum_g\Delta_g$ with
$\Delta_g=\sum_{a\in A_g}[S((\rho')^{a})-S(\rho^{a})]$, exactly and with no cross
term at any order. The inequality $\Delta_g\ge-R_{A_g}$ is the definition of
$R_{A_g}$ as the maximal achievable reduction on $A_g$, evaluated on the
pre-layer state, with equality if and only if $U_{A_g}$ attains the
minimum. By Theorem~\ref{thm:assign} that minimum is attained by an explicit gate, so
the equality case is realisable and the bound is not conditional on a numerical
optimisation.
\end{proof}

\subsection{The infidelity bound}
\label{app:thm2}

\begin{proof}[Proof of Theorem~\ref{thm:cert}]
For each site choose an orthonormal eigenbasis $\{\ket{k}_a\}_{k=0}^{d-1}$ of $\rho^{a}_{\phi}$ with $\ket{0}_a$ a top eigenvector, and expand $\ket\phi=\sum_x c_x\ket{x}$ in the product basis $\ket{x}=\bigotimes_a\ket{x_a}_a$. Since $\bigotimes_j v_j\ket{0}^{\otimes N}$ ranges over all product states, $f_\star=\max_{\ket v\ \text{product}}|\braket{v}{\phi}|\ge|c_0|$, so $1-f_\star^{2}\le\delta$ with $\delta=1-|c_0|^{2}$. Let $|x|$ be the number of sites with $x_a\neq0$. Since $1-\lambda_a=\sum_{x:\,x_a\neq0}|c_x|^{2}$,
\begin{equation*}
  \Lambda=\sum_x|x|\,|c_x|^{2},\qquad \delta=\sum_{x\neq0}|c_x|^{2}.
\end{equation*}
Write $k e_a$ for the string equal to $k\neq0$ at site $a$ and to $0$ elsewhere, and let $X_a=\sum_{k\neq0}|c_{k e_a}|^{2}$ and $W_1=\sum_a X_a$ be the weight on strings with $|x|=1$. Every other nonzero string has $|x|\ge2$, so $\Lambda\ge W_1+2(\delta-W_1)=2\delta-W_1$. The weight $W_1$ is controlled by the diagonality of the marginals. Write $(j,r)$ for the string with $j$ at site $a$ and configuration $r$ on the other sites. For $k\neq0$, $0=\bra{0}\rho^{a}_{\phi}\ket{k}=\sum_r c_{(0,r)}c^{*}_{(k,r)}$, whose $r=0$ term is $c_0c^{*}_{k e_a}$, and the Cauchy--Schwarz inequality applied to the remaining terms gives
\begin{equation*}
  |c_0|^{2}\,|c_{k e_a}|^{2}\ \le\ \sum_{r\neq0}|c_{(0,r)}|^{2}\ \sum_{r\neq0}|c_{(k,r)}|^{2}\ \le\ \delta\sum_{r\neq0}|c_{(k,r)}|^{2} .
\end{equation*}
Summed over $k\neq0$ this reads $(1-\delta)X_a\le\delta(1-\lambda_a-X_a)$, that is $X_a\le\delta(1-\lambda_a)$, so $W_1\le\delta\Lambda$. Hence $\Lambda\ge\delta(2-\Lambda)$, so for $\Lambda<2$, $1-f_\star^{2}\le\delta\le\tilde\Lambda$, which is the first inequality of Equation~\ref{eq:cert} whenever $\tilde\Lambda\le1$, that is whenever $\Lambda\le1$. The second is $\sqrt{1-\tilde\Lambda}\ge1-\tilde\Lambda$ and the third is $2-\Lambda\ge1$. For the last, if $\lambda_a\ge\tfrac12$, binning all non-maximal eigenvalues of $\rho^{a}_{\phi}$ together can only decrease the Shannon entropy of the spectrum, so $S(\rho^{a}_\phi)\ge h_2(1-\lambda_a)\ge2(1-\lambda_a)$, the second step because $h_2$ is concave with $h_2(0)=0$ and $h_2(\tfrac12)=1$. If $\lambda_a<\tfrac12$, which requires $d>2$, then $S(\rho^{a}_\phi)\ge-\log_2\lambda_a\ge2(1-\lambda_a)$, the second step because $-\log_2\lambda-2(1-\lambda)$ vanishes at $\lambda=\tfrac12$ and decreases on $(0,\tfrac12]$. Summing over $a$ gives $2\Lambda\le T$.
\end{proof}

The first inequality of Equation~\ref{eq:cert} is tight to leading order. The state $\sqrt{1-\delta}\ket{00}+\sqrt{\delta}\ket{11}$ with $\delta<\tfrac12$ has $\Lambda=2\delta$ and $f_\star^{2}=1-\delta$, so $\varepsilon_\star=\Lambda/4+O(\Lambda^{2})$ and the coefficient cannot be lowered, and the $N$-qubit $W$ state with $N\ge3$ attains $\delta=\tilde\Lambda=1$. The gain over the union bound $\delta\le\Lambda$ comes from the diagonality step: strings that leave the top eigenvector at a single site carry weight at most $\delta\Lambda$, so every other residual component is counted at least twice in $\Lambda$. The last inequality is loose by a factor that grows as the circuit improves, since $h_2(\epsilon)\ge\epsilon\log_2(1/\epsilon)$ gives $T\ge\Lambda\log_2\bigl(1/\max_a(1-\lambda_a)\bigr)$, and $T$ alone bounds $\varepsilon_\star$ only once it has fallen below two bits.

\subsection{The assignment reduction}
\label{app:assignproof}

\begin{proof}[Proof of Theorem~\ref{thm:assign}]
For the upper bound, fix a product basis $\{\ket{e_x}\}_{x\in X}$ and an
eigenbasis $\{\ket{v_k}\}$ of $\rho^{A}$. For a bijection $\sigma$ let
$U_\sigma\ket{v_k}=\ket{e_{\sigma(k)}}$. Then
$U_\sigma\rho^{A}U_\sigma^{\dagger}=\sum_k\lambda_k\dyad{e_{\sigma(k)}}$
is diagonal in the product basis with distribution $\lambda\circ\sigma$, the
vector placing $\lambda_k$ at $\sigma(k)$, and its
site-$a$ marginal has eigenvalue vector $m_a(\lambda\circ\sigma)$. The left side
of Equation~\ref{eq:assign} is therefore at most the right.

For the lower bound, let $U$ be arbitrary and $\rho'=U\rho^{A}
U^{\dagger}$. Choose an eigenbasis $\{\ket{e^a_j}\}$ of each marginal
$(\rho')^{a}$, form $\ket{e_x}=\bigotimes_a\ket{e^a_{x_a}}$, and set
$p(x)=\bra{e_x}\rho'\ket{e_x}$. By the Schur--Horn theorem the diagonal of a
Hermitian operator in any orthonormal basis is majorised by its spectrum, so
$p\prec\lambda$; and $m_a(p)$ is the eigenvalue vector of $(\rho')^{a}$ by
construction, so $F((\rho')^{a})=F(m_a(p))$. By the Hardy--Littlewood--P\'olya
theorem together with the Birkhoff--von~Neumann decomposition, $p\prec\lambda$
implies $p=\sum_\pi t_\pi(\lambda\circ\pi)$ with $t_\pi\ge0$ summing to one.
Since $m_a$ is linear and $F$ is concave,
$F(m_a(p))\ge\sum_\pi t_\pi F(m_a(\lambda\circ\pi))$; summing over $a$ and
exchanging the two finite sums gives
$\sum_{a}F((\rho')^{a})\ge\min_\pi\sum_{a}F(m_a(\lambda\circ\pi))$. As $U$
was arbitrary, the two bounds coincide, and $U_\sigma$ at a minimising $\sigma$
is a minimising unitary.
\end{proof}

\subsection{Correlation removal at bounded range}
\label{app:ghzbound}

\begin{proposition}[Bounded-range correlation removal]
\label{prop:ghzbound}
Let $U$ be a circuit of depth $D$ whose layers consist of pairwise disjoint two-qubit gates of range at most $r$, together with arbitrary single-qubit gates, and let $\ket\phi=U\ket{\mathrm{GHZ}_N}$.

(i) If $N>2rD+1$, then $T(\phi)\ge\lceil N/(2rD+1)\rceil$.

(ii) If $\ket\phi$ is a product state, then $2^{D}\ge N$ and $2rD\ge N-1$, so exact preparation of $\mathrm{GHZ}_N$ needs $D\ge\max\{\lceil\log_2N\rceil,\lceil(N-1)/(2r)\rceil\}$. The first bound is attained at $r=\infty$ by the CNOT tree of Appendix~\ref{app:ghz}.
\end{proposition}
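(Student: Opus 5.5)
The plan is a light-cone (causal-cone) argument on the depth-$D$ circuit, combined with the coherent splitting $\ket{\mathrm{GHZ}_N}=(\ket{0^N}+\ket{1^N})/\sqrt2$. For an output set $Q$, let $L(Q)$ be its backward light cone under $U$: the input sites reached by tracing gates backwards through the $D$ layers. Each layer of disjoint two-qubit gates at most doubles a cone, and gates of range $\le r$ extend an interval by at most $r$ on each side. Hence $|L(a)|\le\min\{2^D,2rD+1\}$ for a single site, with $L(a)\subseteq[a-rD,a+rD]$, and a block of consecutive output sites of length $\ell$ has a cone inside an interval of length $\ell+2rD$. Single-qubit gates do not enlarge cones. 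The workhorse is the Heisenberg statement that for any operator $O$ on $Q$, the operator $U^\dagger O U$ is supported on $L(Q)$.

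\emph{Cross terms vanish.} Write $\ket{\Phi_b}=U\ket{b^N}$. Then
\[
\rho^Q_\phi=\tfrac12\bigl(\tau_0^Q+\tau_1^Q\bigr)+\tfrac12\bigl(\mathrm{Tr}_{\bar Q}\dyad{\Phi_0}{\Phi_1}+\mathrm{h.c.}\bigr),
\]
where $\tau_b^Q$ is the $Q$-marginal of $\ket{\Phi_b}$. For any $O$ on $Q$ we have $\bra{\Phi_1}O\ket{\Phi_0}=\bra{1^N}U^\dagger OU\ket{0^N}$. If $L(Q)\neq\{1,\dots,N\}$, this vanishes, because $U^\dagger OU$ acts trivially on some site where $\ket{0^N}$ and $\ket{1^N}$ are orthogonal. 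So whenever the cone of $Q$ misses an input site, $\rho^Q_\phi$ is the equal mixture $\tfrac12(\tau_0^Q+\tau_1^Q)$.

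\emph{Step 1: the $\tau_b$ are orthogonal.} Each $\tau_b^Q$ is obtained from the product input $\ket{b^N}$, so it is $\mathrm{Tr}_{L\setminus Q}\bigl[W\dyad{b^{L}}W^\dagger\bigr]$ for the unitary $W$ built from the gates in the cone. To turn the mixture into entropy I need $\tau_0^Q\perp\tau_1^Q$, which would give $S(\rho^Q)\ge1$. The route I would take is a reverse-cone argument. The complement $\bar Q$ also has a cone $L(\bar Q)$. I would try to choose $Q$ so that $L(Q)\cup L(\bar Q)$ covers the sites while some input site $i\notin L(Q)$. The idea is that flipping $b$ on $i$ alone is invisible to $Q$, while $Q$ still carries the value $b$ on its own cone, so the bit $b$ is present in both parts. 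Purity of $\ket\phi$ then gives $S(\rho^Q)=S(\rho^{\bar Q})$. The aim is to make $\rho^Q$ and $\rho^{\bar Q}$ perfectly classically correlated in $b$, by the same vanishing-cross-term computation applied to $Q\cup\bar Q$-resolved operators diagonal in $b$, and so to get one bit of entropy.

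\emph{Step 2: part (i).} Partition the output chain into $k=\lceil N/(2rD+1)\rceil$ consecutive blocks $Q_1,\dots,Q_k$ of length $\le 2rD+1$. When $N>2rD+1$ I need each block to have a cone missing at least one input site, which may require a modest adjustment of the block lengths. Then subadditivity gives
\[
T=\sum_a S(\rho^a)\ge\sum_j S\bigl(\rho^{Q_j}\bigr)\ge k .
\]

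\emph{Step 3: part (ii).} If $\ket\phi$ is a product state, every single-site marginal is pure. By the vanishing-cross-term step, a site $a$ whose cone misses an input site has $\rho^a=\tfrac12(\tau^a_0+\tau^a_1)$. This is pure only if $\tau_0^a=\tau_1^a$, so site $a$ carries no information about $b$. If this held for every $a$, the reverse circuit $U^\dagger$ acting on the product state $\ket\phi$ could not produce the perfect correlation $\langle Z_1Z_N\rangle=1$ together with the coherence $\langle X^{\otimes N}\rangle=1$ of $\mathrm{GHZ}_N$. Therefore some output site has $L(a)=\{1,\dots,N\}$, and the cone bounds give $2^D\ge N$ and $2rD+1\ge N$, hence the stated bound $D\ge\max\{\lceil\log_2N\rceil,\lceil(N-1)/(2r)\rceil\}$. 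Attainment at $r=\infty$ is by exhibiting the doubling CNOT tree of Appendix~\ref{app:ghz}, in which each layer copies the bit to as many fresh qubits as currently hold it.

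\emph{Main obstacle.} The hardest step is the entropy lower bound of one bit per block in Step 1, that is, proving $\tau^Q_0\perp\tau^Q_1$, or at least $S(\rho^Q)\ge1$, from the cone condition alone. If the orthogonality argument fails, I would fall back on monotonicity of relative entropy. Since $T=S\bigl(\dyad\phi\,\big\Vert\,\bigotimes_a\rho^a\bigr)$, I would coarse-grain to the blocks and lower-bound each block's contribution by the Holevo information about $b$ that the block shares with the rest of the chain.
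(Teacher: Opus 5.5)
\paragraph{The gap is in part (i).} It sits at the step you flag yourself. The condition you plan to enforce on each block, that $L(Q_j)$ misses some input site, only kills the cross terms. It does not make $\tau_0^{Q_j}\perp\tau_1^{Q_j}$. Your covering condition $L(Q)\cup L(\bar Q)=\{1,\dots,N\}$ holds for every $Q$, so it adds nothing.

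Here is a concrete failure. Take $N=7$, $r=D=1$, $U=\mathrm{CNOT}_{6\to7}$, and the blocks $\{1,2,3\},\{4,5,6\},\{7\}$. Every $L(Q_j)$ misses a site, yet $\ket{\phi}=\ket{\mathrm{GHZ}_6}\otimes\ket{0}$. So $\tau_0^{\{7\}}=\tau_1^{\{7\}}=\dyad{0}$ and $\sum_jS(\rho^{Q_j})=2<3=\lceil 7/3\rceil$.

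The missing idea is that each block must contain the forward cone $C_a=\operatorname{supp}(UZ_aU^\dagger)$ of some input site $a$. Then $UZ_aU^\dagger$ is an observable on the block with $UZ_aU^\dagger\ket{\Phi_b}=(-1)^b\ket{\Phi_b}$. Hence $\tau_0^{Q}$ and $\tau_1^{Q}$ live in opposite eigenspaces, and $S(\rho^{Q})=1+\tfrac12[S(\tau_0^{Q})+S(\tau_1^{Q})]\ge1$. Your Holevo fallback needs exactly this ingredient, since a block's Holevo information about $b$ reaches one bit only when $\tau_0^Q\perp\tau_1^Q$.

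\paragraph{How to close (i), and your part (ii).} With that ingredient your framework closes.
\begin{itemize}
\item Pick input sites $a_1<a_2<\cdots$ spaced $2rD+1$ apart; there are $\lceil N/(2rD+1)\rceil\ge2$ of them.
\item Set $Q_j=[a_j-rD,a_j+rD]\cap\{1,\dots,N\}\supseteq C_{a_j}$. These blocks are disjoint.
\item Every other $a_{j'}$ has its forward cone outside $Q_j$, so $a_{j'}\notin L(Q_j)$ and your cross-term lemma applies.
\item Subadditivity then gives the bound.
\end{itemize}
This is the paper's proof, phrased there via the stabiliser $\tilde Z_a\tilde Z_b$ of $\ket{\phi}$ with $\tilde Z_b$ supported off $C_a$.

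Your part (ii), by contrast, is sound and takes a different route from the paper. The paper derives $2rD\ge N-1$ from (i) and $2^D\ge N$ from a separate support argument on $VZ_aV^\dagger$. You instead get both bounds at once from a single output site with $L(a)=\{1,\dots,N\}$, independently of Step~1. To close it crisply: if every $L(a)$ were proper, each pure $\rho^a=\tfrac12(\tau_0^a+\tau_1^a)$ would force $\tau_0^a=\tau_1^a$ pure. That makes $U\ket{0^N}$ and $U\ket{1^N}$ the same product state up to phase, contradicting their orthogonality.
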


\begin{proof}
A layer at most doubles the support of an operator and widens it by $r$ on each side, so the cone $C_a=\operatorname{supp}(\tilde Z_a)$ of $\tilde Z_a=UZ_aU^{\dagger}$ has $|C_a|\le2^{D}$ and $C_a\subseteq[a-rD,a+rD]$.

(i) The $\lceil N/(2rD+1)\rceil\ge2$ sites spaced $2rD+1$ apart have pairwise disjoint cones. For two of them, $a$ and $b$, $\tilde Z_a\tilde Z_b$ stabilises $\ket\phi$ and $\tilde Z_b$ acts outside $C_a$, so $\rho^{C_a}$ has no coherence between the $\pm1$ eigenspaces of $\tilde Z_a$. Both have weight $\tfrac12$ because $\langle\tilde Z_a\rangle_\phi=0$, so $S(\rho^{C_a})\ge1$ bit, and subadditivity over the disjoint cones gives the bound. This is the standard light-cone argument of \cite{bravyi2006lieb}.

(ii) A product state has $T=0$, so $2rD\ge N-1$ follows from (i). For the other bound, write the product state as $\bigotimes_jv_j\ket{0^{\otimes N}}$. Then $V=U^{\dagger}(\bigotimes_jv_j)$ prepares $\ket{\mathrm{GHZ}_N}$ from $\ket{0^{\otimes N}}$ with the same depth, and the operators $O_a=VZ_aV^{\dagger}$, whose supports $S_a$ have $|S_a|\le2^{D}$, have $\ket{\mathrm{GHZ}_N}$ as their only joint $+1$ eigenvector. If every $S_a$ were a proper subset, projecting $O_a\ket{\mathrm{GHZ}_N}=\ket{\mathrm{GHZ}_N}$ onto $\bra{0_{\bar S_a}}$ and $\bra{1_{\bar S_a}}$ would show that $O_a$ fixes $\ket{0_{S_a}}$ and $\ket{1_{S_a}}$, and hence also $(\ket{0^{\otimes N}}-\ket{1^{\otimes N}})/\sqrt2$. So some $S_a$ is the whole chain, and $2^{D}\ge N$.
\end{proof}

\noindent At $N=64$ and $r=1$, (i) holds for every $D\le31$, which covers the shaded region of Figure~\ref{fig:toy-expressivity}\textbf{(a)}.

\section{Gate counts on quantics function states}
\label{app:quantics}

Figure~\ref{fig:figA-quantics} complements the matched-budget comparison of Section~\ref{subsec:quantics} by reporting the gates each protocol needs to reach a stated infidelity, on a correlated bivariate Gaussian at $N=2B=10$ to $20$, run to $48$ two-qubit gates per site. At $\varepsilon=10^{-3}$ the saving grows from $1.8\times$ to $3.6\times$ in the serial ordering and from $2.5\times$ to $4.7\times$ in the interleaved one, with fitted slopes of $+0.025$ to $+0.030$ decades per site, and at $\varepsilon=10^{-2}$ the interleaved saving grows from $3.9\times$ to $7.1\times$. The advantage therefore grows with the number of precision bits, as it does at the matched budget of Section~\ref{subsec:quantics}. The separation is also larger in the interleaved ordering, because interleaving places each variable's own significance chain at range $D$, off the pairs the MPD can reach, which is consistent with the ordering sensitivity of Section~\ref{subsec:quantics}. Both arms are unoptimised, as in Figure~\ref{fig:fig5-quantics}.

\begin{figure*}
    \centering
    \includegraphics[width=\linewidth]{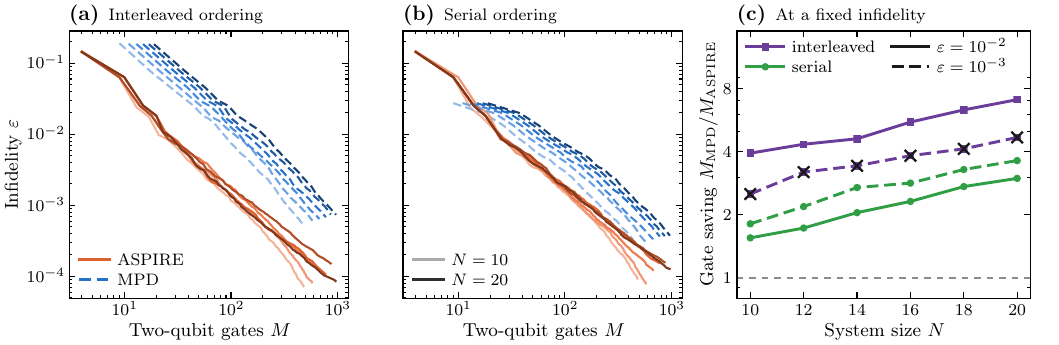}
    \caption{\textbf{Quantics function states: the gate saving at a stated infidelity.} A correlated bivariate Gaussian on $N=2B$ qubits, $8$ seeds, both arms unoptimised and run to $48$ two-qubit gates per site. \textbf{(a), (b)} Infidelity against two-qubit gate count, one per gate for both arms at all-to-all connectivity, for the interleaved and serial bit orderings, ASPIRE solid (orange) and MPD dashed (blue), with darker shades for larger $N=10,\dots,20$. \textbf{(c)} The gates the MPD needs to reach a stated infidelity divided by the gates ASPIRE needs, against $N$, at two infidelity levels and in both orderings; crosses mark cells in which not every seed reached the target. The saving grows monotonically with $N$ in every series.}
    \label{fig:figA-quantics}
\end{figure*}

\section{Stabiliser targets}
\label{app:stab}

Stabiliser states can be analysed exactly, because the marginal of a stabiliser state on a support $A$ is $\rho^{A}=2^{-|A|}\sum_{g}g$, summed over the stabiliser elements $g$ supported in $A$. How much a pair scores is therefore fixed by which stabiliser elements it carries, and the two extremes of the pair score are both stabiliser states. On $\mathrm{GHZ}_N$ every pair carries $Z_iZ_j$ and scores the maximal one bit, which gives the exact logarithmic-depth preparation of Section~\ref{sec:protocol}. On a state whose non-identity stabiliser elements all have weight at least $w\ge3$, no pair carries any, so every pair scores zero and Algorithm~\ref{alg:aspire} halts without emitting a gate, the stall of Section~\ref{sec:limitations}. Appendix~\ref{app:ghz} treats the two extremes, and Appendix~\ref{app:composite} the composite moves that remove the stall.

\subsection{The two extremes}
\label{app:ghz}

For $N\ge3$ every two-site marginal of $\mathrm{GHZ}_N$ is $\rho_{ij}=(\dyad{00}+\dyad{11})/2$, which a CNOT maps to the product of a pure and a maximally mixed qubit, so $R_{ij}=1$ bit at every distance, the most that $S(\rho_{ij})=1$ allows by subadditivity. A CNOT tree prepares $\mathrm{GHZ}_N$ in $\lceil\log_2N\rceil$ layers and $N-1$ gates, which is depth-optimal by Proposition~\ref{prop:ghzbound}(ii). At radius $r$ the tree can run until the active sites are $r$ apart, giving depth $\mathcal{O}(N/r+\log_2r)$ against the lower bound $\lceil(N-1)/(2r)\rceil$ of the same proposition. Greedy matching does not find this schedule at $r=1$: at $N=16$ it halts after one layer with $8$ of $16$ bits left, which is why the finite-radius curves of Figure~\ref{fig:toy-expressivity}\textbf{(a)} follow the schedule itself (Table~\ref{tab:selection}).

At the other extreme, a support $A$ with $|A|<w$ carries only the identity, so $\rho^{A}=\mathbf{1}/2^{|A|}$, every unitary on $A$ leaves it unchanged, and $R_A=0$. For $w\ge3$ this holds for every pair at every distance. The one-dimensional cluster state is the simplest example. On a ring of $N\ge5$ sites its stabilisers $K_i=Z_{i-1}X_iZ_{i+1}$ and all their products have weight at least three, so every pair scores zero. On an open chain the boundary stabilisers $X_1Z_2$ and $Z_{N-1}X_N$ have weight two, so pair-only selection can act only at the two ends, working inward in $\Theta(N)$ layers.

\subsection{Composite moves}
\label{app:composite}
Composite moves remove this stall with the native gate set: several two-qubit gates on overlapping pairs, scored and applied as one move. A generic gate on the whole support would also remove it, but at a cost growing as $4^{k}$ in the support size $k$ (Appendix~\ref{app:arity}).

\begin{definition}[Composite move]
\label{def:composite}
Let $A$ be a set of $|A|\ge3$ sites and $E=(e_1,\dots,e_\ell)$ an ordered list of admissible pairs $e_m\subset A$ such that the graph $(A,E)$ is connected. A composite move on $(A,E)$ is a unitary $V=U_{e_\ell}\cdots U_{e_1}$, each $U_{e_m}$ acting on $e_m$, of cost $c_{A,E}=\sum_m c_{e_m}$ and composite score
$$
R_A^E = T_A - \min_{U_{e_1},\dots,U_{e_\ell}} \sum_{a\in A} S\big((V\rho_AV^\dagger)_a\big),
\qquad
T_A=\sum_{a\in A}S(\rho_a).
$$
\end{definition}

\begin{proposition}[Composite score]
\label{prop:composite-bounds}
(i) Proposition~\ref{prop:descent} holds for any layer of pairwise disjoint pair and composite moves: the reduction of $T$ is the sum of the reductions of the moves, and a composite move on $(A,E)$ reduces $T$ by at most $R_A^E$.
(ii) $\max_m R_{e_m}\le R_A^E\le R_A\le T_A-S(\rho_A)$.
\end{proposition}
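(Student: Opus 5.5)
For (i) I will reuse the proof of Proposition~\ref{prop:descent} almost verbatim. A composite move on $(A,E)$ is a unitary $V$ supported entirely in $A$. So a layer of pairwise disjoint pair and composite moves is again a tensor product $\bigotimes_g V_{A_g}$ over disjoint supports. That argument only used disjointness of the supports, not their size. Each site outside $\bigcup_g A_g$ keeps its marginal. For $a\in A_g$ the marginal $(\rho')^a$ depends only on $V_{A_g}$ and $\rho^{A_g}$, because the other factors act on sites that are traced out. Hence $\Delta T=\sum_g\Delta_g$ exactly. For a composite move, $\Delta_g=\sum_{a\in A}S((V\rho_AV^\dagger)_a)-T_A$. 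This is bounded below by $-R_A^E$, since $R_A^E$ is by definition the maximal reduction over all choices of the constituent gates $U_{e_1},\dots,U_{e_\ell}$. The move actually applied is one such choice.

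For (ii) I will prove the three inequalities in order.

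\emph{Lower bound $R_{e_m}\le R_A^E$.} Fix $m$ and choose $U_{e_m}=U^*_{e_m}$, the optimal pair gate of Proposition~\ref{prop:pair}, and $U_{e_{m'}}=\mathbb{1}$ for every $m'\ne m$. This choice of $V$ is admissible in the minimisation defining $R_A^E$. It changes only the marginals of the two sites in $e_m$, and it lowers their entropy sum by exactly $R_{e_m}$, by Equation~\ref{eq:Rmarginal}. The other sites of $A$ are untouched. So $R_A^E\ge R_{e_m}$, and maximising over $m$ gives the bound. The identity is allowed in $\mathrm{U}(4)$; if gates are restricted to $\mathrm{SU}(4)$ the identity is still there.

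\emph{Middle bound $R_A^E\le R_A$.} Every product $U_{e_\ell}\cdots U_{e_1}$ is an element of $\mathrm{U}(d^{|A|})$. The minimum over this subset is therefore at least the minimum over the whole group, which is the one defining $R_A$.

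\emph{Upper bound $R_A\le T_A-S(\rho_A)$.} This is the step that needs an actual inequality rather than an inclusion. For any unitary $U$ on $A$, subadditivity of von Neumann entropy applied iteratively over the sites of $A$ gives
\[
\sum_{a\in A}S\big((U\rho_AU^\dagger)_a\big)\ \ge\ S(U\rho_AU^\dagger)=S(\rho_A).
\]
The equality holds because entropy is unitarily invariant. Taking the minimum over $U$ and subtracting from $T_A$ gives $R_A\le T_A-S(\rho_A)$. Theorem~\ref{thm:assign} is not needed here.

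I expect no serious obstacle. The only point to handle carefully is in (i): the composite move must be treated as a single block unitary on $A$, so that the cross-term-free additivity of Proposition~\ref{prop:descent} carries over even though its internal gates overlap. Connectivity of $(A,E)$ plays no role in either bound; it only ensures the move is a genuine composite rather than a disjoint union of smaller moves.
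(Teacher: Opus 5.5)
Your proposal is correct and follows the paper's proof essentially step for step. For (i) you treat the composite move as a single unitary on $A$, so Proposition~\ref{prop:descent} carries over. For (ii) you set every factor but $U_{e_m}$ to the identity, use that every product of the gates lies in $\mathrm{U}(2^{|A|})$, and apply subadditivity together with the unitary invariance of $S(\rho_A)$.
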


\begin{proof}
(i) Proposition~\ref{prop:descent} holds for arbitrary unitaries on pairwise disjoint supports, and a composite move is a unitary on $A$. (ii) Setting every factor but $U_{e_m}$ to the identity gives the first inequality, and the second holds because $V$ is a unitary on $A$. The third is subadditivity: $\sum_{a\in A}S\big((V\rho_AV^\dagger)_a\big)\ge S(V\rho_AV^\dagger)=S(\rho_A)$.
\end{proof}

\noindent The first inequality of (ii) can be strict with every pair at zero. On the bulk cluster triple $A=\{i-1,i,i+1\}$ the only non-identity stabiliser element is $K_i$, so every pair in $A$ scores zero, while $\mathrm{CZ}_{i-1,i}\mathrm{CZ}_{i,i+1}$ maps $K_i$ to $X_i$ and attains $R_A^E=1$ bit, the most that $R_A\le T_A-S(\rho_A)=1$ allows, with two CZ gates, costing $c_{A,E}=6$ CNOTs as generic blocks and two as compiled, against $20$ for a generic three-qubit unitary. The same repair works on every stabiliser state with $T>0$. Some stabiliser element is then supported only on maximally mixed sites, and a tree of CNOTs along its support, with single-qubit Clifford gates and SWAPs where its pairs are not couplers, maps it to a single-site $Z$ and purifies that site, so at most $N$ composite moves prepare the state exactly. Stabiliser states are in any case prepared efficiently by Clifford synthesis~\cite{aaronson2004improved}, so the point of the construction is that the stall is a limitation of pair moves and not of the selection rule.

On general targets composite moves are harder to use. The minimisation in Definition~\ref{def:composite} runs over a proper subset of $\mathrm{U}(2^{|A|})$ unless $E$ is long enough to generate it, so Theorem~\ref{thm:assign} does not apply and we know of no closed form. The connected supports of size $k$ through a site also grow exponentially in $k$ on any two-dimensional lattice. With $k-1$ generic blocks along a spanning tree and $R_A^E\le k$ bits, the ceiling per CNOT is $\eta_k^{\mathrm{comp}}=k/[3(k-1)]$, that is $0.500$, $0.444$ and $0.417$ bits at $k=3,4,5$. This is far above the $0.150$, $0.040$ and $0.011$ of the generic gates of Equation~\ref{eq:arity} but below the pair ceiling $\eta_2=2/3$, so on couplers a composite move on $k$ sites can beat a pair per CNOT only where the best pair scores below $k/(k-1)$ bits, as at a stall. Disjoint pair and composite moves still add by Proposition~\ref{prop:composite-bounds}(i), but choosing a maximum-weight disjoint family of supports of size up to $k\ge3$ is in general weighted $k$-set packing, which is NP-hard~\cite{johnson1979computers}. A greedy layer in descending score attains at least a fraction $1/k$ of the optimum, since each support of an optimal family meets a chosen support of no smaller score, the first one chosen that meets it, and each chosen support has at most $k$ sites and so meets at most $k$ of the disjoint optimal supports. If every candidate is an interval of the chain the selection is weighted interval scheduling, solved exactly in $\mathcal{O}(n\log n)$ time~\cite{kleinberg2006algorithm}.

Register-wide stalls are not confined to stabiliser states. For $x\in\mathbb{F}_2^{3}$ let $c(x)\in\mathbb{F}_2^{7}$ collect the parities $x\cdot u$ over the seven nonzero $u\in\mathbb{F}_2^{3}$, one site per $u$, so that the strings $c(x)$ form the $[7,3,4]$ simplex code, and let
\begin{equation}
  \ket{\psi}=\tfrac{1}{\sqrt2}\ket{0^{\otimes7}}+\tfrac{1}{\sqrt{14}}\textstyle\sum_{x\neq0}\ket{c(x)} .
  \label{eq:simplex-stall}
\end{equation}
Its amplitudes differ in modulus, so it is not a stabiliser state. Distinct codewords differ on four sites, so no two terms of Equation~\ref{eq:simplex-stall} agree outside a pair $\{i,j\}$, and every $\rho_{ij}$ is diagonal in the computational basis. For $u_i\neq u_j$ the map $x\mapsto(x\cdot u_i,x\cdot u_j)$ takes each value twice, so $\rho_{ij}=\mathrm{diag}(4/7,1/7,1/7,1/7)$ on all $21$ pairs. This is already the sorted arrangement of Proposition~\ref{prop:pair}, so $R_{ij}=0$, while $I(i{:}j)=2h_2(2/7)-S(\rho_{ij})\approx0.0617$ bits by Corollary~\ref{cor:sat}. The degenerate-block convention of Appendix~\ref{app:gate} then returns the identity, so admitting zero-score moves does not help either, and Algorithm~\ref{alg:aspire} halts at $T\approx6.04$ bits and $\varepsilon_\star\approx0.29$, where $\Lambda=2$ and Theorem~\ref{thm:cert} gives no bound. The stall is not generic: in a statevector check, none of $40$ Haar-random seven-qubit states and none of $60$ random perturbations of Equation~\ref{eq:simplex-stall} halted before exact preparation. Two escapes remove it. Each site whose $u$ has weight at least two holds the parity of the sites whose $u$ has weight one, and the nine CNOTs that undo these parities form four composite moves in the sense of Definition~\ref{def:composite}, after which two pair gates prepare the state exactly. Without using the code structure, a single MPD layer of six gates raises $f_\star$ from $0.707$ to $0.756$ and restores pair scores of up to $0.65$ bits, after which pair selection proceeds normally.

\end{document}